\tikzset{%
    add/.style args={#1 and #2}{
        to path={%
 ($(\tikztostart)!-#1!(\tikztotarget)$)--($(\tikztotarget)!-#2!(\tikztostart)$)%
  \tikztonodes},add/.default={.2 and .2}}
}  
\tikzstyle{point}=[draw,circle,fill=black,scale=.4] 
\tikzstyle{square}=[minimum size=.706cm,draw]
\tikzset{ 
  pics/H/.style 2 args= {
    code={
       \begin{pgfonlayer}{bg}    
      \draw[thick,blue] (#1:#2) -- ($(180 + #1:#2)$);
      \end{pgfonlayer}
    }
  }
}
\newcommand{\CC}{\mathbb{C}}
\newcommand{\CP}{\mathbb{P}}
\newcommand{\RR}{\mathbb{R}}
\newcommand{\mr}{\mathrm}
\newcommand{\ra}{\rightarrow}
\newcommand{\OO}{\mathcal{O}}
\newcommand{\dd}{\partial}
\newcommand{\Map}{\mathrm{Map}}
\newcommand{\QMap}{\mathrm{QMap}}
\newcommand{\ZZ}{\mathbb{Z}}
\newcommand{\n}{\mathsf{n}}
\newcommand{\mc}{\mathcal}
\newcommand{\ul}{\underline}
\newcommand{\pr}{\mathrm{pr}}
\newcommand{\PSL}{\mathrm{PSL}}
\newcommand{\FR}{\mathrm{PQL}}
\newcommand{\ev}{\mathrm{ev}}
\newcommand{\Conf}{\mathrm{Conf}}
\newcommand{\codim}{\operatorname{codim}}
\newcommand{\del}{\partial}
\newcommand{\delbar}{\bar\partial}
\newcommand{\qra}{\not\to}
\newcommand{\M}{\mathcal{M}}
\newcommand{\JX}{J^X} 
\newcommand{\JY}{J^Y}
\newcommand{\prX}{\Pi^X}
\newcommand{\prY}{\Pi^Y}
\newcommand{\barprX}{\bar{\Pi}^X}
\newcommand{\KM}{\mathrm{KM}}
\newcommand{\GLSM}{\mathrm{QM}}
\newcommand{\QM}{\mathrm{QM}}
\newcommand{\PQM}{\mathrm{PQM}}
\newcommand{\PQL}{\mathrm{PQL}}
\newcommand{\bt}{\bullet}
\newcommand{\QP}{\mathsf{CQP}}
\newcommand{\GGamma}{\pmb{\Gamma}}
\newcommand{\rk}{\mathrm{rk}}
\newcommand{\Var}{\mathrm{Var}}
\newcommand*{\defeq}{\mathrel{\rlap{%
                     \raisebox{0.3ex}{$\m@th\cdot$}}%
                     \raisebox{-0.3ex}{$\m@th\cdot$}}%
                     =}
\newcommand*{\qefed}{=\mathrel{\rlap{%
                     \raisebox{0.3ex}{$\m@th\cdot$}}%
                     \raisebox{-0.3ex}{$\m@th\cdot$}}%
                     }
\newcommand{\bl}{\color{blue}}
\newtheorem{enumprob}{Enumerative Problem}
\theoremstyle{remark}
\newtheorem{remark}{Remark}[section]
\theoremstyle{plain}
\newtheorem{lemma}[remark]{Lemma}
\newtheorem{proposition}[remark]{Proposition}
\newtheorem{corollary}[remark]{Corollary}
\theoremstyle{definition}
\newtheorem{definition}[remark]{Definition}
\newtheorem{example}[remark]{Example}
\newtheorem{conjecture}[remark]{Conjecture}
\newglossaryentry{space}
{
  name={},
  sort={},
  description={}
}
\newglossaryentry{X}
{
  name={$X$},
  sort={a},
  description={source manifold, usually $\CP^k$}
}
\newglossaryentry{Y}
{
  name={$Y$},
  sort={b},
  description={target manifold, usually $\CP^n$}
}
\newglossaryentry{QMap}
{
  name={$\QMap_d(X,Y)$},
  sort={c},
  description={space of degree $d$ quasimaps $X \not\to Y$}
}
\newglossaryentry{Map}
{
  name={$\Map_d(X,Y)$},
  sort={d},
  description={space of degree $d$ holomorphic maps $X \to Y$}
}
\newglossaryentry{PQL}
{
name={$\PQL(f)$},
sort={e},
description={proper quasimap locus of a quasimap $f$}
}
\newglossaryentry{D}{
name={$\mc{D}$},
sort={},
description={enumerative data for maps/quasimaps: the collection of source/target cycles $c_1^X,\ldots,c_l^X,c_1^Y,\ldots,c_l^Y$ and the topological type of the map}
}
\newglossaryentry{KM}{
name={$\KM(X,Y,\mc{D})$},
sort={f},
description={number of holomorphic maps $X\ra Y$ subject to conditions $\mc{D}$}
}
\newglossaryentry{QM}{
name={$\QM(X,Y,\mc{D})$},
sort={g},
description={number of quasimaps $X\qra Y$ subject to conditions $\mc{D}$}
}
\newglossaryentry{PQM}{
name={$\PQM(X,Y,\mc{D})$},
sort={h},
description={number of proper quasimaps $X\qra Y$ subject to conditions $\mc{D}$}
}
\newglossaryentry{Var}{
name={$\Var$},
sort={i},
description={``space of variables'' -- the product of the space of quasimaps and the source cycles, see (\ref{eq: def var})}
}
\newglossaryentry{h}
{
  name={$h = c_1(O(1)_{\CP^k})$},
  sort={k},
  description={generator of of $H^2(\CP^k)$}
}
\newglossaryentry{H}
{
  name={$H = c_1(O(1)_{\QMap_d(\CP^k,\CP^n)})$},
  sort={l},
  description={generator of $H^2(\QMap_d(\CP^k,\CP^n))$; note that $\QMap_d(\CP^k,\CP^n) = \CP^{n_{k,n,d}}$}
}
\newglossaryentry{zeta}{
name={$\zeta=c_1(O(1)_Z)$},
sort={m},
description={generator of $H^2(Z)$ for a component $Z\subset \QMap_d(\CP^k,\CP^n)$ of the zero locus of the section $\sigma$ of the equation bundle $E$, assuming $Z$ is a projective space}
}
\newglossaryentry{Euler}{
name={$\tilde{e}(E)$},
sort={n},
description={Euler class of a vector bundle $E$}
}
\newglossaryentry{cE}
{
  name={$c(E)$},
  sort={o},
  description={total Chern class (or a de Rham representative of it) of a vector bundle $E \to B$}
}
\begin{document}

\glsaddall

\title[Freckles and scars
]
{
On enumerative problems for maps and quasimaps: 
freckles and scars
}

\author[O.Chekeres]{Olga Chekeres}

\address{University of Connecticut, Department of Mathematics, Storrs, CT 06269, USA}
\email{olga.chekeres@uconn.edu}

\author[S.Kandel]{Santosh Kandel}
\address{California State University, Sacramento, Department of Mathematics and Statistics, CA 95819, USA}
\email{kandel@csus.edu}

\author[A.Losev]{Andrey Losev}

\address{Wu Wen-Tsun Key Lab of Mathematics, Chinese Academy of Sciences, 
USTC, No.96, JinZhai Road Baohe District, Hefei, Anhui, 230026, P.R.China}
\address{National Research University Higher School of Economics \\
Laboratory of Mirror Symmetry, NRU HSE, 6 Usacheva str., Moscow,  Russia, 119048
}

\email{
aslosev2@yandex.ru
}

\author[P.Mnev]{Pavel Mnev}

\address{University of Notre Dame, Notre Dame, IN 46556, USA}
\address{St. Petersburg Department of V. A. Steklov Institute of Mathematics of the Russian Academy of Sciences, 
27 Fontanka, St. Petersburg, Russia, 191023}
\email{pmnev@nd.edu}

\author[K.Wernli]{Konstantin Wernli}
\address{Centre for Quantum Mathematics, IMADA, University of Southern Denmark, Campusvej 55, 5230 Odense M, Denmark}
\email{kwernli@imada.sdu.dk}

\author[D.R.Youmans]{Donald R.\ Youmans}

\address{Institut für Mathematik, Ruprecht-Karls-Universit\"at Heidelberg, 69221 Heidelberg, Germany}

\email{dyoumans@mathi.uni-heidelberg.de}

\thanks{The work of A.\ S.\ Losev is partially supported by Laboratory of Mirror Symmetry NRU HSE, RF Government grant, ag. 
N\textsuperscript{\underline{o}}  14.641.31.0001.
The work of D.\ R.\ Youmans was supported by the Deutsche Forschungsgemeinschaft (DFG, German Research Foundation) under Germany’s Excellence Strategy EXC 2181/1 - 390900948 (the Heidelberg STRUCTURES Excellence Cluster). The work of K. Wernli was supported by the ERCSyG project, Recursive and Exact New Quantum Theory (ReNewQuantum) which received
funding from the European Research Council (ERC) under the European Union’s Horizon 2020 research and innovation programme
under grant agreement No. 810573}

\begin{abstract}
We address the question of counting maps between projective spaces such that images of cycles on the source intersect cycles on the target. In  this paper we do it by embedding maps into quasimaps that form a projective space of their own. When a quasimap is not a map, it contains freckles (studied earlier) and/or scars,  
appearing when the complex dimension of the source is greater than one. We consider a lot of examples showing that freckle/scar calculus (using excess intersection theory) works. We also propose the ``smooth conjecture'' that may lead to computation of the number of maps by an integral over the space of quasimaps.
\end{abstract}
\maketitle

\setcounter{tocdepth}{3}
\tableofcontents

\section{Introduction}


The goal of this paper is to attract attention to an enumerative problem that generalizes a well-known problem of counting rational curves in a toric manifold. The generalization is to higher-dimensional sources. We are doing the first steps in this project. We propose to study this problem starting with quasimaps, where the count is straightforward.
By subtracting the contribution of non-map (``proper quasimap'') solutions we obtain the number of maps. For one-dimensional source, this approach was discussed in \cite{LNS,LNS2}. 

There are two ways to extract extra contributions from the quasimap count: 
\begin{enumerate}[I.]
\item
Study the proper quasimap configurations themselves. We present here a lot of examples. 
The proper treatment of these configurations requires Fulton's excess intersection theory \cite{Fulton}.
The main formula is:
\begin{equation}
\QM=\KM+\PQM 
\end{equation}
where $\KM$ is the number of holomorphic maps (that we call Kontsevich-Manin number\footnote{
For source of complex dimension one, these numbers are known as Gromov-Witten invariants that were effectively computed by Kontsevich-Manin \cite{KM}.
}) we are interested in. $\QM$ is the (easily computable by a B\'ezout-like formula) total number of quasimaps and $\PQM$ is the count of proper quasimap configurations. 
Proper quasimaps include 
\begin{itemize}
    \item  ``freckle'' configurations (where the evaluation is not defined at a collection of isolated points -- ``freckles'' -- in the source) and
\item ``scar'' configurations (evaluation fails on a cycle of positive dimension on the source) appearing for  $\dim_\CC(\mr{source})>1$.
\end{itemize}
We define scars in Section \ref{sec:freckle_stratification} (see also Example \ref{ex: qmap stratification}) and study examples with them in Sections \ref{sss: unstable example with a scar} and \ref{ss: example: semi-moving points}.

PQM numbers range in complexity. It is easy to treat isolated freckle configurations. A bit more work is required to treat what we call \emph{quasi-stable} examples. However, we found non-quasi-stable examples where the full machinery of excess intersection theory (in particular, Segre classes) is needed. We plan to come to this issue in a subsequent publication.
    \item Study integrals of differential forms over the space of quasimaps that are smooth on the locus of actual maps. Here we experimentally observe two things: 
    \begin{enumerate}[(i)]
    \item 
    such integrals are convergent,
    \item they give correct answers in the simplest cases. 
    \end{enumerate}
    Therefore, we propose the Smooth Conjecture that states that having enough computational power, these numbers could be computed numerically.
\end{enumerate}

As a byproduct, we were looking for a higher analog of quantum multiplication -- a generating function for KM numbers with only 0-dimensional cycles on the source. Surprisingly enough, for source dimension greater than one, there is a quantum ring -- a Frobenius algebra with free commutative product but nontrivial counit -- which does not descend to a deformation of the cohomology ring of the target. 

This paper is intended as a self-contained mathematical text motivated by the problem of  gauged holomorphic models \cite{Gerasimov, CLASH} 
for complex  dimension of the source 1 and 2. The relation of numbers that we find here to physics will be explained elsewhere; here we focus on the mathematical side of the problem. 

\subsection*{Acknowledgments}
D.Y. would like to thank Felipe Espreafico for interesting discussions. P.M. and K.W. would like to thank Galileo Galilei Institute where part of the work was completed for hospitality.

\newpage
\pagestyle{fancyams}
\printunsrtglossary[type=main,title=Notations,nonumberlist,style=myNoHeaderStyle]
\newpage

\section{Projective space of quasimaps between projective spaces}
\subsection{Definition of a quasimap} 

Let $X$ be a compact complex manifold and let
$$Y=\CC^N\sslash G=(\CC^N\backslash\GGamma)/G$$ 
be the GIT quotient of $\CC^N$ by an action of a reductive  algebraic group $G$ by linear transformations.\footnote{\label{footnote: GIT}
 We are thinking here of the projective GIT quotient $\mr{Proj}\left(\Gamma(\CC^N,\oplus_{i\geq 0}\mc{L}^{\otimes i})^G\right)$, with $\mc{L}$ the trivial line bundle over $\CC^N$ equipped with ``linearization'' -- an extension of the $G$-action on the base. Recall that a point $x\in\CC^N$ is semistable if the closure of the $G$-orbit of  $(x,l)\in \mc{L}^\vee$ (with $\mc{L}^\vee$ the dual line bundle) is disjoint from the zero-section for any nonzero element $l\in \mc{L}^\vee_x$. A non-semistable point of $\CC^N$ is ``unstable.''
We refer the reader e.g. to \cite{Proudfoot} and \cite{Thomas} for details on GIT quotients.
} Here $\GGamma\subset \CC^N$ is the unstable locus of the $G$-action. 

\begin{definition}\label{def: quasimaps}
A \emph{quasimap} $f$ from $X$ to $Y$ (we will write $f\colon X\qra Y$) is a pair $(\mc{P},\ul{f})$, consisting of a holomorphic $G$-bundle $\mc{P}$ over $X$ and 
a holomorphic section $\ul{f}$ of the associated vector bundle $\mc{P}\times_G \CC^N $ over $X$,  considered modulo 
automorphisms\footnote{I.e., for $\gamma$ a holomorphic section of the automorphism bundle $\mr{Aut}(\mc{P})$, pairs $(\mc{P},\ul{f})$ and $(\mc{P},\gamma(\ul{f}))$ are understood as equivalent.} of $\mc{P}$ acting on the pair $(\mc{P},\ul{f})$.\footnote{Such pairs are closely related to Brill-Noether pairs, see e.g. \cite{KingNewstead}. The notion of a quasimap is due to V. Drinfeld \cite{Drinfeld}. Also: the definition of a quasimap we give here is the same as a map to the stack quotient $[\CC^N/G]$, see e.g. \cite[Section 2.1]{CFK}.}   Additionally, we assume that $\ul{f}$ is not globally a section of the subbundle $\mc{P}\times_G \GGamma\subset \mc{P}\times_G \CC^N$.
\end{definition}
If the section $\ul{f}$ satisfies 
\begin{equation}\label{f not in Gamma}
\ul{f}(x)\not\in \GGamma \quad \mbox{ for all}\;\; x\in X,
\end{equation} then it defines a holomorphic map $X\ra Y$ (by abuse of notations we will also denote this map $f$). 

If (\ref{f not in Gamma}) fails, we call $f$ a ``proper'' quasimap. Then we call the set
$$ \FR(f)\defeq \{x\in X \;|\; \ul{f}(x)\in \GGamma\}\quad \subset X$$
the 
``proper quasimap locus''
of $f$. If $\FR(f)$ is collection of isolated points, we call these points ``freckles.''\footnote{The term ``freckle'' in this context was introduced in \cite{LNS}.}

We will denote the space of quasimaps $\QMap(X,Y)$, while the space of holomophic maps $X\ra Y$ will be denoted by $\Map(X,Y)$. 
By the discussion above, maps are quasimaps with $\FR(f)=\varnothing$: $\Map(X,Y)\hookrightarrow \QMap(X,Y)$.

\begin{remark} We will mostly discuss the case $G=\CC^*$ or $(\CC^*)^l$ in this paper, with the quotient $Y$ a toric manifold. However, another very interesting example is $G=GL(n,\CC)$ (and $Y$ can be e.g. the Grassmannian), which has a connection to gauge theory with nonabelian gauge group and Nekrasov theory. We will discuss this connection in some detail in a separate paper. 
\end{remark}

\begin{remark}\label{rem: qmaps as tuples of sections}
    If $\CC^N=(\CC^n)^{\times p}$, with $G$-action being the diagonal extension of a $G$-action on $\CC^n$, then a quasimap $X\qra Y$ is the same as a choice of a $G$-bundle $\mc{P}$ over $X$, plus a $p$-tuple of sections of the bundle $\mc{P}\times_G \CC^n$ considered modulo automorphisms of $\mc{P}$ (acting diagonally on the $p$-tuple). E.g., if $G=GL(n,\CC)$, this data is equivalent to a choice of a rank $n$ vector bundle $V$ over $X$ and a $p$-tuple of its sections, considered modulo diagonal action of $\mr{Aut}(V)$.
\end{remark}

\begin{remark}
   Instead of using the language of GIT quotient for the target $Y$, one can alternatively use the language of symplectic (Marsden-Weinstein) reduction of $\CC^N$ by a compact subgroup $G_\mr{cpt}$ of $G$. 
\end{remark}

\subsection{Evaluation map}
A quasimap $f\colon X\qra Y$ 
can be evaluated at a point $x\in X\setminus \FR(f)$  
i.e., one has an evaluation map
\begin{equation}\label{ev qmap}
    \ev\colon (\QMap(X,Y)\times X) \backslash \{(f,x)\;|\; x\not\in \FR(f)\} \;\longrightarrow \; Y.
\end{equation}
In particular, evaluation of a quasimap at a point of $\FR(f)$ (e.g. a freckle) is not defined.
Restricted to holomorphic maps (quasimaps with $\FR(f)=\varnothing$), this is the usual evaluation map
\begin{equation} \label{ev map}
    \ev\colon \Map(X,Y)\times X \ra Y.
\end{equation}

\begin{remark} The evaluation maps (\ref{ev qmap}) and (\ref{ev map}) are invariant under the group $\mr{Aut}(X)$ of holomorphic automorphisms of $X$ acting by $g\cdot(f,x)={(f\circ g^{-1},g(x))}$.  In particular, for $X=\CP^k$, 
the group of automorphisms is the group of ``higher M\"obius transformations'', $\mr{Aut}(X)={\PSL(k+1,\CC)}$.
\end{remark}

Below we give a concrete example (Example \ref{example: evaluation at a freckle}) of how the evaluation at a freckle fails.

\subsection{The main example: quasimaps \texorpdfstring{$\CP^k\qra \CP^n$}{CPk -> CPn}. Explicit formula for \texorpdfstring{$\QMap(\CP^k,\CP^n)$}{QMap(CPk,CPn)}.}
The main example of Definition \ref{def: quasimaps} relevant for this paper will be the following:
\begin{itemize}
    \item $Y$ is the complex projective space $\CP^n$ realized as the GIT quotient $\CC^{n+1}\sslash \CC^*$, with the group $\CC^*$ acting on $\CC^{n+1}$ by $\lambda\cdot (y^0,\ldots,y^n)= (\lambda y^0,\ldots, \lambda y^n)$.  The extension of the $\CC^*$-action to the trivial line bundle (``linearization,'' see footonote \ref{footnote: GIT}) is:
    $\lambda\cdot (y^0,\ldots,y^n,\mu)= (\lambda y^0,\ldots, \lambda y^n,\lambda^{-1}\mu)$. 
    The unstable locus is $\GGamma=\{0\}\subset \CC^{n+1}$ -- the origin. We refer to \cite[Example 1.5]{Proudfoot} for details.
    \item $X=\CP^k$ is also a complex projective space (possibly of different dimension).
    \item $\mc{P}$ is  the line bundle $O(d)$ on $\CP^k$,\footnote{
     More precisely: $\mc{P}$ the principal $\CC^*$-bundle obtained by cutting out the zero-section from the line bundle $O(d)$.
    } for some $d\in \ZZ_{\geq 0}$ -- the ``degree.'' In this case we will be speaking of a quasimap of degree $d$ and denote the space of such quasimaps $\QMap_d(\CP^k,\CP^n)$.
\end{itemize}

Using homogeneous coordinates $(x^0:\cdots:x^k)$ on the source $\CP^k$, a general degree $d$ quasimap $\CP^k\qra \CP^n$ is a collection 
$$P^a(x^0,\ldots,x^k),\quad a=0,\ldots,n$$
of $n+1$ homogeneous polynomials of degree $d$ in $k+1$ variables, considered up to multiplying all polynomials simultaneously by $\lambda \in \CC^*$.

We will require that the polynomials $P^a$ are not all identically zero. (This corresponds to the assumption in Definition \ref{def: quasimaps} that the section $\ul{f}$ does not land in $\GGamma$ everywhere, i.e., is not the zero-section.)

As implied by the description above, degree $d$ quasimaps $\CP^k\qra \CP^n$ are parametrized by the collection of coefficients of the polynomials $P^a$ modulo multiplying all coefficients by $\lambda\in \CC^*$. Thus we obtained the following. 
\begin{proposition}
\begin{equation}\label{qmap(P^k,P^n) formula}
    \QMap_d(\CP^k,\CP^n)=\CP^{\n_{k,n,d}}
\end{equation}
-- the projective space of dimension
\begin{equation}\label{n_knd}
    \n_{k,n,d} = (n+1) \left(\begin{array}{c}
         d+k \\
          k
    \end{array} \right) -1.
\end{equation}
\end{proposition}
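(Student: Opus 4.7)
The plan is essentially to unwind the definition of quasimap in this concrete situation and count dimensions. I will proceed in four steps.

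First, I would fix the $G=\CC^*$-bundle. Since $\mr{Pic}(\CP^k)=\ZZ$, any holomorphic line bundle of degree $d$ on $\CP^k$ is isomorphic to $O(d)$, so I may take $\mc{P}=O(d)$ once and for all. The corresponding associated vector bundle $\mc{P}\times_{\CC^*}\CC^{n+1}$ is then just the direct sum $O(d)^{\oplus(n+1)}$, and a holomorphic section $\ul{f}$ is precisely an $(n+1)$-tuple $(P^0,\ldots,P^n)$ of global sections of $O(d)$.

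Second, I would identify global sections of $O(d)$ on $\CP^k$ with homogeneous polynomials of degree $d$ in the $k+1$ homogeneous coordinates $(x^0,\ldots,x^k)$. This is classical, and by a stars-and-bars count the dimension of this space is $\binom{d+k}{k}$. Hence the space of $(n+1)$-tuples of sections has complex dimension $(n+1)\binom{d+k}{k}$.

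Third, I would cut out the locus where $\ul{f}$ lands everywhere in the unstable set $\GGamma=\{0\}$. Since $\GGamma$ is the origin, this condition is that all polynomials $P^0,\ldots,P^n$ vanish identically, i.e. the tuple is the zero tuple in $\CC^{(n+1)\binom{d+k}{k}}$. Removing this single point from the affine space of tuples gives the punctured vector space $\CC^{(n+1)\binom{d+k}{k}}\setminus\{0\}$.

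Finally, I would quotient by the automorphisms of the bundle $\mc{P}=O(d)$. These automorphisms are exactly constant rescalings by $\lambda\in\CC^*$, acting diagonally on the tuple $(P^0,\ldots,P^n)$ by overall scaling. Taking the quotient of the punctured vector space by this scaling $\CC^*$-action yields $\CP^{\n_{k,n,d}}$ with $\n_{k,n,d}=(n+1)\binom{d+k}{k}-1$, giving the claimed formula. There is no serious obstacle here; the only point worth verifying carefully is that $\mr{Aut}(O(d))=\CC^*$ (so that no further quotient is needed beyond overall rescaling) and the identification of sections of $O(d)$ with homogeneous degree $d$ polynomials, both of which are standard facts about line bundles on projective space.
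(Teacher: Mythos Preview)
Your proposal is correct and follows essentially the same approach as the paper: the paper identifies a degree $d$ quasimap with an $(n+1)$-tuple of homogeneous degree $d$ polynomials in $k+1$ variables (not all identically zero) modulo simultaneous scaling by $\CC^*$, and reads off the dimension as $(n+1)\binom{d+k}{k}-1$. You are slightly more careful in justifying why the bundle is $O(d)$ and why the automorphism group is exactly $\CC^*$, but the argument is the same.
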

Here $-1$ corresponds to the quotient by $\CC^*$ and the binomial coefficient $\left(\begin{array}{c}  d+k \\ k\end{array} \right)$ is the number of coefficients in a single homogeneous polynomial $P^a$.

\begin{remark}
    In the spirit of Remark \ref{rem: qmaps as tuples of sections}, one can identify degree $d$ quasimaps $\CP^k\qra \CP^n$ with $(n+1)$-tuples of sections of $O(d)$ (not all identically zero) modulo $\CC^*$ acting diagonally: 
    $$\QMap_d(\CP^k,\CP^n)= \Gamma(\CP^k,O(d)\otimes \CC^{n+1})\sslash \CC^*. $$
\end{remark}

\begin{remark}\label{rem: qmaps from a product of CPs} One has the following generalization of the result (\ref{qmap(P^k,P^n) formula}).
    Quasimaps 
    $\CP^{k_1}\times \CP^{k_2}\qra \CP^n$ are characterized by a bi-degree $(d_1,d_2)$ (i.e., the corresponding line bundle $\mc{P}$ over $\CP^{k_1}\times \CP^{k_2}$ is $O(d_1)\boxtimes O(d_2)$). One can interpret a quasimap $\CP^{k_1}\times \CP^{k_2}\qra \CP^n$ as a degree $d_1$ quasimap from $\CP^{k_1}$ to $\QMap_{d_2}(\CP^{k_2},\CP^n)=\CP^{\n_{k_2,n,d_2}}$. Thus, one has
    \begin{equation}
    \begin{split}
    \QMap_{d_1,d_2}(\CP^{k_1}\times \CP^{k_2},\CP^n)&=
    \Gamma(\CP^{k_1}\times \CP^{k_2},(O(d_1)\boxtimes O(d_2))\otimes \CC^{n+1})\sslash \CC^*\\
    &=\Gamma(\CP^{k_1},O(d_1)\otimes\Gamma(\CP^{k_2},O(d_2)\otimes \CC^{n+1}))\sslash \CC^*\\
    &=\QMap_{d_1}(\CP^{k_1},\underbrace{\QMap_{d_2}(\CP^{k_2},\CP^n)}_{\CP^{\n_{k_2,n,d_2}}})\\
    &= \CP^{\n_{k_1,\n_{k_2,n,d_2},d_1}}\\
    &=\CP^{\n_{(k_1,k_2),n,(d_1,d_2)}}
    \end{split}
    \end{equation}
    -- the projective space of dimension
    $$ \n_{(k_1,k_2),n,(d_1,d_2)}=(n+1) \left(\begin{array}{c}
         d_1+k_1 \\
          k_1
    \end{array} \right)  \left(\begin{array}{c}
         d_2+k_2 \\
          k_2
    \end{array} \right) -1.
    $$
    
    Likewise, the space of quasimaps from any product of projective spaces to $\CP^n$ of given multi-degree is itself a projective space:
    \begin{equation}
         \QMap_{d_1,\ldots,d_m}(\CP^{k_1}\times\cdots \times \CP^{k_m},\CP^n) = \CP^{\n_{(k_1,\ldots,k_m),n,(d_1,\ldots,d_m)}},
    \end{equation}
    where
    \begin{equation}\label{n for X=product of CPs}
    \n_{(k_1,\ldots,k_m),n,(d_1,\ldots,d_m)}=
    (n+1) \prod_{i=1}^m \left(\begin{array}{c}
         d_i+k_i \\
          k_i
    \end{array} \right) -1.
    \end{equation}
\end{remark}

\subsection{
Stratification of 
\texorpdfstring{$\QMap$}{QMap} (by the type of \texorpdfstring{$\FR$}{FR})
}\label{sec:freckle_stratification}
The space of quasimaps has a natural stratification by the type of the proper quasimap locus  
(the class $\alpha$ of $\FR(f)$ in  homology of $X$\footnote{Or, more appropriately, the class of the cycle $\PQL(f)\subset \CP^k$ 
in the Chow ring, $\alpha\in A_*(X)$.}): 
\begin{equation}\label{QMap stratification}
\QMap_d(\CP^k,\CP^n)=\bigsqcup_\alpha \QMap^\alpha_d(\CP^k,\CP^n).
\end{equation}
In particular, holomorphic maps $\CP^k\ra \CP^n$ are embedded into $\QMap$ as the stratum with, $\alpha=\varnothing$. The union of all the other strata of $\QMap$ corresponds to ``proper'' quasimaps; we will denote it $\QMap^\pr$. We will denote $\QMap^m$ the stratum corresponding to quasimaps with $m=1,2,\ldots$ freckles (counted with multiplicities). Thus, stratification (\ref{QMap stratification}) has the form
\begin{multline} \label{QMap stratification 2}
    \QMap_d(\CP^k,\CP^n) = \\ =\underbrace{\QMap_d^\varnothing(\CP^k,\CP^n)}_{\Map_d(\CP^k,\CP^n)}\sqcup \underbrace{\bigsqcup_{m\geq 1} \QMap_d^m(\CP^k,\CP^n) \sqcup \bigsqcup_{\alpha,\dim\alpha\geq 1} \QMap^\alpha_d(\CP^k,\CP^n)}_{\QMap^\pr_d(\CP^k,\CP^n)}.
\end{multline}
The last term here corresponds to quasimaps $f$ for which $\FR(f)$ is not just a collection of isolated points in $X$, but a cycle of positive dimension. We will call such positive-dimensional $\FR$ loci ``scars.''  

\subsection{Examples}

\begin{example}
    \item Quasimaps of degree $d=0$ are constant maps:
    $$\QMap_0(\CP^k,\CP^n)=\Map_0(\CP^k,\CP^n)=\CP^n.$$
\end{example}
\begin{example} Quasimaps $\CP^1\qra \CP^1$ of degree $d$ are given by a pair of homogeneous polynomials
     \begin{equation}\label{qmaps P1 to P1 eq1}
     \begin{split}
    y^0 &=P^0(x^0,x^1)=\sum_{i=0}^d A^{0i}(x^0)^i (x^1)^{d-i}, \\
    y^1 &=P^1(x^0,x^1)=\sum_{i=0}^d A^{1i}(x^0)^i (x^1)^{d-i} ,
    \end{split}
    \end{equation}
    up to multiplying both of them by $\lambda\in \CC^*$. In terms of nonhomogeneous coordinates $z=x^1/x^0$ and $y=y^1/y^0$ on the source and target, one has\footnote{For simplicity here we assume $A^{0,0},A^{1,0}\neq 0$.}
    \begin{equation}\label{qmaps P1 to P1 eq2}
    y=\frac{P^1(1,z)}{P^0(1,z)}=C\,\frac{\prod_{i=1}^d(z-z_i^0)}{\prod_{j=1}^d(z-z_j^\infty)}.
    \end{equation}
    The constant $C$ and positions of zeros/poles $z_i^0,z_i^\infty$ are the parameters of a quasimap.
    This is an actual map of degree $d$ if $z^0_i\neq z^\infty_j$ for all $i,j$.   
    
    If $z^0_i=z^\infty_j=w$ (i.e. polynomials (\ref{qmaps P1 to P1 eq1}) have a common linear factor), then (\ref{qmaps P1 to P1 eq1}), (\ref{qmaps P1 to P1 eq2}) is a proper quasimap with a freckle at $w$. Evaluation of the quasimap at points $z\neq w$ then corresponds to a holomorphic map one degree lower.
    
    If $m$ pairs of $z^0_i$'s and $z^\infty_j$'s coincide, we have an $m$-freckle quasimap; on the complement of freckles its evaluation corresponds to a holomorphic map of degree $d-m$.
\end{example}

\begin{example}
\label{example: evaluation at a freckle}
    As an illustration of how the evaluation map (\ref{ev qmap}) can fail to exist at a freckle, 
    consider the following sub-example of (\ref{qmaps P1 to P1 eq1}). Consider a 1-parametric family of degree 1 quasimaps $f_a\colon\CP^1\qra \CP^1$ given by 
    $$y^0=x^1, y^1= a x^0,$$ 
    with $a$ a parameter. In nonhomogeneous coordinates, the family is 
    $$y=\frac{a}{z}.$$ 
    The family consists of maps for $a\neq 0$ and a proper quasimap for $a=0$, with freckle at $z=0$. Let $\omega=\frac{i}{2\pi}\frac{dy d\bar{y}}{(1+|y|^2)^2}$ be the Fubini-Study form on the target. Then the limit of the pullback of $\omega$ by the evaluation map, while simultaneously taking  $a$ to zero and  the evaluation point to the freckle
    \begin{equation}
        \lim_{a,z\ra 0} (\mr{ev}|_{f_a})^*(\omega)
        = \lim_{a,z\ra 0}\frac{i}{2\pi} \frac{(zda -adz)(\bar{z} d\bar{a}-\bar{a}d\bar{z})}{(|z|^2+|a|^2)^2}
    \end{equation}
    fails to exist. 
\end{example}

Let us introduce the notation
\begin{equation}\label{Conf}
\Conf_m(X)\defeq X^{\times m}/\mr{Sym}_m
\end{equation}
for the configuration space of $m$ unordered points on $X$ (where the points are allowed to collide).

\begin{example}
A quasimap $\CP^1\qra \CP^n$ of degree $d$ is determined by the (nonzero) matrix of coefficients $(A^{ai})_{0\leq a\leq n, 0\leq i \leq d}$ of polynomials
\begin{equation}
    y^a=P^a(x^0,x^1)=\sum_{i=0}^d A^{ai} (x^0)^i (x^1)^{d-i},\quad a=0,\ldots,n,
\end{equation}
considered modulo scaling $A^{ai}\ra \lambda A^{ai}$ for $\lambda\in \CC^*$.

A quasimap has a freckle if all $P^a$'s have a linear polynomial as a common divisor (the root of this polynomial is the freckle position). Likewise, if all $P^a$'s have a common divisor $Q$ of degree $m$, the quasimap has $m$ freckles located at the roots of $Q$.
This discussion implies that the stratification (\ref{QMap stratification 2})
has the form
\begin{equation}
    \QMap_d(\CP^1,\CP^n)=\bigsqcup_{m=0}^d \QMap_d^m (\CP^1,\CP^n).
\end{equation}
The closure of the $m$-freckle stratum can be identified with
\begin{equation}\label{m-freckle stratum P^1 to P^n}
    \overline{\QMap_d^m(\CP^1,\CP^n)} = \QMap_{d-m}(\CP^1,\CP^n)\times 
    \CP^m.
\end{equation}
Here the first factor corresponds to the quasimap defined by the polynomials $P^a/Q$ and the 
second factor parametrizes the possible polynomials $Q$ modulo $\CC^*$.\footnote{One can also write the last factor in (\ref{m-freckle stratum P^1 to P^n}) as $
\Conf_m(\CP^1)
$, parametrizing the polynomial $Q$ by its $m$ roots (as an unordered set) -- the positions of freckles. The denominator $\mr{Sym}_m$ in (\ref{Conf}) is the Galois group.}
\end{example}

\begin{proposition}\mbox{}
\begin{enumerate}[(i)]
\item \label{Prop qmap strata (i)} 
If $n\geq k$, then a quasimap $\CP^k\qra \CP^n$ in general position is a map and, for $d\geq 1$, the  $1$-freckle stratum $\QMap^1_d(\CP^k,\CP^n)$ has complex codimension $n+1-k$ in $\QMap_d(\CP^k,\CP^n)$.

\noindent Moreover, for $m\geq 1$ and $d$ large enough, the $m$-freckle stratum $\QMap^m_d(\CP^k,\CP^n)$ has codimension $m(n+1-k)$ in $\QMap_d(\CP^k,\CP^n)$.
\item \label{Prop qmap strata (ii)} If $n<k$, then a quasimap $\CP^k\qra \CP^n$ of degree $d\geq 1$ is always proper. For a quasimap $f$ in general position, 
$\FR(f)$
is a cycle of complex dimension $k-n-1$ in $\CP^k$.
\end{enumerate}
\end{proposition}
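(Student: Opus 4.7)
The plan is to use the explicit description of a degree $d$ quasimap $f\colon \CP^k \qra \CP^n$ as an $(n+1)$-tuple $(P^0,\ldots,P^n)$ of homogeneous polynomials of degree $d$ in $(x^0,\ldots,x^k)$, not all identically zero, taken modulo overall rescaling by $\CC^*$. Under this identification, the proper quasimap locus $\FR(f)$ is precisely the scheme-theoretic common vanishing locus $\{P^0=\cdots=P^n=0\}\subset\CP^k$, i.e., the intersection of $n+1$ hypersurfaces of degree $d$. Both parts of the proposition then reduce to standard dimension counts for such intersections as $f$ varies.

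For part (ii), the hypothesis $n<k$ gives $n+1\leq k$, and any $\leq k$ hypersurfaces in $\CP^k$ meet in a nonempty scheme of dimension at least $k-(n+1)=k-n-1\geq 0$; for generic $f$ the intersection is a complete intersection of exactly this dimension by a standard Bertini-type argument. This simultaneously establishes that every quasimap of degree $d\geq 1$ is proper and that the generic dimension of $\FR(f)$ is $k-n-1$.

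For part (i), I would introduce the incidence variety
$$I_m=\bigl\{(f,\{x_1,\ldots,x_m\})\in\QMap_d(\CP^k,\CP^n)\times\Conf_m(\CP^k)\;:\;x_i\in\FR(f)\text{ for all }i\bigr\}$$
together with its two projections. The fiber of $I_m\to\Conf_m(\CP^k)$ over a configuration of $m$ distinct points is the linear family of quasimaps annihilated by the $m(n+1)$ evaluation functionals $P^a\mapsto P^a(x_i)$. Assuming these functionals are linearly independent, the fiber has projective dimension $\n_{k,n,d}-m(n+1)$, so $\dim I_m=mk+\n_{k,n,d}-m(n+1)$. Since a quasimap with exactly $m$ isolated freckles determines its freckle set uniquely, the other projection $I_m\to\QMap_d(\CP^k,\CP^n)$ is generically one-to-one onto $\overline{\QMap_d^m(\CP^k,\CP^n)}$, yielding codimension $m(n+1-k)$ in $\QMap_d(\CP^k,\CP^n)$. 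For $m=1$ the linear independence is automatic, proving the codimension-$(n+1-k)$ claim; whenever $n\geq k$ this codimension is positive, so a generic quasimap has no freckles.

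The hard part will be verifying the independence of the $m(n+1)$ evaluation conditions when $m\geq 2$, which is precisely the content of ``$d$ large enough.'' This amounts to the statement that $m$ general points in $\CP^k$ impose independent conditions on $H^0(\CP^k,O(d))$, which holds once $\binom{d+k}{k}\geq m$ for suitably generic configurations, provable by induction on $m$ using the ideal-sheaf exact sequence for a hyperplane separating one point from the others. One also needs to check that the boundary strata of $\Conf_m(\CP^k)$, where freckles collide, do not drop the codimension of the image below $m(n+1-k)$; this follows from a local scheme-theoretic analysis of $I_m$ and is the technical heart of the argument.
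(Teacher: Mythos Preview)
Your proposal is correct and takes essentially the same approach as the paper: both identify the freckle-at-$x$ condition as $n+1$ linear equations on the polynomial coefficients, compute the fiber dimension over a fixed freckle configuration, and then let the freckle position vary over $\CP^k$ (resp.\ $\Conf_m(\CP^k)$) to obtain the codimension $m(n+1-k)$. Your incidence-variety formulation is in fact more careful than the paper's own proof, which handles $m=1$ explicitly and then dispatches $m\geq 2$ with ``The $m$-freckle case is similar,'' leaving implicit the independence-of-conditions and boundary issues you flag; the remark following the proof in the paper, describing $\overline{\QMap_d^m}$ as birationally a $\CP^{\n_{k,n,d}-m(n+1)}$-bundle over $\Conf_m(\CP^k)$, is exactly your $I_m$.
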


\begin{proof}
(\ref{Prop qmap strata (i)}): The condition that 
a quasimap 
has a freckle at 
a given point $x\in \CP^k$ 
is a system of $n+1$ homogeneous linear equations $P^a(x)=0,\; a=0,\ldots,n$ on the coefficients of the polynomials $P^a$. Its solution locus is an intersection of $n+1$ transversal hyperplanes in $\QMap$, i.e., a projective space $\CP_x\subset \QMap$ of codimension $n+1$. 
Taking a union over possible positions of the freckle, we obtain a cycle $\alpha=\cup_{c\in \CP^k}\CP_x$ of codimension $n+1-k$ in $\QMap$. 
For $n\geq k$ this codimension is positive, so a generic quasimap is a map. 
A subtlety is that a quasimap might have more than one freckle, so the stratum of quasimaps with exactly one freckle is 
$\alpha$ with some higher codimension strata removed.

The $m$-freckle case is similar.

(\ref{Prop qmap strata (ii)}): Given a quasimap $\CP^k\qra \CP^n$, its  $\FR$ locus is the intersection of the zero-loci of polynomials $P^a$, i.e., an intersection of $n+1$ hypersurfaces in $\CP^k$ of degree $d$. The intersection is, generically (when the hypersurfaces intersect transversally), a cycle of codimension $n+1$.\footnote{Another reason why one cannot have a map $f\colon\CP^k\ra\CP^n$ of degree $d>0$ if $n<k$: consider the generator of the second cohomology of the target $[\omega_Y]\in H^2(\CP^n)$ and $[\omega_X]$  the generator of $H^2(\CP^k)$.
One has the relation $[\omega_Y]^{n+1}=0$. Since $f$ has degree $d$, one also has $f^*[\omega_Y]=d[\omega_X]$, which implies $d^{n+1} [\omega_X]^{n+1}=0$, which is false in the cohomology of the source. 

A related remark: a quasimap $f$ is a section of the bundle $\underbrace{O(d)\oplus\cdots \oplus O(d)}_{n+1}$ over $\CP^k$. Its Euler class is $(d [\omega_X])^{n+1}\in H^{2(n+1)}(\CP^k)$ and it is the Poincar\'e dual of the homology class of 
$\FR(f)$ for a generic quasimap $f$.
}
\end{proof}

Note that the proof above also shows that the closure 
of the $m$-freckle stratum is (for $d$ large enough) 
birationally equivalent\footnote{ I.e., isomorphic outside higher-codimension strata.} to
a bundle over $\Conf_m(\CP^k)$ with fiber $\CP^{\n_{k,n,d}-m(n+1)}$.\footnote{
For $d$ too small, the $m$-freckle stratum might vanish. E.g., for $m\geq 2$, $m$-freckle strata in $\QMap(\CP^2,\CP^2)$ vanish for $d=1$; for $m\geq 5$, $m$-freckle strata vanish for $d=1,2$, cf. Example \ref{ex: qmap stratification}.
}

\begin{example} For $k\geq 1$, all quasimaps from $\CP^k$ to $\CP^0$ of degree $d\geq 1$ are proper. Note that $\CP^0=\CC\sslash \CC^*$ is a point with a particular presentation as a GIT quotient. A map to a point is necessarily constant, while the space of quasimaps (\ref{qmap(P^k,P^n) formula}) is nontrivial.
\end{example}

\begin{example}\label{ex: qmap stratification}
    A degree $d$ quasimap $\CP^2\qra \CP^n$ has a one-dimensional proper quasimap 
    locus (``scar'') if all polynomials $P^a$ are divisible by a nonzero polynomial $Q$ of some degree $1\leq \Delta\leq d$. Then the scar is the zero-locus of $Q$ -- a degree $\Delta$ curve in $\CP^2$. It is easy to see that the closure of the corresponding stratum in $\QMap$ is
    \begin{equation}
        \overline{\QMap_d^{\Delta\mr{-scar}}(\CP^2,\CP^n)} = \CP^{\n_{2,n,d-\Delta}}\times \CP^{\n_{2,0,\Delta}}.
    \end{equation}
    Here the first factor parametrizes the polynomials $P^a/Q$ modulo $\CC^*$ and the second factor parametrizes the polynomial $Q$ itself modulo $\CC^*$.

\begin{itemize}
    \item 
   For instance, degree $1$ quasimaps $\CP^2\qra \CP^2$ form $\CP^8$, with  the following strata:


\begin{center}
    \begin{tabular}{c|c|c}
          stratum $\sigma$ & $\mr{codim}(\sigma)$ & shape of $\overline{\sigma}$ \\ \hline
        maps & 0 & $\CP^{8}$ \\
         1-freckle & 1 & $\CP^5$-bundle over $\CP^2$ \\
         $\Delta=1$ scar & 4 & $\CP^2\times \CP^2$
    \end{tabular}
\end{center}

    Here we note that if for a degree 1 quasimap, 
    $\FR$
    contains two points, then it also contains a line through those points. This shows that there are no $m>1$ freckle cases in the table above (they are subsumed by the scar case). 

The ``shape'' of each stratum is given up to  birational equivalence.

\item For degree 1 quasimaps $\CP^2\qra \CP^1$, 
the similar table is:

\begin{center}
    \begin{tabular}{c|c|c}
          stratum $\sigma$ & $\mr{codim}(\sigma)$ & shape of $\overline{\sigma}$ \\ \hline
        1-freckle & 0 & $\CP^{5}$ \\
         $\Delta=1$ scar & 2 & $\CP^1\times \CP^2$
    \end{tabular}
\end{center}

\item Degree $1$ quasimaps $\CP^2\qra \CP^0$ all belong to $\Delta=1$ scar stratum.

\item Another example: degree $2$ quasimaps $\CP^2\qra \CP^2$. The space $\QMap_2(\CP^2,\CP^2)=\CP^{17}$ has the following strata:


\begin{center}
    \begin{tabular}{c|c|c}
        stratum $\sigma$ & $\mr{codim}(\sigma)$ & shape of $\overline{\sigma}$ \\ \hline
        maps & 0 & $\CP^{17}$ \\
        1-freckle & 1 & $\CP^{14}$-bundle over $\Conf_1(\CP^2)$ \\
        2-freckle & 2 & $\CP^{11}$-bundle over $\Conf_2(\CP^2)$ \\
        3-freckle & 3 & $\CP^{8}$-bundle over $\Conf_3(\CP^2)$ \\
        4-freckle & 4 & $\CP^{5}$-bundle over $\Conf_4(\CP^2)$ \\
        $\Delta=1$ scar & 7 & $\CP^8\times \CP^2$ \\
         $\Delta=2$ scar & 10 & $\CP^2\times \CP^5$
    \end{tabular}
\end{center}
In the last two lines, the second factor describes the position of the scar.

Note that if one has 
more than $5$ 
points in $\FR$ (common zeros of the polynomials $P^a$) in general position, then $\FR$ is 
a scar, since a conic in $\CP^2$ is uniquely determined by $5$ points and so the zero-loci of polynomials $P^a$ coincide. 
\end{itemize}
\end{example}

\section{Counting holomorphic maps (``\texorpdfstring{$\KM$}{KM} numbers'')}
\subsection{Formulation of the problem} \label{ss holom maps: formulation of the problem}

\begin{enumprob}\label{enum prob: counting hol maps}
Let $X$ and $Y$ be two compact complex manifolds (source and target), of dimensions $k$ and $n$ respectively. 
Fix a collection of cycles\footnote{By default, a ``cycle'' in this paper stands for a holomorphic (or, equivalently, algebraic) cycle.
} $c_1^X,\ldots,c_l^X$ in $X$ and a collection of cycles $c_1^Y,\ldots,c_l^Y$ in $Y$ ($l$ is the same in both). Also, fix an element $\delta$ in the set $[X,Y]$ of homotopy classes of maps $X\ra Y$.

We are interested in the number\footnote{The notation stands for ``Kontsevich-Manin number'' and refers to \cite{KM}.}
\begin{equation}\label{N^enum KM}
\KM(X,Y;\{c_i^X,c_i^Y\}_{i=1,\ldots,l}|\delta)
\end{equation}
of holomorphic maps $f\colon X\ra Y$ in homotopy class $\delta$ such that the image of $c_i^X$ in $Y$ intersects $c_i^Y$ for each $i=1,\ldots,l$.
\end{enumprob}

By convention, if holomorphic maps subject to the condition above have continuous moduli, we set $\KM=0$. 

We are thinking of the problem above in terms of an $l$-tuple $x_1,\ldots,x_l$ of points in $X$ which we want to be mapped to the target cycles. Some of the points $x_i$ may be ``fixed'' (i.e. the respective source cycle $c_i^X$ is a point), some may be ``moving freely'' (the respective source cycle is $c_i^X$ is the entire $X$); in general the points $x_i$ are constrained to their respective source cycles $c_i^X$.


While it is possible to study this enumerative problem in full generality, in this paper we will restrict ourselves to the case of maps between projective spaces $\CP^k\ra\CP^n$, since most of the 
phenomena show up already in this case. 

\begin{remark}
    The usual setup of genus zero Gromov-Witten invariants (Gromov-Witten classes integrated over the full moduli space of curves with $l$ marked points) is a special case of the Enumerative Problem \ref{enum prob: counting hol maps} 
    where $X=\CP^1$, 
    with the following choice of source cycles: 
    three cycles among $c_i^X$ are fixed points (e.g. $c_1^X=\{0\}, c_2^X=\{1\}, c_3^X=\{\infty\}$ in $X=\CP^1$ -- needed in order to fix the group of automorphisms of $\CP^1$) and the rest are copies of the fundamental cycle $X$.  (I.e., we have $3$ fixed points and $l-3$ moving points.)
    
    Instead of a homotopy class of a map $\delta$, one usually specifies an element $\beta\in H_2(Y,\ZZ)$, requiring that the holomorphic maps are such that the homology class of the image of $X$ in $Y$ is $\beta$.
\end{remark}

\begin{remark}
The numbers that we are considering look similar to Donaldson invariants \cite{Donaldson,Witten94}. 
\end{remark}

\subsection{Important remark: meaningfulness of the problem. Syzygies of holomorphicity equations.}  
It is widely believed that the problem of holomorphic maps from a higher-dimensional source has virtual dimension equal to $-\infty$ and that is why it should not be studied on the same footing as holomorphic maps from 1-dimensional source. However, this argument is not quite correct due to syzygies, as we will outline below.


Let $(X,\JX)$ and $(Y,\JY)$ be two almost complex manifolds. 
A map $f\colon X \to Y$ is called \emph{pseudo-holomorphic} if its differential intertwines the two almost complex structures, i.e.\ if 
\begin{equation}
    \JY \circ f_* = f_*\circ \JX.
    \label{eq:condition_pseudo_hol}
\end{equation}
Consequently, a pseudo-holomorphic map $f$ intertwines the Nijenhuis tensors of $X$ and $Y$:
\begin{equation}\label{Nijenhuis}
  \del_{\mu}f^a\ (N_{\JX})^\mu_{\bar \rho \bar \sigma} =  (N_{\JY}\circ f)^a_{\bar b \bar c}\  \overline{\del_{\rho} f^b}\ \overline{\del_{\sigma} f^{\vphantom{b}c}}.
\end{equation}
A sufficient condition to satisfy the constraints (\ref{Nijenhuis}) is the integrability of the source and target complex structures $\JX$ and $\JY$, which we will consider henceforth.

The space of holomorphic maps $X \to Y$ between two complex manifolds naively has negative infinite virtual dimension if $\dim_\CC X \geq 2$. 
Indeed, if $\dim_{\CC} X = k$ and $\dim_{\CC} Y = n$, then if $z^i$ are local complex coordinates of $X$, the holomorphicity equation 
\begin{equation}
    \delbar f^a(z)  = 0 
    \label{eq:holomorphicity}
\end{equation}
yield $nk$ pointwise conditions for the $n$ variables $f^a$.
Therefore
\begin{equation*}
    \dim_{\rm vir} \Map(X,Y) = (\# {\rm variables} - \# {\rm equations})\cdot \#{\rm points} = n(1 - k)\cdot\infty
\end{equation*}
which equals $-\infty$ when $k \geq 2$.
However, in higher dimensions ($k\geq 2$) there exist syzygies (linear relations among the conditions \eqref{eq:holomorphicity}) which render the dimension of the space of holomorphic maps finite.

Since the holomorphicity equation \eqref{eq:holomorphicity} give point-wise conditions, syzygies can be expressed in terms of integrals.
For $\dim_{\CC}X = k \geq 2$, we call $\sigma_a \in \Omega^{(k,k-1)}$ a \emph{syzygy} if
\begin{equation}
    \int_X \sigma_a \wedge \delbar f^a = 0,
\end{equation}
independently whether or not $\delbar f^a = 0$.
Syzygies in the above sense are thus given by $\delbar$-closed forms.
If $k$ is large enough, syzygies are determined only up to $\delbar$-exact forms.
This redundancy usually indicates the presence of syzygies among syzygies.

\begin{remark}
    In principle it is possible to construct the full tower of syzygies by constructing the Koszul-Tate resolution of an appropriate bundle:
    Let $\mathcal E$ be the vector bundle over the space of smooth maps $\Map^{\rm sm}(X,Y)$, whose fiber above a map $f$ is given by 
    \[
    \mathcal E_{f} = \Omega^{0,1}(X,f^*T^{1,0} Y).
    \]
    There exists a natural section $s \in \Gamma(\mathcal E)$, which takes a function $f$ to its Dolbeault differential $\delbar f$. 
    Equation \eqref{eq:holomorphicity} is thus expressed as the zero set of the section $s$. 
    Syzygies (and syzygies for syzygies and so forth) can now be described in terms of Tate generators of the Koszul-Tate resolution of  the sheaf of function on the zero-locus of $s$. 
\end{remark}

\subsection{The case of maps \texorpdfstring{$\CP^k\ra \CP^n$}{CPk -> CPn}}\label{ss: counting hol maps CP^k to CP^n}

Consider the Enumerative Problem \ref{enum prob: counting hol maps} in the case $X=\CP^k$, $Y=\CP^n$, with $n\geq k$. Instead of specifying the homotopy class $\delta$ of maps between the two projective spaces we will be specifying their degree $d$.

We will also restrict ourselves to cycles $c_i^X, c_i^Y$ given by intersection of several hyperplanes in general position.

The space\footnote{More appropriately, $\mc{M}$ is a cycle in $\Map_d(\CP^k,\CP^n)$ with $\ZZ$-coefficients. From this viewpoint, $p$ in (\ref{maps as intersection}) should be replaced with the pushforward $p_*$ of cycles.}
\begin{equation}\label{M_d}
\mc{M}(\CP^k,\CP^n;\{c_i^X,c_i^Y\}|d)
\end{equation}
of degree $d$ holomorphic maps $\CP^k\ra \CP^n$, such that the images of $c_i^X$ intersect $c_i^Y$ for $i=1,\ldots,l$, can be 
represented as follows:
\begin{equation}\label{maps as intersection}
\mc{M}=p\left(\bigcap_{i=1}^l \mr{ev}^{-1}_i c_i^Y\right) \quad \subset \Map_d(\CP^k,\CP^n).  
\end{equation}
Here $p\colon\mr{Map}_d(\CP^k,\CP^n)\times \prod_{i=1}^l c_i^X \ra \mr{Map}_d(\CP^k,\CP^n)$ is the projection onto the first factor and
\begin{equation}\label{ev s.3.1}
\mr{ev}_i\colon \mr{Map}_d(\CP^k,\CP^n)\times \prod_{i=1}^l c_i^X \ra \CP^n
\end{equation} 
is the evaluation of a map on a point of the $i$-th cycle.
It follows that the expected (or ``virtual'') complex dimension of the space (\ref{M_d}) is
\begin{equation}\label{dim_vir(M_d)}
  \dim_\mr{vir} \mc{M}(\CP^k,\CP^n;\{c_i^X,c_i^Y\}|d) =  \n_{k,n,d}+\sum_{i=1}^l\dim c_i^X - \sum_{i=1}^l \mr{codim}\, c_i^Y.
\end{equation}

We assume that if the virtual dimension vanishes, $\mc{M}$ is actually a finite set (for source/target cycles in general position). 
It is true in all examples that we encounter in this paper. 
In this case, the $\KM$  number (\ref{N^enum KM}) counts the points of $\mc{M}$. 

Note that the evaluation map (\ref{ev s.3.1}) is invariant under the automorphism group of the source, $\mr{Aut}(\CP^k)=\PSL(k+1,\CC)$. 
Hence, in order to have $\dim \M=0$, we need to assume that the configuration of source cycles $\{c_i^X\}_{i=1,\ldots,l}$ has discrete 
stabilizer in $\mr{Aut}(\CP^k)$.
In particular, if the source cycles are $l_1$ points in general position and $l_2$ copies of $\CP^k$ (i.e. we have $l_1$ fixed points and $l_2$ freely moving points), then we need to assume $l_1\geq k+2$.\footnote{ \label{footnote: transitivity of PSL(k+1,C) acting on P^k}
Note that $\PSL(k+1,\CC)$ acts $3$-transitively on $\CP^k$ for $k=1$ and ``generically $(k+2)$-transitively'' if $k\geq2$. More precisely: for two configurations of $k+2$ points \emph{in general position} on $\CP^k$ there is a unique  element of $\PSL(k+1,\CC)$ mapping the first configuration to the second.}

\subsection{Easy enumerative problem}\label{ss: easy enum prob}

For maps between projective spaces, consider the case when all source cycles $c_i^X$ are points $x_1,\ldots,x_l\in \CP^k$ in general position 
(i.e., in the terminology of Section \ref{ss holom maps: formulation of the problem}, we have only ``fixed points'' on the source). 
Assume that for each $i=1,\ldots,l$, the target cycle $c_i^Y$ is the intersection of hyperplanes 
\begin{equation}\label{target cycles as intersections of hyperplanes}
c_i^Y=\{y\in \CC^{n+1}\backslash\{0\} \;|\; H_{i,\alpha_i}(y)=0\}/\CC^*
\end{equation} 
specified by nonzero covectors $H_{i,\alpha_i}\in (\CC^{n+1})^\vee$, with $\alpha_i=1,\ldots,\codim c_i^Y$.
Then solutions of the Enumerative Problem \ref{enum prob: counting hol maps} are the nonzero solutions (modulo $\CC^*$) of the system of homogeneous linear equations 
$$\sum_{a=0}^n H_{i,\alpha_i,a} P^a(x^0_i,\ldots,x^k_i)=0,\quad i=1,\ldots,l,\;\; \alpha_i=1,\ldots,\codim c_i^Y$$
on coefficients of the polynomials $P^a$ determining the map. Thus we have exactly one solution modulo $\CC^*$ if the dimension (\ref{dim_vir(M_d)}) vanishes, and no solutions or a continuous family of solutions otherwise. In summary, we have the following
\begin{proposition}
    Let the source cycles $\{c_i^X\}$ be points in general position in $\CP^k$ and let the target cycles be intersections of hyperplanes in $\CP^n$. Then the answer to the Enumerative Problem \ref{enum prob: counting hol maps} is:
    \begin{equation}\label{KM for easy problem}
        \KM(\CP^k,\CP^n;\{c_i^X,c_i^Y\}_{i=1,\ldots,l}|d)=
        \left\{
        \begin{array}{ll}
             1& \mr{if\;} \n_{k,n,d}=\sum_{i=1}^l \codim c_i^Y, \\
             0& \mr{otherwise}
        \end{array}
        \right.
    \end{equation}
\end{proposition}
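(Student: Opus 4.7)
The plan is to exploit the explicit coordinate description of $\QMap_d(\CP^k,\CP^n)$ as the projective space of coefficient vectors of the defining polynomials $P^0,\ldots,P^n$, and observe that all incidence conditions at fixed source points are \emph{linear} in those coefficients. Concretely, fix a monomial basis $\{x^I\}_{|I|=d}$ of homogeneous degree-$d$ polynomials in $k+1$ variables, so that a quasimap is specified by a vector $A=(A^{a,I})\in \CC^{\n_{k,n,d}+1}$ modulo overall $\CC^*$-scaling. For each point $x_i \in \CP^k$ and each defining covector $H_{i,\alpha_i}$ of the target cycle $c_i^Y$, the requirement $H_{i,\alpha_i}(f(x_i))=0$ translates into the single homogeneous linear equation
\begin{equation*}
L_{i,\alpha_i}(A) \;\defeq\; \sum_{a=0}^{n} H_{i,\alpha_i,a}\, P^a(x_i^0,\ldots,x_i^k) \;=\; 0,
\end{equation*}
so the problem reduces to finding the nonzero joint solutions of a system of $N\defeq \sum_{i=1}^l \codim c_i^Y$ homogeneous linear equations on $A\in \CC^{\n_{k,n,d}+1}$.

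The second step is a rank count. I would show that in general position the linear forms $\{L_{i,\alpha_i}\}$ are independent, hence the solution space in $\CC^{\n_{k,n,d}+1}$ has dimension exactly $\max(0,\n_{k,n,d}+1-N)$. Then three cases emerge: if $N > \n_{k,n,d}$, only $A=0$ survives and there are no quasimaps, so $\KM=0$; if $N < \n_{k,n,d}$, one obtains a projective family of dimension $\geq 1$, and the $\KM=0$ convention for positive-dimensional moduli (stated in Section~\ref{ss holom maps: formulation of the problem}) applies; if $N=\n_{k,n,d}$, the solution locus is a line in $\CC^{\n_{k,n,d}+1}$, giving a unique point in $\CP^{\n_{k,n,d}}$, i.e., exactly one candidate quasimap.

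The third step is to upgrade ``candidate quasimap'' to ``holomorphic map,'' since the count sought in the Enumerative Problem is of honest maps. The argument is a generic-position one: by Proposition~\ref{Prop qmap strata (i)}\ref{Prop qmap strata (i)} the proper quasimap locus $\QMap^\pr_d(\CP^k,\CP^n)\subset \QMap_d(\CP^k,\CP^n)$ has positive codimension (since we are in the regime $n\geq k$, implicit in the hypothesis $N=\n_{k,n,d}$ being compatible with an actual map existing). The incidence conditions cut out a linear subspace of complementary dimension depending only on $\{x_i\}$ and $\{H_{i,\alpha_i}\}$; for cycles in sufficiently general position this subspace meets $\QMap^\pr_d$ in codimension at least one, hence the unique solution of the linear system lies in $\Map_d(\CP^k,\CP^n)$.

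The main obstacle is the rank statement in step two. One way I would carry it out is inductively: add the equations one at a time and show that the next form $L_{i,\alpha_i}$ is not in the span of the previous ones, by producing a polynomial $(n+1)$-tuple that annihilates all prior conditions but not the new one; this uses that evaluations at distinct points of $\CP^k$ remain linearly independent as long as the number of point-conditions does not exceed $\dim \Gamma(\CP^k,O(d))=\binom{d+k}{k}$, together with the general-position assumption on $\{H_{i,\alpha_i}\}$ inside $(\CC^{n+1})^\vee$. Once independence holds, the trichotomy gives precisely the piecewise formula \eqref{KM for easy problem}.
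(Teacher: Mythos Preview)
Your approach is essentially the same as the paper's: translate the incidence conditions into the homogeneous linear system $\sum_a H_{i,\alpha_i,a}P^a(x_i)=0$ on the coefficient vector in $\CC^{\n_{k,n,d}+1}$ and read off the trichotomy from the rank. The paper's argument is in fact terser than yours---it simply writes down the linear system and asserts the count---so your steps two and three (independence of the linear forms; the unique solution landing in $\Map_d$ rather than $\QMap_d^\pr$) make explicit what the paper leaves implicit. One small remark on your rank sketch: the condition ``number of point-conditions does not exceed $\binom{d+k}{k}$'' governs independence of the evaluation functionals $\ev_{x_i}\in W^\vee$, but the forms $L_{i,\alpha_i}=H_{i,\alpha_i}\otimes\ev_{x_i}$ live in $(\CC^{n+1})^\vee\otimes W^\vee$, so one can have $l>\binom{d+k}{k}$ distinct points and still get $N\leq\n_{k,n,d}+1$ independent forms provided the $H_{i,\alpha_i}$ are generic; your inductive scheme still works, just with the witness tuple built using genericity of the covectors as well as of the points.
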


\subsection{Toward higher quantum cohomology and a mysterious theta-function}

In the case $X=\CP^1$, the numbers (\ref{KM for easy problem}) summed over $d$ with weight $q^d$ organize into the quantum cohomology of the target $\CP^n$, i.e.\ into the commutative associative product $*$ on $H^\bt(\CP^n)[[q]]$ such that 
\begin{equation}\label{qcoh eq1}
\langle \alpha_1*\alpha_2*\cdots*\alpha_{l-1}, \alpha_l \rangle =
\sum_{d\geq 0} q^d \KM(\CP^1,\CP^n;\{\mr{point}_i,\alpha_i^\vee\}_{i=1,\ldots,l}|d) ,
\end{equation}
 where $\alpha_i\in H^\bt(\CP^n)$ are target cohomology classes and $c_i^Y=\alpha_i^\vee$ are cycles in the target representing the Poincar\'e duals of $\alpha_i$;\footnote{We are implicitly extending the numbers (\ref{KM for easy problem}) by linearity to general target cycles.} $\langle,\rangle$ is the standard Poincar\'e pairing on cohomology. In particular, the ``easy $\KM$ numbers'' (\ref{KM for easy problem}) for $l=3$ give the structure constants of the quantum product $*$ and the case $l\geq 4$ is recovered as the iterated quantum product.\footnote{In this discussion one can replace the target $\CP^n$ with any compact complex  manifold $Y$; the $\KM$ numbers with source cycles being points will be more complicated, but will still arrange into a commutative associative ring structure on $H^\bt(Y)[[q_1,\ldots,q_m]]$, where $m=\mr{rk}\,H^2(Y)$.} 
 
 In particular, the quantum cohomology ring of $\CP^n$ can be identified with 
 \begin{equation}\label{qcoh of Pn as a quotient}
 \CC[[q]] [\omega]/(\omega^{n+1}-q)
 \end{equation} -- the quotient of the ring of polynomials in the variable $\omega$ (the class of the Fubini-Study 2-form) by the ideal generated by $\omega^{n+1}-q$.
 For instance, in this ring one has $\omega^n*\omega= q$ (while the classical cup product is 
$\omega^n\!\smile\! \omega=0$).  
 
One can reformulate this structure in terms of a weak\footnote{By ``weak'' we mean that the inner product induced by the counit is degenerate.} Frobenius algebra $\CC[[q]][x]$ 
-- the algebra of polynomials in $x$ (a formal variable identified with the class $\omega$) 
equipped with the standard product  and the trace (counit) $\eta\colon \CC[[q]][x]\ra \CC[[q]]$ defined by
$$ \eta(p(x))\defeq\frac{1}{2\pi i}\oint_\gamma \frac{p(x)dx}{x^{n+1}-q}  = \sum_{d\geq 0} q^d \frac{1}{2\pi i}\oint_\gamma \frac{p(x)dx}{x^{(n+1)(d+1)}},$$
with $p(x)$ a polynomial and $\gamma$ a closed contour going around the origin.
This counit induces a (degenerate) pairing $(p_1,p_2) \defeq \eta(p_1 p_2)$.  
One can then identify the quantum cohomology ring (\ref{qcoh of Pn as a quotient}) as the quotient of $\CC[[q]][x]$ by the kernel of the pairing $(,)$. 
The cyclic quantum product 
 (\ref{qcoh eq1}) can then be written as 
$$\eta(p_1 p_2\cdots p_l)=\frac{1}{2\pi}\oint_{\gamma} \frac{p_1(x)\cdots p_l(x)dx}{x^{n+1}-q},$$
with $p_i(x)$ the polynomials corresponding to the cohomology classes $\alpha_i$.

Next, consider the case $X=\CP^k$ with $k\geq 2$. 
Consider the following multilinear operation (``cyclic higher quantum product'') on target cohomology
\begin{equation}
   \begin{array}{cccc}
         \QP\colon& \mr{Sym}^l H^\bt(\CP^n) [[q]]&\ra & \CC[[q]]  \\ &
         \alpha_1\otimes \cdots\otimes \alpha_l& \mapsto & \sum_{d\geq 0} q^d  \KM(\CP^k,\CP^n;\{\mr{point}_i,\alpha_i^\vee\}_{i=1,\ldots,l}|d) 
   \end{array}
\end{equation}
Here 
$\alpha_i\in H^\bt(\CP^n)$ are target cohomology classes and $c_i^Y=\alpha_i^\vee$ are representatives of the Poincar\'e dual homology classes.

 We define the \emph{higher quantum cohomology} of $Y=\CP^n$ parametrized by $X=\CP^k$ as the Frobenius algebra 
 $\mc{A}_{\CP^k,\CP^n}\defeq \CC[[q]][x]$ with the standard product and the counit
 \begin{equation}\label{counit eta}
 \eta(p(x))\defeq\sum_{d\geq 0} q^d\frac{1}{2\pi i}\oint_\gamma \frac{p(x)dx}{x^{\n_{k,n,d}+1}}.
 \end{equation}
 
 Note that $\eta$ acts on monomials as
 $$
 \eta(x^j) = \left\{ 
 \begin{array}{cc}
     q^d & \mr{if}\; j=\n_{k,n,d},   \\
     0 & \mr{otherwise}
 \end{array}
 \right.
 $$
 \begin{example}
 For instance, for $k=n=2$, $\eta$ maps monomials as
 $$
 x^2\mapsto 1,\; x^8\mapsto q,\; x^{17}\mapsto q^2, \;  x^{29} \mapsto q^3,\; x^{44}
\mapsto q^4, \ldots
 $$
 while monomials of other degrees $j\not\in \{2,8,17,29,44,\ldots\}$ are mapped to zero.
 \end{example}

 \begin{remark}
     For $k=2$ (i.e. for maps $\CP^2\ra \CP^n$), one can write the counit (\ref{counit eta}) in terms of the Jacobi theta function $\theta_{10}$ (in Mumford's notation) as follows:
     $$\eta(p(x))=\frac{1}{2\pi i}\oint_\gamma p(x) F(q,x),$$
     where 
     \begin{align*}
     F(q,x)&=\sum_{d\geq 0} \frac{q^d}{x^{(n+1)\frac{(d+2)(d+1)}{2}}} dx\\
     &=  \mr{reg}_{q=0}\left(q^{-1}\sum_{d'\in \ZZ}\frac{q^{d'}}{x^{(n+1)\frac{d'(d'+1)}{2}}}dx\right)\\
     &= \mr{reg}_{q=0} \Big(x^{\frac{n+1}{8}}q^{-\frac32}\underbrace{\sum_{d'\in\ZZ}q^{d'+\frac12}x^{-\frac{n+1}{2}(d'+\frac12)^2}}_{\theta_{10}(z,\tau)}dx\Big).
     \end{align*}
     Here $\mr{reg}_{q=0}$ stands for the operation of subtracting the negative part of the Laurent expansion in $q$; the variables $z,\tau$ are related to $q,x$ by
     $$x^{-\frac{n+1}{2}}=e^{\pi i\tau},\;\; q=e^{2\pi i z}.$$
     Thus, one has
     \begin{equation}\label{counit via theta}
     \eta(p(x))
     =-\frac{1}{n+1}\mathsf{P}\oint_{\tilde\gamma} d\tau\, p(e^{-\frac{2\pi i \tau}{n+1}}) e^{-2\pi i\tau (\frac{1}{n+1}+\frac18)}e^{-3\pi iz} \theta_{10}(z,\tau) ,
     \end{equation}
     where $\mathsf{P}$ is the projection to nonnegative Fourier modes in the variable $z$; $\tilde\gamma$ is the image of the contour $\gamma$ in the plane of the modular parameter $\tau$.
 \end{remark}

 The following is an obvious consequence of (\ref{KM for easy problem}).
 \begin{corollary}
One has
\begin{equation}
    \QP(\alpha_1,\ldots,\alpha_l)= \eta(p_1(x)\cdots p_l(x)),
\end{equation}
where $\alpha_i$ are cohomology classes of $\CP^n$ and $p_i(x)$ are the corresponding polynomials.
 \end{corollary}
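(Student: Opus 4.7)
The plan is straightforward: both sides of the proposed identity are multilinear in $\alpha_1,\ldots,\alpha_l$, so it suffices to verify the equality on a basis of $H^\bt(\CP^n)$. I would take the monomial basis $\omega^j$ for $j=0,1,\ldots,n$, where $\omega\in H^2(\CP^n)$ is the hyperplane class; under the correspondence $\alpha_i\leftrightarrow p_i(x)$ spelled out immediately after (\ref{counit eta}), the class $\omega^{j_i}$ corresponds to $p_i(x)=x^{j_i}$.

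Next I would unpack each side under this choice. A Poincar\'e-dual cycle to $\omega^{j_i}$ is represented by a transverse intersection of $j_i$ hyperplanes in $\CP^n$, so $\codim c_i^Y = j_i$. Invoking the easy enumerative formula (\ref{KM for easy problem}) then yields
\begin{equation*}
    \KM(\CP^k,\CP^n;\{\mr{point}_i,(\omega^{j_i})^\vee\}_{i=1,\ldots,l}\,|\,d) = \begin{cases} 1 & \text{if } \n_{k,n,d}=\sum_{i=1}^l j_i, \\ 0 & \text{otherwise.} \end{cases}
\end{equation*}
Weighting by $q^d$ and summing yields $\QP(\omega^{j_1},\ldots,\omega^{j_l})=q^d$ for the unique $d$ (if any) satisfying $\n_{k,n,d}=\sum_i j_i$, and zero otherwise. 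On the right-hand side one has $p_1(x)\cdots p_l(x)=x^{\sum_i j_i}$, and the explicit action of $\eta$ on monomials recorded in the excerpt gives exactly the same value. Agreement on monomial arguments combined with multilinearity then completes the proof.

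The remaining points are purely bookkeeping: one must choose Poincar\'e-dual representatives simultaneously in general position (possible because general position is a generic open condition on the defining hyperplanes), and one uses that $d\mapsto \n_{k,n,d}$ is strictly increasing so at most one $d$ ever contributes to either side. There is no genuine obstacle; the corollary is a direct translation of the easy enumerative count into the Frobenius-algebra language of $\mc{A}_{\CP^k,\CP^n}$, and the only ``step'' is the bookkeeping match between the condition $\n_{k,n,d}=\sum_i j_i$ and the pole-order condition in the contour-integral definition of $\eta$.
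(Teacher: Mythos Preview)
Your argument is correct and is precisely the unpacking the paper has in mind: the corollary is stated as ``an obvious consequence of (\ref{KM for easy problem}),'' and your reduction to the monomial basis together with the bookkeeping match $\n_{k,n,d}=\sum_i j_i$ is exactly that obvious consequence made explicit. There is nothing to add.
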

 
 \begin{lemma}\label{lem:non_zero_kernel}
 For $k\geq 2$, the pairing on $\mc{A}_{\CP^k,\CP^n}$ induced by the counit, $(p_1,p_2)=\eta(p_1p_2)$, has zero kernel.
 \end{lemma}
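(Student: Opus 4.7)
The plan is to show directly that the $\CC[[q]]$-bilinear form is nondegenerate by exhibiting, for any nonzero $p \in \mc{A}_{\CP^k,\CP^n}$, an explicit monomial $x^j$ with $(p, x^j) = \eta(p \cdot x^j) \ne 0$.

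Write $p = \sum_{i=0}^{N} a_i(q) x^i$ with $a_i(q) \in \CC[[q]]$ and leading coefficient $a_N(q) \ne 0$. By the definition of the counit, for any $m \geq 0$,
\begin{equation}
\eta(x^m) = \begin{cases} q^d & \text{if } m = \n_{k,n,d} \text{ for some } d \geq 0, \\ 0 & \text{otherwise.} \end{cases}
\end{equation}
Therefore, for any $j \geq 0$,
\begin{equation}
\eta(p \cdot x^j) \;=\; \sum_{i=0}^N a_i(q)\, \eta(x^{i+j}) \;=\; \sum_{\substack{0 \le i \le N \\ i+j \in \{\n_{k,n,d}\}_{d \geq 0}}} a_i(q)\, q^{d(i+j)},
\end{equation}
where $d(m)$ is the unique $d$ with $m = \n_{k,n,d}$ when this exists.

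The idea is to take $j = \n_{k,n,d_0} - N$ for $d_0$ sufficiently large, so that among the indices $i+j$ for $0 \le i \le N$, only $i = N$ lies in the sequence $\{\n_{k,n,d}\}$. This isolates a single term $a_N(q)\, q^{d_0}$. The heart of the argument is the use of the hypothesis $k \geq 2$: the consecutive gap is
\begin{equation}
\n_{k,n,d_0} - \n_{k,n,d_0-1} \;=\; (n+1)\binom{d_0+k-1}{k-1},
\end{equation}
which grows polynomially in $d_0$ (of degree $k-1 \geq 1$), so for $d_0$ large enough this gap exceeds $N$. Then $[\n_{k,n,d_0}-N,\, \n_{k,n,d_0}]$ contains exactly one element of the sequence $\{\n_{k,n,d}\}$, namely $\n_{k,n,d_0}$ itself, realized by $i = N$. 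Consequently $\eta(p \cdot x^j) = a_N(q)\, q^{d_0}$, which is nonzero in the integral domain $\CC[[q]]$ because $a_N(q) \ne 0$. Hence $p$ is not in the kernel.

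The only real obstacle is the justification that the gap sequence $\n_{k,n,d} - \n_{k,n,d-1}$ diverges, and pinpointing where the hypothesis $k \geq 2$ enters; everything else is bookkeeping. It is worth noting in a remark that for $k = 1$ the gap is the constant $n+1$, so the argument genuinely fails there — consistent with the fact that in the $\CP^1$ case the pairing is indeed degenerate (its kernel is the ideal $(x^{n+1} - q)$ appearing in \eqref{qcoh of Pn as a quotient}).
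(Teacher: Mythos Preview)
Your proof is correct and follows essentially the same approach as the paper's: both isolate a single surviving term by choosing $j=\n_{k,n,d_0}-N$ for $d_0$ large and using that the consecutive gap $\n_{k,n,d_0}-\n_{k,n,d_0-1}=(n+1)\binom{d_0+k-1}{k-1}$ diverges for $k\geq 2$. The only cosmetic difference is that you organize $p$ by $x$-degree (leading coefficient $a_N(q)\in\CC[[q]]$) while the paper organizes it by $q$-degree (leading coefficient $p_\delta(x)\in\CC[x]$); your version is arguably a bit cleaner since the result $a_N(q)\,q^{d_0}$ is visibly nonzero without having to track a specific $q$-power.
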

\begin{proof}
    Assume that a polynomial 
    $$p(x)=\sum_{i=0}^\delta p_i(x) q^i$$ 
    (where $p_i$ are polynomials in $x$ independent of $q$) is in the kernel of $(,)$, i.e., that $\eta(p(x) x^j)=0$ for all $j\geq 0$.  Assume that $p_\delta$ is nonzero. Choose $d$ large enough such that one has
\begin{itemize}
    \item $\n_{k,n,d}\geq \deg p_\delta$,
    \item $\n_{k,n,d}+\deg p_i < \n_{k,n,d+1}$ for $i<\delta$.
\end{itemize}
The fact that it is possible to satisfy the second condition relies on the assumption $k\geq 2$ -- then the gaps between consecutive numbers in the sequence $\{\n_{k,n,d}\}_{d=0,1,2,\ldots}$ are growing.

The expression
\begin{equation}\label{proof of zero kernel eq}
\eta(p(x) x^{\n_{k,n,d}-\deg p_\delta})
\end{equation}
contains the term $q^{\delta+d}\cdot (\mr{top\; coefficient\; of}\;p_\delta )$ coming from $p_\delta$ which cannot be canceled by anything from $p_{<\delta}(x)$. Thus, (\ref{proof of zero kernel eq}) is nonzero. Hence, a nonzero $p(x)$ cannot be in the kernel of $(,)$.
\end{proof}
 
 In particular, Lemma \ref{lem:non_zero_kernel} implies that unlike in the case of the usual quantum product ($k=1$), $\mc{A}_{\CP^k,\CP^n}$ does not have a finite-dimensional quotient isomorphic to the  cohomology of the target  with a $q$-deformed product.
 
 Considering the higher quantum product as a $2\ra 1$ operation $*$, if we identify $x^j$ with cohomology classes $\omega^j$ of $\CP^n$ for $j=0,\ldots,n$, we should say that the quantum product of several cohomology classes $\xi=\omega^{j_1}*\cdots*\omega^{j_l}$ is $\omega^{j_1+\cdots+j_l}$ if $j_1+\cdots+j_l\leq n$, otherwise $\xi$ does not correspond to an element of $H^\bt(\CP^n)$ (but corresponds to an element of $\mc{A}_{\CP^k,\CP^n}$). In other words, $H^\bt(\CP^n)$ is identified with a subspace of $\mc{A}_{\CP^k,\CP^n}$ which is not closed under the quantum product.

In the language of QFT (a higher-dimensional A-model localizing to holomorphic maps $X\ra Y$), we expect the higher quantum cohomology ring to describe the OPE algebra of a class of observables: elements of $H^\bt(Y)$ correspond to the usual evaluation observables, while other elements of $\mc{A}_{\CP^k,\CP^n}$  correspond to a new type of observables.\footnote{Perhaps, a higher-dimensional counterpart of the ``tangency observables'' known in the A-model.} In particular, OPEs of evaluation observables can contain the ``new'' observables.
 
\begin{remark}
    The discussion above generalizes straightforwardly to maps $\CP^{k_1}\times\cdots \times \CP^{k_m}\ra \CP^n$ (cf. Remark \ref{rem: qmaps from a product of CPs}). In this case, one introduces formal parameters $q_1,\ldots, q_m$ so that a map of multi-degree $(d_1,\ldots d_m)$ is counted with weight $q_1^{d_1}\cdots q_m^{d_m}$.   The higher quantum cohomology in this case is the Frobenius algebra $\mc{A}_{\CP^{k_1}\times\cdots\times \CP^{k_m},\CP^n}=\CC[[q_1,\ldots,q_m]][x]$ with the standard product and the counit
    $$
    \eta(p(x))=\sum_{d_1,\ldots,d_m\geq 0}q_1^{d_1}\cdots q_m^{d_m} \frac{1}{2\pi i}\oint_\gamma \frac{p(x) dx}{x^{\n_{(k_1,\ldots,k_m),n,(d_1,\ldots,d_m)}+1}}
    $$
    with the exponents $\n_{(k_1,\ldots,k_m),n,(d_1,\ldots,d_m)}$ as in (\ref{n for X=product of CPs}).

    For instance, for maps $\CP^1\times \CP^1\ra \CP^n$ one has $\mc{A}_{\CP^1\times\CP^1,\CP^n}=\CC[[q_1,q_2]][x]$ with
    $$ \eta(p(x)) = \sum_{d_1,d_2 \geq 0} q_1^{d_1} q_2^{d_2} \frac{1}{2\pi i} \oint_\gamma \frac{p(x)dx}{x^{(n+1)(d_1+1)(d_2+1)}}. $$
\end{remark}

\section{Counting quasimaps (``\texorpdfstring{$\GLSM$}{QM} numbers'')}
\subsection{Formulation of the problem}\label{sec:setup} 
We now describe the \emph{quasimap counting problem}. 
Let $X$ be a  compact complex manifold and \mbox{$Y = \mathbb{C}^N \sslash G$}. We denote $\pi\colon \mathbb{C}^N \setminus \GGamma \to Y$ the quotient map. We want to count quasimaps $f = (\mathcal{P},\underline{f})$ subject to certain conditions. Namely, fix a natural number $l$ and closed submanifolds $c_1^X,\ldots,c_l^X \subset X$ and $c_1^Y,\ldots,c_l^Y \subset Y$. 
We have that $\overline{\pi^{-1}(c_i^Y)}$ is a $G$-space and we can consider the inclusion of fiber bundles 
\begin{equation}
    \iota\colon \mc{P} \times_G  \overline{\pi^{-1}(c_i^Y)} \to \mc{P} \times_G \mathbb{C}^N.
\end{equation}
We denote  $\widetilde{c_i^Y} =  \mc{P} \times_G  \overline{\pi^{-1}(c_i^Y)}$. 

To condense the notation, we denote  
\begin{equation}\label{qmap counting data}
\mathcal{D}= ( \{c_i^X\}_{i=1}^l, \{c_i^Y\}_{i=1}^l|\mathcal{P})
\end{equation} 
and call it the  \emph{quasimap counting data}.

\begin{enumprob}
\label{enumprob QMap}
 Given a quasimap counting data (\ref{qmap counting data}), we consider the set 
    \begin{multline} \label{enum prob B eq}
        \QMap(X,Y,\mathcal{D})\! \defeq \! \left\lbrace f \in \QMap(X,Y) \colon f = (\mathcal{P},\underline{f}),\; 
        \underline{f}(c_i^X)
        \cap \widetilde{c_i^Y} \neq \varnothing\right\rbrace.
    \end{multline}
    If this set is finite we call the quasimap data \emph{stable} and we define the $\GLSM$ number 
    \begin{equation}
        \GLSM(X,Y,\mathcal{D}) \defeq \#  \QMap(X,Y,\mathcal{D}).
    \end{equation}
\end{enumprob}
\begin{remark}
    Suppose $f \in \QMap(X,Y,\mathcal{P},(c_i^X)_{i=1}^n,(c_i^Y)_{i=1}^n)$ and $\FR(f) = \varnothing$. Then $f$ defines a holomorphic map $f\colon X \to Y$ such that $f(c_i^X) \cap c_i^Y \neq \varnothing$, whose homotopy class $\delta$ is determined by the principal bundle $\mathcal{P}$. 
\end{remark}
\begin{remark}
    We can extend the $\GLSM$ number to collections of \emph{cycles} $c_i^X$, $c_i^Y$ by multilinearity.
\end{remark}
\subsubsection{Main example}
Let us consider the main example of quasimaps $\CP^k \qra \CP^n$. In this case $\mathcal{P} = O(d)$ for some $d \geq 1$ and we will denote the quasimap data simply by $\mathcal{D}= (\{c_i^X\}_{i=1}^l,\{c_i^Y\}_{i=1}^l|d).$ Quasimaps $f = (O(d), P^\alpha(x))$ are given by a collection of  $n+1$ homogeneous polynomials  in $k+1$ variables of degree $d$. 
We consider target cycles 
$c^Y_i \subset \CP^n$ such that the closure of the preimage  $\overline{\pi^{-1}(c^Y_i)}$ is a linear subspace of the same codimension and hence can be written as an intersection of $n_i^Y = \codim c_i^Y$ hyperplanes $H_i^j$. 
Then, quasimaps $f\in \QMap(\CP^k,\CP^n,\{c_i^X\}_{i=1}^l,\{c_i^Y\}_{i=1}^l)|d)$ are given by solutions to the system of equations 
\begin{equation}\label{eq: QM equation}
    H_i^j(P^\alpha(x_i)) = 0
\end{equation}
subject to the condition $x_i \in c_i^X$. Here the coefficients of these equations are the coefficients of $H_i^j$, the unknowns are the coefficients of the homogeneous polynomials $P^\alpha$ and the points $x_i$.  In particular, the virtual dimension of $\QMap(\CP^k,\CP^n,\{c_i^X\}_{i=1}^l,\{c_i^Y\}_{i=1}^l)|d)$ is given by 
\begin{multline}
    \label{QMap dim vir}
    \dim_\mr{vir}
    \QMap(\CP^k,\CP^n,(c_i^X)_{i=1}^l,(c_i^Y)_{i=1}^l|d) = \\
    =\n_{k,n,d} + \sum_{i=1}^l \dim c_i^X - \codim c_i^Y,
\end{multline}
cf. \eqref{dim_vir(M_d)}.
\begin{example}\label{expl: k1n1d, I}
    Consider the example $k=1, n=2$ and $d \geq 1$. For $i=1,2,3$ we let $c_i^X$ be points in $\CP^1$ and $c_i^Y$ be lines in $\CP^2$.  Let $l'$ be a natural number. For $i = 4, \ldots, l'+3\qefed l$, we let $c_i^X = \CP^1$ and $c_i^Y$ be points in $\CP^2$. Then the virtual dimension of the quasimap space $\QMap(\CP^k,\CP^n,(c_i^X)_{i=1}^l,(c_i^Y)_{i=1}^l|d)$ is 
    \begin{equation}
        \underbrace{3(d+1) - 1}_{n_{1,2,d}} + \underbrace{3 \cdot 0 + l' \cdot 1}_{\sum \dim c_i^X} - \underbrace{(2l' + 3)}_{\sum\codim c_i^Y} = 3d + 2 - l' - 3 = 3d -1 -l'.
    \end{equation} 
    We interpret this example as follows: By fixing the images of three points in $\CP^1$ to lie on lines we ``gauge fix'' the  $\PSL(2, \CC)$ action on $\CP^1$ (notice that this reduces the dimension of the quasimap space by $3 = \dim \PSL(2,\CC)$). 
    To have a valid quasimap count we need to demand that the quasi-map pass through $3d-1$ points in the target $Y$. In this way we recover the 
    zero virtual dimension
    condition for counting holomorphic maps. 
\end{example}

\subsection{Explicit answer from intersection theory}\label{sec:answer_from_intersection_theory}
In the main example, we can easily compute the $\GLSM$ number from intersection theory. Namely, we observe that given $H \in (\CC^{n+1})^\vee$, the map $(P^\alpha,x) \mapsto H(P^\alpha(x))$ can be interpreted as a section of the line bundle 
\begin{equation}
   L_d \defeq \OO(1) \boxtimes \OO(d) \to \QMap_d(\CP^k,\CP^n) \times \CP^k. 
\end{equation} 
Equation \eqref{eq: QM equation} can then be interpreted as the statement that such a section has a zero. 
We will use this observation to reformulate the quasimap count as a the count of zeros of a 
section. 
Namely, given a collection of submanifolds $c_i^X \subset \CP^k, i = 1, \ldots, k$, we define the compact complex manifold 
\begin{equation}
    \Var = \QMap_d(\CP^k,\CP^n) \times c_1^X \times \ldots \times c_l^X. \label{eq: def var}
\end{equation}
We have the obvious maps $p_i \colon \Var \to \QMap_d(\CP^k,\CP^n) \times \CP^k$ given by composing the inclusion $c_i^X \hookrightarrow \CP^k$ and the projection to the $i$th factor. We then have the following 
\begin{proposition}
 Given the quasimap counting data $\mathcal{D} = \left(\{c_i^X\}_{i=1}^l,\{c_i^Y\})_{i=1}^l|d\right)$ we define the following vector bundle $E$ over $\Var$ 
    \begin{equation}
       E \equiv E(\mathcal{D}) \defeq \bigoplus_{i=1}^l \oplus_{j=1}^{\codim c_i^Y} p_i^*L_d.
    \end{equation}
    Writing $\overline{\pi^{-1}(c_i^Y)}= \cap_{j=1}^{\codim c_i^Y} H_i^j$, 
    one has a section $\sigma$ of $E$
    given by 
    $$
    \sigma\colon \underbrace{(\underline{f},x_1,\ldots,x_l)}_{\in \Var}\mapsto \underbrace{\Big(\Big(H_i^j(\underline{f}(x_i))\Big)_{i=1}^l\Big)_{j=1}^{\codim c_i^Y}}_{\in E_{\ul{f},x_1,\ldots,x_l}}.$$
    Then
    \begin{equation}
        p(\sigma(\Var) \cap E_0) = \QMap(\CP^k,\CP^n,\mathcal{D}),
    \end{equation}
    where $E_0$ denotes the zero section of $E$ and $p$ is the projection to the first factor in the r.h.s. of (\ref{eq: def var}).
    
\end{proposition}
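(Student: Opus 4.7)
The proposition is essentially a translation of the quasimap counting condition into the language of vanishing loci of sections of vector bundles, so the plan is to unpack both sides of the claimed equality and match them directly. I would organize the proof in three steps: verifying that $\sigma$ is a well-defined section of $E$, identifying its zero locus pointwise, and checking that the projection to the first factor recovers exactly $\QMap(\CP^k,\CP^n,\mathcal{D})$.

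\emph{Step 1: $\sigma$ is a section of $E$.} A quasimap is an equivalence class $[P^\alpha]$ of $(n+1)$-tuples of degree $d$ homogeneous polynomials modulo simultaneous rescaling by $\CC^*$, so for a fixed representative $P^\alpha$ and a fixed $x\in\CP^k$, the vector $\underline{f}(x)\in\CC^{n+1}$ is \emph{linear} in the coefficients of $P^\alpha$ and degree $d$ homogeneous in the homogeneous coordinates of $x$. Applying a covector $H_i^j\in(\CC^{n+1})^\vee$ preserves both properties. Linearity in the coefficients is the defining property of a section of $\OO(1)$ on $\QMap_d(\CP^k,\CP^n)=\CP^{\n_{k,n,d}}$, and degree $d$ homogeneity in the $x$-variables is the defining property of a section of $\OO(d)$ on $\CP^k$; together this says $H_i^j(\underline{f}(x))$ is a section of $L_d=\OO(1)\boxtimes\OO(d)$ on $\QMap_d(\CP^k,\CP^n)\times\CP^k$. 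Pulling back along $p_i$ (which remembers only the quasimap and the $i$th source point) yields a section of $p_i^*L_d$, and assembling over all $i$ and $j$ gives the section $\sigma$ of $E$.

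\emph{Step 2: Zero locus of $\sigma$.} By construction $\sigma(\underline{f},x_1,\ldots,x_l)=0$ if and only if $H_i^j(\underline{f}(x_i))=0$ for every $i=1,\ldots,l$ and every $j=1,\ldots,\codim c_i^Y$. Since by hypothesis $\overline{\pi^{-1}(c_i^Y)}=\bigcap_j\{H_i^j=0\}$ as a linear subspace of $\CC^{n+1}$, this is equivalent to $\underline{f}(x_i)\in\overline{\pi^{-1}(c_i^Y)}$ for every $i$. Because $\Var$ already constrains $x_i\in c_i^X$, the preimage $\sigma^{-1}(E_0)$ is exactly the set of tuples $(\underline{f},x_1,\ldots,x_l)$ with $\underline{f}$ a quasimap and $x_i\in c_i^X$ satisfying the incidence conditions $\underline{f}(x_i)\in\widetilde{c_i^Y}$ from Enumerative Problem \ref{enumprob QMap}.

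\emph{Step 3: Matching under projection.} Projecting $\sigma^{-1}(E_0)$ onto the $\QMap_d(\CP^k,\CP^n)$ factor yields exactly those quasimaps $\underline{f}$ for which there exists some choice of $x_i\in c_i^X$ with $\underline{f}(x_i)\in\widetilde{c_i^Y}$ for every $i$, i.e.\ $\underline{f}(c_i^X)\cap\widetilde{c_i^Y}\neq\varnothing$. This is precisely the definition \eqref{enum prob B eq} of $\QMap(\CP^k,\CP^n,\mathcal{D})$, completing the proof.

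The only genuine subtlety is Step 1, namely verifying that $H_i^j(\underline{f}(x))$ transforms correctly under both the $\CC^*$ rescaling of the quasimap and the $\CC^*$ rescaling of homogeneous coordinates on $\CP^k$ so that it lives in $\OO(1)\boxtimes\OO(d)$; once that identification is in place, Steps 2 and 3 are tautological unpackings of the definitions of the zero locus and of $\QMap(\CP^k,\CP^n,\mathcal{D})$ respectively. I expect no other obstacle.
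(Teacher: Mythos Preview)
Your proof is correct. The paper does not supply a proof of this proposition at all, treating it as an immediate consequence of the definitions, so your careful three-step unpacking (well-definedness of $\sigma$ as a section of $L_d$, identification of the zero locus with the incidence conditions, and matching under projection) is exactly the natural argument the reader is expected to fill in.
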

\begin{remark}
    The base space $\Var$ has dimension 
    \[
    \dim \Var = \n_{k,n,d} + \sum_i \dim c_i^X,
    \]
    while the vector bundle $E$ has rank 
    \[
    \rk(E) = \sum_i \sum_{j=1}^{\codim c_i^Y} \rk(p_i^* L_d) = \sum_i \codim c_i^Y.
    \]
    The virtual dimension can thus be expressed as
    \[
    \dim_{\mr{vir}} \QMap(\CP^k,\CP^n,\mathcal D) = \dim \Var - \rk(E).
    \]
\end{remark}
Observing that the number of zeroes of a generic section of a vector bundle $E$ is given by the Euler number $e(E)$ of that vector bundle, we obtain the following 
\begin{corollary}
For generic stable quasimap data $\mathcal{D}$, we have 
\begin{equation}
    \GLSM(\CP^k,\CP^n, \mathcal{D}) = e(E(\mathcal{D})).
\end{equation}
\end{corollary}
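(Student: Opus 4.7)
The strategy is to deduce $\GLSM(\CP^k,\CP^n,\mc{D})=e(E(\mc{D}))$ directly from the preceding proposition together with the standard fact that a transverse holomorphic section of a holomorphic vector bundle over a compact complex manifold has zero set (counted without signs, since complex structures supply canonical orientations) whose cardinality equals the integrated Euler class.

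First I would unpack stability. Finiteness of $\QMap(\CP^k,\CP^n,\mc{D})$, combined with the virtual dimension formula
\begin{equation}
    \dim_\mr{vir}\QMap(\CP^k,\CP^n,\mc{D})=\dim\Var-\rk(E),
\end{equation}
forces $\rk(E)=\dim\Var\qefed r$, so $\sigma$ is a holomorphic section of a rank $r$ bundle over a compact complex $r$-fold. Next I would establish that, after a generic perturbation of $\mc{D}$, the section $\sigma$ is transverse to the zero section $E_0$. The key input is that $\sigma$ depends linearly on the coefficients of the hyperplanes $H_i^j$ cutting out $\widetilde{c_i^Y}$, and the family obtained by varying these hyperplanes (together with moving the source cycles $c_i^X$ when they have positive dimension) globally generates $E$. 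A Kleiman--Bertini style argument then yields that for generic $\mc{D}$ the zero locus $\sigma^{-1}(E_0)\subset\Var$ is finite and transverse, so
\begin{equation}
    \#\sigma^{-1}(E_0)=\int_{\Var}e(E)=e(E).
\end{equation}

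Finally I would argue that for generic data the projection $p\colon\sigma^{-1}(E_0)\to\QMap(\CP^k,\CP^n,\mc{D})$ is a bijection. Surjectivity is the content of the preceding proposition. Injectivity can fail only if, for some quasimap $\ul{f}$ being counted, the intersection $\ul{f}(c_i^X)\cap\widetilde{c_i^Y}$ contains more than one point for some $i$; this is a positive-codimension condition on $\mc{D}$ and hence is avoided generically. Chaining the identifications gives $\GLSM(\CP^k,\CP^n,\mc{D})=\#\QMap(\CP^k,\CP^n,\mc{D})=\#\sigma^{-1}(E_0)=e(E)$.

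The main obstacle is the transversality step: admissible sections of $E$ form a distinguished subspace of $\Gamma(\Var,E)$, parametrized by the tuple of hyperplanes and source cycles rather than by arbitrary global sections, so one must verify that this subspace is already large enough to contain a transverse representative. Concretely, I expect this to reduce to checking that on the open locus of $\Var$ where the quasimap does not proper-quasimap at any $x_i\in c_i^X$ (i.e., $x_i\notin\FR(\ul{f})$), the relative evaluation maps $\ev_i$ are submersive onto $\CP^n$; this is the input that makes the Kleiman-type genericity argument go through and is the same transversality hypothesis that implicitly underlies classical Bertini-type enumerative counts.
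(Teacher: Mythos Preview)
Your approach is correct and aligns with the paper's, which simply states the corollary as an immediate consequence of the standard fact that the Euler number of a vector bundle counts the zeros of a generic section. Your proposal is considerably more detailed than the paper's one-line justification---in particular, your Kleiman--Bertini transversality argument and the verification that $p$ is generically bijective fill in steps the paper leaves implicit.
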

The Euler number can be readily computed from the fact that the Euler class $[\tilde{e}]$ of a complex vector bundle is its  top Chern class and that the Euler class 
is multiplicative under the Whitney sum of vector bundles 
$$\tilde{e}(E \oplus E') = \tilde{e}(E) \wedge \tilde{e}(E').$$ 

Let us denote by $H\in \Omega^2_{cl}(\QMap_d(\CP^k,\CP^n))$ 
a representative of the class Poincar\'e dual to a hyperplane in $\QMap_d(\CP^k,\CP^n)=\CP^{\n_{k,n,d}}$ and by $h\in \Omega^2_{cl}(\CP^k)$ -- a representative of the class Poincar\'e dual to a hyperplane in $\CP^k$. 
Then the first Chern class of $L_d$ is 
\begin{equation}
    c_1(L_d) = 
    H+h.
\end{equation}
(We suppress in the notation the pullbacks along the projections from $\QMap_d(\CP^k,\CP^n)\times \CP^k$ to the first or second summand.)
In particular, we have the following result. 
\begin{corollary}
For stable quasimap counting data $\mathcal{D}$, the $\GLSM$ number is given by the following integral
    \begin{equation}\label{eq: GLSM via Euler class}
        \GLSM(\CP^k,\CP^n,\mathcal{D}) = \int_{\Var} \tilde{e}(E) = \int_{\Var}\bigwedge_{i=1}^l(
        H+d\, h_i
        )^{\codim c_i^Y}.
    \end{equation}
    Here $h_i=p_i^* h$.
\end{corollary}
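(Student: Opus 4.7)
The corollary is essentially a direct consequence of the preceding corollary together with two standard facts from Chern--Weil / intersection theory, so my goal is to assemble the pieces cleanly rather than to prove anything new.

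\textbf{Step 1: reduce to an Euler class integral.} By the preceding corollary, for generic stable quasimap data we have $\GLSM(\CP^k,\CP^n,\mc{D})=e(E(\mc{D}))$, the Euler number of the complex vector bundle $E\to\Var$. Since $\Var$ is a compact complex manifold and $E$ is a complex vector bundle whose rank equals $\sum_i\codim c_i^Y$, the Euler number equals the integral over $\Var$ of a de Rham representative of the top Chern class,
\begin{equation*}
e(E)=\int_{\Var}\tilde{e}(E),\qquad \tilde{e}(E)=c_{\rk E}(E).
\end{equation*}
This is the first equality in the statement.

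\textbf{Step 2: decompose $\tilde e(E)$ via Whitney sum.} By definition,
\begin{equation*}
E=\bigoplus_{i=1}^{l}\,(p_i^*L_d)^{\oplus\codim c_i^Y}
\end{equation*}
is a direct sum of line bundles. Multiplicativity of the top Chern class under Whitney sums of line bundles (equivalently, multiplicativity of the Euler class under direct sums of oriented real vector bundles, applied to the underlying real bundles) gives
\begin{equation*}
\tilde e(E)=\bigwedge_{i=1}^{l}c_1(p_i^*L_d)^{\codim c_i^Y}.
\end{equation*}

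\textbf{Step 3: identify $c_1(p_i^*L_d)$.} Let $\pi_{\QMap}$ and $\pi_{\CP^k}$ be the projections of $\QMap_d(\CP^k,\CP^n)\times\CP^k$ to its two factors. Since $L_d=\pi_{\QMap}^*\OO(1)\otimes\pi_{\CP^k}^*\OO(d)$, additivity of $c_1$ under tensor product of line bundles and naturality under pullback yield
\begin{equation*}
c_1(L_d)=\pi_{\QMap}^*H+d\,\pi_{\CP^k}^*h,
\end{equation*}
which, after the implicit pullbacks are suppressed, is the formula $H+d\,h$ in the text (the coefficient of $h$ should read $d$, matching the statement). Pulling back along $p_i\colon\Var\to\QMap_d(\CP^k,\CP^n)\times\CP^k$, and noting that $p_i$ is the identity in the $\QMap$-factor while it is the inclusion $c_i^X\hookrightarrow\CP^k$ in the second factor (so $p_i^*\pi_{\CP^k}^*h=h_i$), we obtain
\begin{equation*}
c_1(p_i^*L_d)=H+d\,h_i.
\end{equation*}

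\textbf{Step 4: assemble.} Substituting Step~3 into Step~2 and then into Step~1 gives
\begin{equation*}
\GLSM(\CP^k,\CP^n,\mc{D})=\int_{\Var}\bigwedge_{i=1}^{l}(H+d\,h_i)^{\codim c_i^Y},
\end{equation*}
as claimed. I do not foresee a serious obstacle here: genericity of $\mc{D}$ guarantees that the section $\sigma$ is transverse to the zero section so that the algebraic count of zeros equals the Euler number; the remainder is formal manipulation of Chern classes. The one subtlety worth double-checking is the coefficient of $h_i$ (the statement has $d$, not $1$), which is forced by $L_d=\OO(1)\boxtimes\OO(d)$ and $c_1(\OO(d)_{\CP^k})=d\,h$.
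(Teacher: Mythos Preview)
Your proof is correct and follows exactly the approach the paper itself sketches in the paragraph immediately preceding the corollary: apply the previous corollary to identify $\GLSM$ with the Euler number, use multiplicativity of the Euler class under Whitney sums to split $\tilde e(E)$ into a product of first Chern classes of the line bundle summands $p_i^*L_d$, and then read off $c_1(p_i^*L_d)=H+d\,h_i$ from $L_d=\OO(1)\boxtimes\OO(d)$. You also correctly catch that the displayed formula $c_1(L_d)=H+h$ just above the corollary is a typo for $H+d\,h$, consistent with the $d$ that appears in the corollary itself.
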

\begin{remark}
    Sometimes we will work with the space $\widetilde{\Var} = \QMap(X,Y) \times X^{\times l}$ instead. Then, for arbitrary representatives $\delta_1,\ldots,\delta_l$ of classes Poincar\'e dual to $c_1^X,\ldots,c_l^X$, we have 
    \begin{equation}
        \begin{split}
            \GLSM(\CP^k,\CP^n,\mathcal{D}) &= \int_{\widetilde{\Var}}\pi_1^*\delta_1\wedge\cdots \wedge \pi_l^*\delta_l  \wedge \tilde{e}(E) \\ 
            &= \int_{\widetilde{\Var}}\pi_1^*\delta_1\wedge\cdots \wedge \pi_l^*\delta_l\wedge \bigwedge_{i=1}^l(
                H+d\, \pi_i^* h
                )^{\codim c_i^Y}.
        \end{split}
    \end{equation}
    where $\pi_i\colon \widetilde{\Var}\ra \CP^k$ is the projection to the $i$th copy of $\CP^k$.
\end{remark}

\subsection{\texorpdfstring{$\GLSM$}{QM} numbers as sums of map counts and proper quasimap numbers}
Let us consider again the general case of quasimaps $f\colon X \not\to Y$. Imposing stable counting data $\mathcal{D}$ on the decomposition 
$$\QMap(X,Y) = \Map(X,Y) \sqcup \QMap^{\pr}(X,Y),$$
we obtain 
\begin{equation}
    \QMap(X,Y,\mathcal{D}) = \Map(X,Y,\mathcal{D}) \sqcup \QMap^{\pr}(X,Y,\mathcal{D}).
\end{equation} 
This immediately implies the following statement: 
\begin{equation}\label{eq: decomp1}
    \GLSM(X,Y,\mathcal{D}) = \KM(X,Y,\mathcal{D}) + \#\QMap^{\pr}(X,Y,\mathcal{D}),
\end{equation}
where $\KM(X,Y,\mathcal{D})$ is the Kontsevich-Manin number defined in (\ref{N^enum KM}).
\begin{definition}
    Given stable quasimap counting data $\mathcal{D}$ we denote the number of proper quasimaps in $\QMap(X,Y,\mathcal{D})$ by
    \begin{equation}
        \PQM(X,Y,\mathcal{D}) \defeq \#\QMap^{\pr}(X,Y,\mathcal{D}).
    \end{equation}
\end{definition}
We restate \eqref{eq: decomp1} in this terminology: 
\begin{proposition}\label{prop: QMap count stable}
    For $X,Y$ as in Section \ref{sec:setup} and stable quasimap counting data $\mathcal{D}$, the $\GLSM$ number is the sum of the Kontsevich-Manin and the proper quasimap numbers, 
    \begin{equation}
        \GLSM(X,Y,\mathcal{D}) = \KM(X,Y,\mathcal{D}) + \PQM(X,Y,\mathcal{D}).
    \end{equation}
\end{proposition}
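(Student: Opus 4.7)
The plan is to read the statement as essentially a counting identity that is forced by the set-level disjoint union already recorded as \eqref{eq: decomp1}. The proof should therefore be a short bookkeeping argument; the only nontrivial content is checking that stability of $\mc{D}$ makes all three numbers in the formula simultaneously well-defined as finite cardinalities.

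First, I would recall the tautological partition
\[
\QMap(X,Y) = \Map(X,Y)\sqcup \QMap^{\pr}(X,Y),
\]
whose two strata are distinguished by whether $\FR(f)=\varnothing$ or not. Intersecting each stratum with the constraint set $\QMap(X,Y,\mc{D})$ defined in \eqref{enum prob B eq}, one obtains
\[
\QMap(X,Y,\mc{D}) = \big(\Map(X,Y)\cap \QMap(X,Y,\mc{D})\big)\sqcup \QMap^{\pr}(X,Y,\mc{D}),
\]
and I would identify the first piece with $\Map(X,Y,\mc{D})$ by unpacking the definitions: an element of $\QMap(X,Y,\mc{D})$ with empty proper quasimap locus is a genuine holomorphic map $f\colon X\ra Y$ whose homotopy class is dictated by $\mc{P}$ and which satisfies $f(c_i^X)\cap c_i^Y\neq \varnothing$ for all $i$. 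That is exactly the set whose cardinality defines $\KM(X,Y,\mc{D})$ as in Enumerative Problem \ref{enum prob: counting hol maps}.

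Next I would take cardinalities. Since $\mc{D}$ is stable, $\QMap(X,Y,\mc{D})$ is finite by assumption, hence both subsets on the right of the disjoint decomposition are finite. Counting gives
\[
\GLSM(X,Y,\mc{D}) \;=\; \#\Map(X,Y,\mc{D}) \;+\; \#\QMap^{\pr}(X,Y,\mc{D}) \;=\; \KM(X,Y,\mc{D})+\PQM(X,Y,\mc{D}),
\]
which is the claim.

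The only potential obstacle lies in the $\KM$ side: by the convention stated in Section \ref{ss holom maps: formulation of the problem}, $\KM$ is set to $0$ when the space of holomorphic maps satisfying the conditions has continuous moduli. I would observe that this case cannot occur under stability, because stability forces $\QMap(X,Y,\mc{D})$ to be finite and a positive-dimensional family of holomorphic maps inside it would immediately contradict finiteness; hence the naive count and the convention agree in the stable regime, and no further discussion is needed.
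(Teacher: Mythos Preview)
Your proof is correct and follows essentially the same line as the paper: the proposition is explicitly introduced there as a restatement of \eqref{eq: decomp1}, which in turn is obtained by intersecting the tautological decomposition $\QMap(X,Y)=\Map(X,Y)\sqcup\QMap^{\pr}(X,Y)$ with the constraint set and counting. Your additional remark reconciling the $\KM=0$ convention for continuous moduli with the stability hypothesis is a nice clarification that the paper leaves implicit.
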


\section{Counting quasimaps in non-stable cases via excess intersection theory
}
\subsection{Quasi-stable case}
As we will see in the next section, 
in many examples the counting data turns out to be \emph{unstable}, i.e.\ the zero locus of the section $\sigma$ has components of positive dimension. 
 In this case 
the $\GLSM$ number can be defined by the integral \eqref{eq: GLSM via Euler class}, i.e.\ 
\begin{equation}
    \GLSM(X,Y,\mathcal{D}) \defeq \int_{\Var} \tilde{e}(E). 
\end{equation}
The right hand side no longer has an interpretation as counting the zeros of the section $\sigma$. 
However, one can still make sense of the formula 
$$\GLSM(X,Y,\mathcal{D}) = \KM(X,Y,\mathcal{D}) + \PQM(X,Y,\mathcal{D})$$ 
via \emph{excess intersection theory} \cite{Fulton} (see also the nice introduction in \cite[Chapter 8]{Katz}). We denote by ${\rm Z}(\sigma)$ the zero set  of the section $\sigma$ restricted to proper quasimaps. Then, suppose that 
$$Z  
\subseteq {\rm Z}(\sigma)$$  
is a connected component,
possibly of positive dimension.
If $Z$ is a smooth submanifold of $\Var$,  we can define the \emph{excess bundle} on $Z$ by
\begin{equation}
    B_Z = \frac{E\rvert_Z}{\sigma_*(N_{Z})},
\end{equation} 
where $N_Z$ is the normal bundle of $Z$ in $\Var$ that we identify as a subbundle of $E$ via the section $\sigma$. We give a special name to configurations where all components of the zero set are of this form: 
\begin{definition}
    In the situation of the Enumerative Problem \ref{enumprob QMap}, suppose that $\KM(X,Y,\mathcal{D})$ is finite and ${\rm Z}(\sigma)$ 
    is a union of connected components that are submanifolds defined by transverse intersections, 
$$
{\rm Z}(\sigma)
= \bigsqcup_i Z_i.$$ 
Then, we call the counting data $\mathcal{D}$ \emph{quasi-stable}.
\end{definition}
For quasi-stable counting data, we have the following result.
\begin{proposition}
   Suppose the counting data is quasi-stable and denote $Z_i$ the connected components of ${\rm Z}(\sigma)$. 
   Let $c(B_{Z_i})$ denote the total Chern class of the bundle $B_{Z_i}$. Then
   $$\GLSM(X,Y,\mathcal{D}) = \int_{\Var}\tilde{e}(E) =  \KM(X,Y,\mathcal{D}) + \sum_i \int_{Z_i} c(B_{Z_i}). $$ 
\end{proposition}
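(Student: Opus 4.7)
The plan is to apply Fulton's excess intersection principle to localize $\int_{\Var}\tilde e(E)$ onto the connected components of the full zero set of $\sigma$, then repackage the contributions into the isolated (map) part and the excess bundle integrals over the proper components.

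First I would split the complete zero set $\sigma^{-1}(E_0)\subset \Var$ as a disjoint union
\[
\sigma^{-1}(E_0)\;=\;M\;\sqcup\;\mathrm{Z}(\sigma),
\]
where $M$ consists of those $(f,x_1,\ldots,x_l)$ with $f$ an honest holomorphic map (i.e.\ $\FR(f)=\varnothing$) and $\mathrm{Z}(\sigma)$ is the proper quasimap part as in the statement. The map locus is open in $\Var$ because the condition $\FR(f)\neq\varnothing$ is Zariski-closed on $\QMap(X,Y)$. The quasi-stability hypothesis then forces $M$ to be a finite set of transverse isolated points (else $\KM(X,Y,\mc D)$ would be infinite) and forces $\mathrm{Z}(\sigma)=\bigsqcup_i Z_i$ to be a disjoint union of smooth submanifolds along which $\sigma_{*}\colon N_{Z_i}\to E|_{Z_i}$ is a subbundle inclusion. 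Consequently $M$ and the $Z_i$ are distinct connected components of the total zero locus.

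Next I would invoke the standard excess intersection identity (\cite{Fulton}; see also \cite[Ch.\ 8]{Katz}): for a smooth section $\sigma$ of a vector bundle $E\to\Var$ whose zero locus is a disjoint union of smooth submanifolds $Z_\alpha$ meeting $\sigma$ transversely in the excess sense, one has
\[
\int_{\Var}\tilde e(E)\;=\;\sum_\alpha\int_{Z_\alpha}\tilde e(B_{Z_\alpha}),
\]
with $B_{Z_\alpha}=E|_{Z_\alpha}/\sigma_{*}(N_{Z_\alpha})$. At the de Rham level this comes from writing $\tilde e(E)=\sigma^{*}\tau_E$ for the Mathai--Quillen Thom form $\tau_E$; the pullback is supported arbitrarily close to $\sigma^{-1}(E_0)$ and near each $Z_\alpha$ splits as the Thom class of $N_{Z_\alpha}$ (integrating transversally to $1$) wedged with a Chern--Weil representative of $\tilde e(B_{Z_\alpha})$ along $Z_\alpha$. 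Applied to $M$, each isolated transverse point has $B_{\{p\}}=0$ and contributes $1$, summing to $\KM(X,Y,\mc D)$; applied to $\mathrm{Z}(\sigma)$, the contribution is $\sum_i\int_{Z_i}\tilde e(B_{Z_i})$.

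To match the form of the identity in the statement, I would replace $\tilde e(B_{Z_i})$ by the total Chern class $c(B_{Z_i})$. Since the enumerative problem has vanishing virtual dimension, $\rk(E)=\dim\Var$, and as $Z_i$ is smooth of codimension $\rk(E)-\rk(B_{Z_i})$ in $\Var$, one obtains $\rk(B_{Z_i})=\dim Z_i$. Hence $\tilde e(B_{Z_i})=c_{\dim Z_i}(B_{Z_i})$ is precisely the top-degree piece of $c(B_{Z_i})$, while the lower-degree pieces integrate to zero over $Z_i$ for dimensional reasons; so $\int_{Z_i}\tilde e(B_{Z_i})=\int_{Z_i}c(B_{Z_i})$. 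Combining with the $\KM$ contribution yields the claim. The only nontrivial ingredient is the excess intersection formula; everything else is bookkeeping. The subtle check is that $M$ and $\mathrm{Z}(\sigma)$ form an open-and-closed decomposition of the full zero locus -- this is where quasi-stability (rather than mere stability) is essential, since the components $Z_i$ must be simultaneously smooth and excess-transverse to $\sigma$.
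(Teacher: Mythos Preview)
Your proposal is correct and follows essentially the same route as the paper: both reduce the identity to Fulton's excess intersection formula, decomposing the zero locus of $\sigma$ into the isolated map solutions (each contributing $1$) and the smooth proper components $Z_i$ (each contributing the excess bundle class). The paper phrases everything in Chow-ring language, citing \cite[Prop.~9.1.1]{Fulton} to obtain $c(B_{Z_i})\frown[Z_i]$ directly, whereas you give a de Rham/Mathai--Quillen gloss and then separately argue that $\int_{Z_i}\tilde e(B_{Z_i})=\int_{Z_i}c(B_{Z_i})$ via the rank count $\rk B_{Z_i}=\dim Z_i$; these are equivalent packagings of the same computation.
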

\begin{proof}
    For a holomorphic vector bundle $E \to \Var$, the self-intersection class of the zero section is the the Euler class of $\Var$, 
    
$$[E_0] \cdot [E_0] = c_{\rk\, E } \frown [\Var] \in A_0(\Var),$$ see e.g. \cite[Example 2.9]{Aluffi}. Here $c_{\rk\, E }$ denotes the $\rk\, E$ Chern class of $E$. Since the image $\sigma(\Var)$ of a holomorphic section $\sigma\colon \Var \to E$ is rationally equivalent to the zero section, the intersection product $[\sigma(\Var)] \cdot [E_0] = c_{\rk\, E} \frown [\Var]$ as well. Then, we have that 
\begin{equation}\label{excess proof eq}
\sigma(\Var) \cap E_0 = \widehat\Map(X,Y,\mathcal{D}) \sqcup \bigsqcup_{i} Z_i.
\end{equation}
In formula (\ref{excess proof eq}), $\widehat\Map(X,Y,\mathcal{D})$ is the intersection of the zero set of $\sigma$ with $\Map_d(X,Y)\times \prod_i c_i^X$.
 By a B\'ezout-like theorem, on the one hand the degree of $[\sigma(\Var)] \cdot [E_0]$ is the quasimap number $\QM(X,Y,\mathcal{D})$. On the other hand, the intersection product $[\sigma(\Var)] \cdot [E_0]$ decomposes (in Fulton's terminology, this is ``canonical decomposition'') as the sum of pieces supported on connected components in the r.h.s. of (\ref{excess proof eq}). From \cite[Proposition 9.1.1]{Fulton}, the contribution to $[\sigma(\Var)] \cdot [E_0]$ supported on any smooth component $Z_i$ is precisely $$c(E|_{Z_i})c(N_{Z_i})^{-1}\frown [Z_i] = c(B_{Z_i})\frown [Z_i].$$ 
 Passing to degrees of $0$-cycles and summing over connected components, we obtain the statement.
 \end{proof}
Then we define the proper quasimap number of $Z_i$ by
\begin{equation}
   \PQM(X,Y,\mathcal{D};Z_i)\defeq  \int_{Z_i} c(B_{Z_i})\label{eq: def freckle number stratum}
\end{equation}  and the total proper quasimap number as 
\begin{equation} \PQM(X,Y,\mathcal{D}) \defeq \sum_i \PQM(X,Y,\mathcal{D};Z_i).\label{eq: def freckle number quasi stable}
\end{equation}
In this way, we obtain the generalization of Proposition \ref{prop: QMap count stable}:
\begin{proposition}
    For X,Y as in Section \ref{sec:setup} and quasi-stable counting data $\mathcal{D}$, we have 
    \begin{equation}
        \GLSM(X,Y,\mathcal{D}) = \KM(X,Y,\mathcal{D}) + \PQM(X,Y,\mathcal{D}). 
    \end{equation}
\end{proposition}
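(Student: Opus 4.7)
The plan is to read this proposition as essentially a repackaging of the preceding excess intersection formula combined with the definitions \eqref{eq: def freckle number stratum} and \eqref{eq: def freckle number quasi stable}. The genuine content has already been established in the previous proposition, which gave $\int_{\Var}\tilde{e}(E) = \KM(X,Y,\mathcal{D}) + \sum_i \int_{Z_i}c(B_{Z_i})$ under the quasi-stability hypothesis that the zero locus decomposes into smooth components meeting transversally. What remains is to match notation.

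First, I would invoke the extended definition of $\GLSM$ in the non-stable setting, namely $\GLSM(X,Y,\mathcal{D}) \defeq \int_{\Var}\tilde{e}(E)$, which is well-posed here because quasi-stability guarantees the components $Z_i$ are smooth with well-defined excess bundles $B_{Z_i}$. Second, I would apply the preceding proposition to obtain
\begin{equation}
\GLSM(X,Y,\mathcal{D}) = \KM(X,Y,\mathcal{D}) + \sum_i \int_{Z_i} c(B_{Z_i}).
\end{equation}
Third, definition \eqref{eq: def freckle number stratum} identifies each summand $\int_{Z_i}c(B_{Z_i})$ with $\PQM(X,Y,\mathcal{D};Z_i)$, and summation together with \eqref{eq: def freckle number quasi stable} gives $\sum_i \PQM(X,Y,\mathcal{D};Z_i) = \PQM(X,Y,\mathcal{D})$, finishing the argument.

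Since the substantive excess-intersection argument has already been carried out in the previous proposition, there is no serious obstacle at this stage. The only subtlety worth flagging is the clean separation of the zero locus $\sigma(\Var)\cap E_0$ into the isolated-points piece $\widehat\Map(X,Y,\mathcal{D})$ (whose contribution is $\KM(X,Y,\mathcal{D})$ because each transverse zero contributes $1$) and the higher-dimensional components $Z_i$; this separation is precisely what the quasi-stability hypothesis delivers, so the decomposition and the application of Fulton's canonical decomposition \cite[Proposition 9.1.1]{Fulton} go through term by term, and no further intersection-theoretic input is needed.
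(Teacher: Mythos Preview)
Your proposal is correct and matches the paper's approach exactly: the paper does not even give a separate proof of this proposition, presenting it instead as an immediate consequence (``In this way, we obtain the generalization of Proposition~\ref{prop: QMap count stable}'') of the preceding excess-intersection proposition combined with the definitions \eqref{eq: def freckle number stratum} and \eqref{eq: def freckle number quasi stable}. Your unpacking of this notation-matching is precisely what is needed.
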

We remark that because of the splitting principle, for any component $Z \subset \QMap(X,Y,\mathcal{D})$ we can compute the total Chern class $c(B_{Z})$ appearing in  \eqref{eq: def freckle number stratum} as 
\begin{equation}
    c(B_Z) = \frac{c(E\rvert_Z)}{c(N_{Z})} = \frac{c(E\rvert_Z)c(Z)}{c(\Var)\rvert_Z}.\label{eq: compute chern class B}
\end{equation}

 For later convenience, let us write out $c(B_Z)$ in detail: 
The external product $E_1 \boxtimes E_2 \to X_1 \times X_2$ of two vector bundles  $E_i\to X_i$, $i = 1,2$, is defined by the tensor product $\pr_1^*E_1 \otimes \pr_2^*E_2$, where $\pr_i \colon X_1 \times X_2 \to X_i$ is the projection to the first, respectively the second factor.
Now recall from Section \ref{sec:answer_from_intersection_theory} that for a given quasimap counting data $\mathcal D$, we defined the space of variables 
\[
\Var = \QMap_d(\CP^k,\CP^n) \times c_1^X \times \dots \times c_l^X
\]
and the vector bundle of equations $E \to \Var$  
\begin{equation}
    E = \bigoplus_{i=1}^l \oplus_{j=1}^{\codim c_i^Y} p_i^*L_d.
\end{equation}
By the Whitney sum formula and naturality of the total Chern class, one has 
\begin{equation}
     c(E\rvert_Z) = \prod_{i=1}^l \prod_{j=1}^{\codim c_i^Y} p_i^*c(L_d)\rvert_Z,
\end{equation}
where
\begin{equation}
    p_i^* c(L_d)|_Z=(1+H+d\, h_i)|_Z.
\end{equation}
Moreover, since $\QMap_d(\CP^k,\CP^n) = \CP^{\n_{k,n,d}}$,
\begin{equation}
    T\Var = \pi^*_{\CP^{\n_{k,n,d}}} T\CP^{\n_{k,n,d}}\oplus  \bigoplus_{i=1}^l \pi^*_{c_i^X} Tc_i^X,
\end{equation} 
where $\pi_{\CP^{n_{k,n,d}}} \colon \Var \to \CP^{\n_{k,n,d}}$ and $\pi_{c_i^X} \colon \Var \to c_i^X$ denote the projections to the components of $\Var$.
Hence,
\begin{equation}
    \begin{split}
    c(B_Z) &= \frac{\prod_{i=1}^l \prod_{j=1}^{\codim c_i^Y} 
    (1+H+d\, h_i)|_Z
    \cdot c(Z) }{\pi^*_{\CP^{\n_{k,n,d}}}c(\CP^{\n_{k,n,d}} ) \big\rvert_Z\ \prod_{i=1}^l \pi^*_{c_i^X} c(c_i^X)\big\rvert_Z}
     \\
    &=\frac{\prod_{i=1}^l \prod_{j=1}^{\codim c_i^Y} 
    (1+H+d\, h_i)|_Z
    \cdot c(Z) }{(1+H)^{\n_{k,n,d}+1} \big\rvert_Z\ \prod_{i=1}^l \pi^*_{c_i^X} c(c_i^X)\big\rvert_Z}.
    \label{eq:BZ}
\end{split}
\end{equation}

\subsection{General unstable case. First appearance of Segre classes.} \label{sec: unstable}
In the general case, connected components $Z \subset {\rm Z}(\sigma)$ need not be smooth submanifolds of $\Var$. 
In this case, $Z$ does not have a well-defined normal bundle 
and hence we cannot apply the formulas given above directly. 
To generalize them to the unstable case, first note that if $\mc{D}$ is quasi-stable we can rewrite \eqref{eq: def freckle number stratum} as 
$$\PQM(X,Y,\mathcal{D};Z_i)=  \int_{Z_i} c(B_{Z_i}) = \int_{\Var} c(E)c(N_{Z_i})^{-1}\eta_{Z_i},$$
where $\eta_{Z_i}$ denotes a representative of the class Poincar\'e dual to $Z_i$. For the unstable case, $c(N_{Z_i})^{-1}\eta_{Z_i}$ should be replaced with the Segre class $s(Z_i,\Var)$ of $Z_i$ in $\Var$. Recall (see e.g. \cite{Aluffi}) that for a variety $Y$, and a closed embedding $X \subset Y$, there is a Segre class $s^\vee(X,Y) \in A_*(X)$ (the Chow ring of $X$)
uniquely characterized by the properties that 
\begin{enumerate}[i)] 
\item  for regular embeddings $s^\vee(X,Y) = c^\vee(N_XY)^{-1} \cap [X]$ and 
\item for $f \colon Y' \to Y$ a proper, onto, birational morphism of varieties, we have 
\begin{equation} \label{eq: segre under f} s^\vee(X,Y) = (f|_{f^{-1}(X)})_*s^\vee(f^{-1}(X),Y'). \end{equation}
\end{enumerate} 
We will denote without further comment\footnote{Our notation is somewhat opposite to the usual algebraic geometry conventions because we work in the differential form formalism.} by $s(Z,\Var)\in \Omega^\bullet(\Var)$ a representative of a class Poincar\'e dual to the pushforward of the Segre class to the Chow ring of $\Var$. We then define the $\PQM$ number of $Z_i$ as 
\begin{equation}
    \PQM(X,Y,\mathcal{D};Z_i) \defeq \int_{\Var} c(E)s(Z_i,\Var). \label{eq: def freckle number singular stratum}
\end{equation}
Defining the total $\PQM$ number again by \eqref{eq: def freckle number quasi stable}, we see that with this definition Proposition \ref{prop: QMap count stable} extends to the unstable case.  


In general, computation of Segre classes is a hard problem, 
we will defer examples of such computations to the next paper.

\section{Enumerative examples}
In this section we consider various examples for the count of quasimaps and the decomposition of the $\GLSM$ number into holomorphic maps and proper quasimaps. 
We consider a number of examples. In these examples we will meet 1- and 2-dimensional sources. For 2-dimensional sources we also study 1-dimensional cycles in the source. We will meet both freckles and scars. Examples range in complexity. In the simplest cases, it is just computation of the number of freckle configurations. In more complicated cases, we have to apply excess intersection theory. In the simplest (quasi-stable) cases it easily doable. In more complicated cases we have to invoke Segre classes. However, there are two examples where $\PQL$ locus is so peculiar, see Figures \ref{fig:non_triv_intersection_strata_64} and \ref{Fig 12} that we postpone computation to a subsequent paper. 

\subsection{Quasimaps \texorpdfstring{$\CP^1 \qra \CP^2$}{CP1 -> CP2}} 
Let us first consider quasimaps  $\CP^1 \qra \CP^2$ of degree $1$ and $2$, 
so that
\begin{align*}
    \dim \QMap_1(\CP^1,\CP^2) &= 5,\\
    \dim \QMap_2(\CP^1,\CP^2) &= 8.
\end{align*}
\subsubsection{\texorpdfstring{$0+2 = 2$}{0+2=2} - An example with no holomorphic maps} \label{sss: example 0+2=2}
Take $
d=1$. The minimal amount of cycles we need for a stable quasimap counting data  is $l=3$, with $c^X_i$ a point for $i=1,2$, $c_3^X = \CP^1$ and $c_i^Y$ a point for $i=1,2,3$, cf.\ Figure \ref{fig: 0 + 2 = 2}.
\begin{figure}[H]
    \begin{subfigure}[t]{0.3\textwidth}
    \centering
        \begin{tikzpicture}[baseline={([yshift=-.5ex]current bounding box.center)},scale=.6]
            \draw (0,0) circle (3);
        
            \coordinate[point] (c2) at (100:2);
            \coordinate[point] (c1) at (240:1.5);
        
            \coordinate[point] (x) at (0:1.5);
        
            \begin{scope}[every node/.style={transform shape}]
                \draw (x) circle (.5);
                \draw pic at (x) {H={45}{.5}};
                \draw pic at (x) {H={-45}{.5}};
        
                \draw pic at (c1) {H={45}{.5}};
                \draw pic at (c1) {H={-45}{.5}};
        
                \draw pic at (c2) {H={45}{.5}};
                \draw pic at (c2) {H={-45}{.5}};
            \end{scope}
        
            \node[left] at ($(c1) + (-.5,0)$) {$c_1^X$};
            \node[left] at ($(c2) + (-.5,0)$) {$c_2^X$};
        
            \node[below] at ($(x) + (0,-.5)$) {$x$};
        
        \end{tikzpicture} 
    \caption{Pictorial representation of the enumerative problem. }
    \label{fig: 0 + 2 = 2}
    \end{subfigure}
    \begin{subfigure}[t]{0.6\textwidth}
    \centering
    \begin{tikzpicture}[baseline={([yshift=-.5ex]current bounding box.center)},scale=.6]
        \draw (0,0) circle (3);

        \coordinate[point] (c2) at (100:2);
        \coordinate[point] (c1) at (240:1.5);
        \coordinate[point] (x) at (0:1.5);

        \begin{scope}[every node/.style={transform shape}]
            \draw (x) circle (.5); 
            
            \draw pic at (x) {H={45}{.5}};
            \draw pic at (x) {H={-45}{.5}};
            
            \draw pic at (c1) {H={45}{.5}};
            \draw pic at (c1) {H={-45}{.5}};

            \draw pic at (c2) {H={45}{.5}};
            \draw pic at (c2) {H={-45}{.5}}; 
        \end{scope}

        \node[left] at ($(c1)+(-.5,0)$) {$c_1^X$};
        \node[left] at ($(c2)+(-.5,0)$) {$c_2^X$};
        \node[below] at ($(x)+(0,-.5)$) {$x$};
        
        \node[red] (fr) at (0,0) {$*$};

        \node (xx) at (x) {$\phantom{*}$}; 
        \draw[->,thick,red] (fr) to[bend right] ($(c1)+(90:.3)$);
        \draw[->,thick,red] (xx) to[bend left] ($(c1)+(0:.3)$);
        
      \end{tikzpicture}\hfill
      \begin{tikzpicture}[baseline={([yshift=-.5ex]current bounding box.center)},scale=.6]
        \draw (0,0) circle (3);

        \coordinate[point] (c2) at (100:2);
        \coordinate[point] (c1) at (240:1.5);
        \coordinate[point] (x) at (0:1.5);

        \begin{scope}[every node/.style={transform shape}]
            \draw (x) circle (.5); 
            
            \draw pic at (x) {H={45}{.5}};
            \draw pic at (x) {H={-45}{.5}};
            
            \draw pic at (c1) {H={45}{.5}};
            \draw pic at (c1) {H={-45}{.5}};

            \draw pic at (c2) {H={45}{.5}};
            \draw pic at (c2) {H={-45}{.5}}; 
        \end{scope}

        \node[left] at ($(c1)+(-.5,0)$) {$c_1^X$};
        \node[left] at ($(c2)+(-.5,0)$) {$c_2^X$};
        \node[below] at ($(x)+(0,-.5)$) {$x$};
        
        \node[red] (fr) at (0,0) {$*$};

        \node (xx) at (x) {$\phantom{*}$}; 
        
        \draw[->,thick,red] (fr) to ($(c2)+(-90:.3)$);
        \draw[->,thick,red] (xx) to[bend right] ($(c2)+(0:.3)$);
      \end{tikzpicture}
     \caption{The two freckle contributions to the $\QM$ number (the freckle is pictorially denoted by a star).
     }
    \label{fig: 0 + 2 = 2 2}
    \end{subfigure}
        \caption{Configuration of source and target cycles in the $0+2 = 2$ example. We have two fixed points $c_1^X = (0:1) = 0$, $c_2^X = (1:0) =\infty$ in the source and one running point $x$ (the circle around $x$ indicates it is running.) Number of lines $\lambda_i$ through a source point denotes the codimension of corresponding target cycle $c_i^Y$ which is given as intersection of $\lambda_i $ hyperplanes in the target $\CP^2$ (in this case, $\lambda_i \equiv 2$.) }
\end{figure}
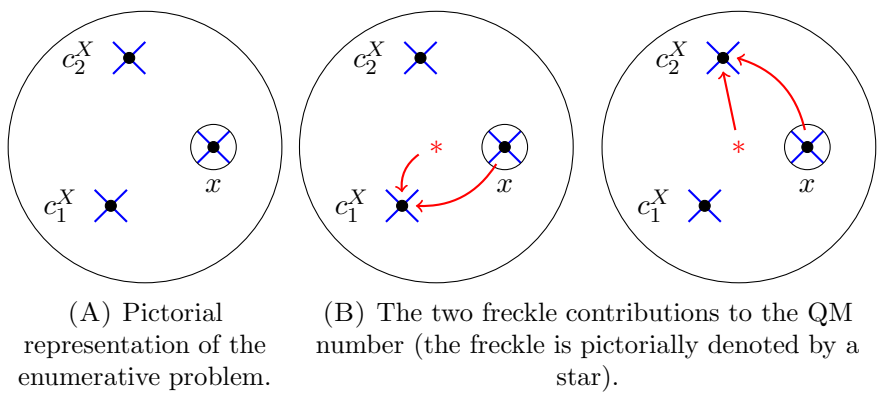

This counting data is stable and 
$$\GLSM(\CP^1,\CP^2,\mathcal{D}) = \int_{\CP^5 \times \CP^1_{(3)}}
H^2 H^2 (H+p_3^* h)^2
=2, $$
with notations $H,h$ as in (\ref{eq: GLSM via Euler class}); the subscript in $\CP^1_{(3)}$ reminds that it is the third source cycle.
However, for $c_i^Y$ in general position there is no degree 1 holomorphic map $\CP^1 \to \CP^2$ passing through all of them, as three points in $\CP^2$ are generically not contained in a line. Hence, there must be exactly two proper quasimaps for this counting data. Indeed, a degree 1 quasimap sending two fixed points in $\CP^1$ to two fixed points in $\CP^2$ is, up to a change of coordinates, of the form 
$$\underline{f}(x^0:x^1) = (ax^0:bx^1:0),$$
where $(a:b) \in \CP^1$ are projective coordinates (put differently, this corresponds to fixing $c_1^X = (1:0), c_2^X = (0:1), c_1^Y = (1:0:0), c_2^Y = (0:1:0))$. In particular, either $a$ or $b$ can be zero, in which the quasimap has a freckle, at the point $(1:0)$ for $a=0$ or at the point $(0:1)$ for $b=0$. Explicitly, the two proper quasimaps are given by 
\begin{align*}
    \underline{f}_1(x^0:x^1)&=(x^0:0:0), \\
    \underline{f}_2(x^0:x^1)&=(0:x^1:0).
\end{align*}
That is, the two proper quasimap solutions to the enumerative problem have a freckle at $c_i^X$, $i=1,2$. 
Note that in this case, the proper quasimap solves the equation at the running point exactly when the running point hits the freckle, i.e.\ sits equally at $c_i^X$.
The situation is schematically depicted in Figure \ref{fig: 0 + 2 = 2 2}.
One might think that those violate the condition $f(c_i^X) = c_i^Y$, but this is not the case: Indeed, as a quasimap $\underline{f}$ maps to $\CC^3$ (rather than $\CP^2$) and we ask that it intersects the \emph{lifts} of the target cycles, evaluating $\underline{f}$ at the freckle yields 0 which lies in the intersection of all lines defining the cycle $c_i^Y$.
In general, one observes the \emph{freckle principle}: 
\begin{center}
at the freckle \emph{all} equations are satisfied.
\end{center}
\begin{remark}
    Note that in this example there is a $1_\CC$-dimensional subgroup of $\PSL(2,\CC)$ preserving the source cycles.
\end{remark}

\subsubsection{\texorpdfstring{$1 + 1 = 2$}{1+1=2}}\label{sec:1+1=2}
For the next example, let again $d=1$. 
We consider $l=4$ source cycles with $c_i^X$ a point for $i=1,2,3$ and $c_4^X = \CP^1$, i.e.\ 3 fixed and 1 running point. For the target cycles we choose hyperplanes (i.e.\ lines) for $c_1^Y,c_2^Y$ and points for $c_3^Y, c_4^Y$. Then, we can compute the quasimap number as follows:
\begin{equation}
    \GLSM(\CP^1,\CP^2,\mathcal{D}) = \int_{\CP^5 \times \CP^1_{(4)}} 
    H\,H\, H^2\, (H+ p_4^* h)^2
    = 2.
\end{equation}
We know that there is a unique holomorphic map sending the points $c_i^X \mapsto c_i^Y$ for $i = 3,4.$ Hence the decomposition of the $\GLSM$ number into holomorphic maps and proper quasimaps is 
\begin{equation}
    \GLSM = 2 = \underbrace{1}_{\KM} + \underbrace{1}_{\PQM}.
\end{equation}
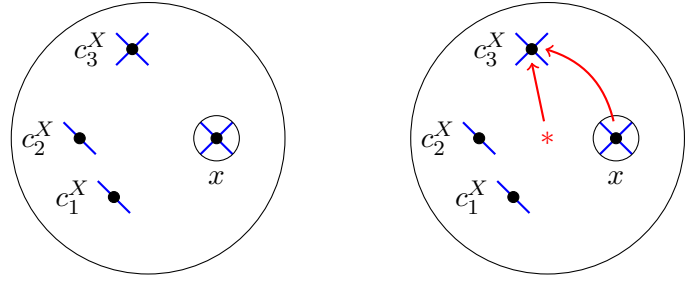
\begin{figure}[H]
\begin{subfigure}[t]{0.4\textwidth}
    \centering
    \begin{tikzpicture}[baseline={([yshift=-.5ex]current bounding box.center)},scale=.6]
    \draw (0,0) circle (3);
    \coordinate[point] (c3) at (100:2);
    \coordinate[point] (c2) at (180:1.5);
    \coordinate[point] (c1) at (240:1.5);
    \coordinate[point] (x) at (0:1.5);

    \begin{scope}[every node/.style={transform shape}]
        \draw (x) circle (.5);

        \draw pic at (x) {H={45}{.5}};
        \draw pic at (x) {H={-45}{.5}};

        \draw pic at (c3) {H={45}{.5}};
        \draw pic at (c3) {H={-45}{.5}};

        \draw pic at (c1) {H={-45}{.5}};
        
        \draw pic at (c2) {H={-45}{.5}};
    \end{scope}

    \node[left] at ($(c1)+(-0.3,0)$) {$c_1^X$};
    \node[left] at ($(c2)+(-0.3,0)$) {$c_2^X$};
    \node[left] at ($(c3)+(-0.3,0)$) {$c_3^X$};

    \node[below] at ($(x)+(0,-0.5)$) {$x$};
  \end{tikzpicture} 
\label{fig: 1 + 1 = 2}
\caption{Pictorial representation of the enumerative problem. }
    \end{subfigure}
    \begin{subfigure}[t]{0.4\textwidth}
            \centering
    \begin{tikzpicture}[baseline={([yshift=-.5ex]current bounding box.center)},scale=.6]
    \draw (0,0) circle (3);
    \coordinate[point] (c3) at (100:2);
    \coordinate[point] (c2) at (180:1.5);
    \coordinate[point] (c1) at (240:1.5);
    \coordinate[point] (x) at (0:1.5);

    \begin{scope}[every node/.style={transform shape}]
        \draw (x) circle (.5);

        \draw pic at (x) {H={45}{.5}};
        \draw pic at (x) {H={-45}{.5}};

        \draw pic at (c3) {H={45}{.5}};
        \draw pic at (c3) {H={-45}{.5}};

        \draw pic at (c1) {H={-45}{.5}};
        
        \draw pic at (c2) {H={-45}{.5}};
    \end{scope}

    \node[left] at ($(c1)+(-0.3,0)$) {$c_1^X$};
    \node[left] at ($(c2)+(-0.3,0)$) {$c_2^X$};
    \node[left] at ($(c3)+(-0.3,0)$) {$c_3^X$};

    \node[below] at ($(x)+(0,-0.5)$) {$x$};

    \node[red] (fr) at (0,0) {$*$};

    \node (xx) at (x) {$\phantom{*}$}; 
    \draw[red,thick,->] (xx) to[bend right] ($(c3) + (0:.3)$);
    \draw[red,thick,->] (fr) to ($(c3) + (-90:.3)$);
  \end{tikzpicture} 
     \caption{The  freckle contributions 
     to the $\GLSM$ number (the freckle is pictorially denoted by a star).}
    \end{subfigure}
        \caption{Configuration of source and target cycles in the $1+1 = 2$ example. We have three fixed points $c_1^X = (0:1) = 0$, $c_2^X = (1:1) =1, c_3^X = (0:1) = \infty$ in the source and one running point $x$. Target cycles are hyperplanes $c_1^Y, c_2^Y$ and points $c_3^Y,c_4^Y$. }
\end{figure}

It is instructive to derive this decomposition directly. Namely,  consider a quasi-map $\CP^1\to \CP^2$ given by three degree 1 homogeneous polynomials 
\begin{equation*}
\underline{f}(x^0:x^1) = (P^0(x^0:x^1),P^1(x^0:x^1),P^2(x^0:x^1)),
\end{equation*}
with $P^i(x^0:x^1) = a^i_0x^0 + a^i_1x^1$. 
We fix the source cycles to be points $$c_1^X = (0:1),\ c_2^X = (1:1),\  c_3^X = (1:0)$$ and the target cycles to be the hyperplanes $$c_1^Y=(0:y^1:y^2),\ c_2^Y = (y^0 : y^1 : y^0 + y^1)$$ and $c_3^Y = (1:0:0)$, $c_4^Y = q$ a generic point in $\CP^2$. 
These quasimaps are parametrized by 
\begin{equation*}
\underline{f}(x^0:x^1) = (ax^0:bx^1:(a+b)x^1). 
\end{equation*}
When $a = 0$, then $\underline{f}$ defines a proper quasimap.
Indeed, the quasimap
\begin{equation*}
    \underline{f}(x^0:x^1) = (0:bx^1:bx^1)
\end{equation*}
vanishes at $c^X_3 = (1:0)$ and sends the complement to $c_1^Y \cap c_1^Y = (0:1:1)$.
We see that in this case $$\PQM(\CP^1,\CP^2,\mathcal{D}) =1.$$

For $a\neq 0$, we can use the $\CC^*$ action on quasimaps to set it to 1 so that 
\begin{equation}
\underline{f}(x^0:x^1) = (x^0:bx^1:(1+b)x^1).
    \label{eq: 1 + 1 = 2 qmap}
\end{equation}
The last condition $f(c_4^X) = q$ fixes the value of $b$ and therefore $$\KM(\CP^1,\CP^2,\mathcal{D}) = 1.$$
\begin{remark}
    This example has a generalization to the case when the dimension $n$ of the target is bigger than 2 (see Section \ref{sec: 1 + N-1 = N}). 
    However, in that case the map is no longer uniquely fixed on the complement of the freckle, because the intersection of two generic hypersurfaces $c_1^Y$ and $c_2^Y$ has positive dimension in $\CP^n$ for $n > 2$. 
\end{remark}

\subsubsection{\texorpdfstring{$1+3 = 4$}{1+3=4}} \label{sec:1+3=4} 
We consider again the case 
$d=1$. 
Consider the counting data described in Example \ref{expl: k1n1d, I}, i.e.\ 
$l = 3d-1 + 3 = 5$, with $c_i^X$ a point for $i = 1,2,3$ and $c_i^X = \CP^1$ for $i=4,5$, $c_i^Y$ a line for $i=1,2,3$ and $c_i^Y$ a point for $i=4,5$, cf.\ Figure \ref{fig: 1 + 3 = 4}.

\begin{figure}[H]
\begin{subfigure}[t]{1\textwidth}
    \centering
    \begin{tikzpicture}[baseline={([yshift=-.5ex]current bounding box.center)},scale=.6]
    \draw (0,0) circle (3);
    \coordinate[point] (c3) at (108:2);
    \coordinate[point] (c2) at (180:1.5);
    \coordinate[point] (c1) at (252:2);

    \coordinate[point] (x) at (36:1.5);
    \coordinate[point] (y) at (-36:1.5);

    \begin{scope}[every node/.style={transform shape}]
        \draw (x) circle (.5);
        \draw (y) circle (.5);
        
        \draw pic at (x) {H={45}{.5}};
        \draw pic at (x) {H={-45}{.5}};

        \draw pic at (y) {H={45}{.5}};
        \draw pic at (y) {H={-45}{.5}};

        \draw pic at (c1) {H={-45}{.5}};
        
        \draw pic at (c2) {H={-45}{.5}};

        \draw pic at (c3) {H={-45}{.5}};
    \end{scope}

    \node[left] at ($(c1)+(-0.3,0)$) {$c_1^X$};
    \node[left] at ($(c2)+(-0.3,0)$) {$c_2^X$};
    \node[left] at ($(c3)+(-0.3,0)$) {$c_3^X$};

    \node[below] at ($(x)+(0,-0.5)$) {$x$};
    \node[below] at ($(y)+(0,-0.5)$) {$y$};

  \end{tikzpicture} 
\caption{Pictorial representation of the enumerative problem. }
\label{fig: 1 + 3 = 4}
    \end{subfigure}\vspace{1cm}
\begin{subfigure}[t]{1\textwidth}
    \centering
    \begin{tikzpicture}[baseline={([yshift=-.5ex]current bounding box.center)},scale=.6]
    \draw (0,0) circle (3);
    \coordinate[point] (c3) at (108:2);
    \coordinate[point] (c2) at (180:1.5);
    \coordinate[point] (c1) at (252:2);

    \coordinate[point] (x) at (36:1.5);
    \coordinate[point] (y) at (-36:1.5);

    \begin{scope}[every node/.style={transform shape}]
        \draw (x) circle (.5);
        \draw (y) circle (.5);
        
        \draw pic at (x) {H={45}{.5}};
        \draw pic at (x) {H={-45}{.5}};

        \draw pic at (y) {H={45}{.5}};
        \draw pic at (y) {H={-45}{.5}};

        \draw pic at (c1) {H={-45}{.5}};
        
        \draw pic at (c2) {H={-45}{.5}};

        \draw pic at (c3) {H={-45}{.5}};
    \end{scope}

    \node[left] at ($(c1)+(-0.3,0)$) {$c_1^X$};
    \node[left] at ($(c2)+(-0.3,0)$) {$c_2^X$};
    \node[left] at ($(c3)+(-0.3,0)$) {$c_3^X$};

    \node[below] at ($(x)+(0,-0.5)$) {$x$};
    \node[below] at ($(y)+(0,-0.5)$) {$y$};

    \node[red] (fr) at (0,0) {$*$};
    \draw[thick,red,->] (fr) to[bend left] ($(c1)+(45:.3)$);

  \end{tikzpicture} \hfill
  \begin{tikzpicture}[baseline={([yshift=-.5ex]current bounding box.center)},scale=.6]
    \draw (0,0) circle (3);
    \coordinate[point] (c3) at (108:2);
    \coordinate[point] (c2) at (180:1.5);
    \coordinate[point] (c1) at (252:2);

    \coordinate[point] (x) at (36:1.5);
    \coordinate[point] (y) at (-36:1.5);

    \begin{scope}[every node/.style={transform shape}]
        \draw (x) circle (.5);
        \draw (y) circle (.5);
        
        \draw pic at (x) {H={45}{.5}};
        \draw pic at (x) {H={-45}{.5}};

        \draw pic at (y) {H={45}{.5}};
        \draw pic at (y) {H={-45}{.5}};

        \draw pic at (c1) {H={-45}{.5}};
        
        \draw pic at (c2) {H={-45}{.5}};

        \draw pic at (c3) {H={-45}{.5}};
    \end{scope}

    \node[left] at ($(c1)+(-0.3,0)$) {$c_1^X$};
    \node[left] at ($(c2)+(-0.3,0)$) {$c_2^X$};
    \node[left] at ($(c3)+(-0.3,0)$) {$c_3^X$};

    \node[below] at ($(x)+(0,-0.5)$) {$x$};
    \node[below] at ($(y)+(0,-0.5)$) {$y$};

    \node[red] (fr) at (0,0) {$*$};
    \draw[thick,red,->] (fr) to ($(c2)+(0:.3)$);

  \end{tikzpicture} \hfill
  \begin{tikzpicture}[baseline={([yshift=-.5ex]current bounding box.center)},scale=.6]
    \draw (0,0) circle (3);
    \coordinate[point] (c3) at (108:2);
    \coordinate[point] (c2) at (180:1.5);
    \coordinate[point] (c1) at (252:2);

    \coordinate[point] (x) at (36:1.5);
    \coordinate[point] (y) at (-36:1.5);

    \begin{scope}[every node/.style={transform shape}]
        \draw (x) circle (.5);
        \draw (y) circle (.5);
        
        \draw pic at (x) {H={45}{.5}};
        \draw pic at (x) {H={-45}{.5}};

        \draw pic at (y) {H={45}{.5}};
        \draw pic at (y) {H={-45}{.5}};

        \draw pic at (c1) {H={-45}{.5}};
        
        \draw pic at (c2) {H={-45}{.5}};

        \draw pic at (c3) {H={-45}{.5}};
    \end{scope}

    \node[left] at ($(c1)+(-0.3,0)$) {$c_1^X$};
    \node[left] at ($(c2)+(-0.3,0)$) {$c_2^X$};
    \node[left] at ($(c3)+(-0.3,0)$) {$c_3^X$};

    \node[below] at ($(x)+(0,-0.5)$) {$x$};
    \node[below] at ($(y)+(0,-0.5)$) {$y$};

    \node[red] (fr) at (0,0) {$*$};
    \draw[thick,red,->] (fr) to[bend left] ($(c3)+(-90:.3)$);

  \end{tikzpicture} 
\caption{ Proper quasimap solutions. 
}
    \end{subfigure}
    \caption{ $4=1+3$ enumerative problem 
    }
\end{figure}
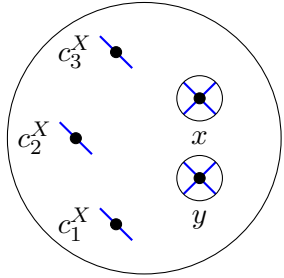
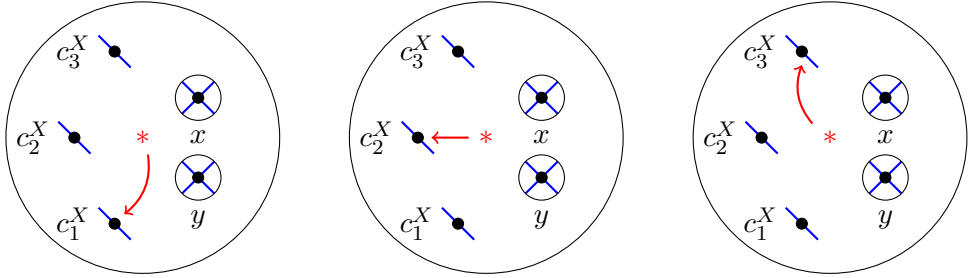

The quasimap number 
for this problem is
\begin{equation}
    \GLSM(\CP^1,\CP^2,\mathcal{D}) = \int_{\CP^5\times\CP^1_{(4)}\times\CP^1_{(5)}}
    H^3 (H+p_4^* h)^2 (H+p_5^* h)^2
    = 4.
\end{equation}
We claim that in this case there are three proper quasimaps and a unique holomorphic map satisfying the conditions of the enumerative problem, so that 
$$\underbrace{4}_{\QM}=\underbrace{1}_{\KM}+\underbrace{3}_{\PQM}.$$ 
Indeed, let us choose $c_i^X = \{0,1,\infty\}$ as above and lines $c_i^Y = \{y^i = 0\}$, so that our quasimap satisfies 
$$P^i(c_i^X) = 0.$$
It is easy to see that such a quasi-map is given by 
\begin{equation}
\underline{f}(x^0:x^1) = (ax^0: b(x^1-x^0):cx^1),
    \label{eq: 1 + 3 = 4 qmap}
\end{equation}
where the parameters $a,b,c$ form a two-dimensional projective space, $(a:b:c) \in \CP^2$. In particular, the quasimap has a freckle if and only if two out of the three parameters vanish, and in this case the freckle will be at one of the fixed points $c_i^X$. If at most one of $a,b,c$ is zero, then $\underline{f}$ defines a holomorphic map $f\colon \CP^1 \to \CP^2$, which is uniquely fixed by the requirement that it passes through the points $c_4^Y$ and $c_5^Y$. 
\begin{remark}
    Maps $\CP^1 \to \CP^2$ of degree 1 describe parametrized lines in $\CP^2$.
    By demanding that the three points $c_i^X$, $i = 1,2,3$ have images in prescribed lines, one fixes the parametrization.   
    Geometrically, we are thus counting 
    lines through two given points in $\CP^2$ and it is clear that there exists a unique such map.
\end{remark}

 \subsection{Higher-dimensional target}
\subsubsection{\texorpdfstring{$2+0 = 2$}{2+0=2} - An example with no freckles}
Let $k=1$, $n=3$, $d=1$ - i.e. we are counting quasimaps $\CP^1 \qra \CP^3$ of degree 1. We have $\dim \QMap_1(\CP^1,\CP^3) = 7.$
Consider $c_i^X$ to be a point for $i=1,2,3$ and $c_4^X = \CP^1$. 
As target cycles, consider $c_i^Y$ to be a line ($\codim c_i^Y =2$).
Then
\begin{equation}
    \GLSM(\CP^1,\CP^3,\mathcal D) = \int_{\CP^7 \times \CP^1_{(4)}} 
    H^2 H^2 H^2 (H+p_4^* h)^2
    = 2.
\end{equation}
Again, we let $c_i^X = \{0,1,\infty\}$ for $i=1,2,3$ and we consider the lines 
\begin{align*}
    c_1^Y &= \{y^1 = y^2 = 0\} ,\\
    c_2^Y &= \{y^1 - y^3 = y^2 - y^4 = 0\} ,\\
    c_3^Y &= \{y^3 = y^4 = 0\}.
\end{align*} 
These lines are generic in the sense that they have trivial pairwise intersections. 
Such quasimaps are of the form 
$$ \underline{f}(x^0:x^1) = (ax^0,bx^0,ax^1,bx^1
) $$ 
and such quasimaps have no freckle. Taking a fourth line in general position with respect to the first three (i.e. it does not intersect any of them) yields a quadratic equation for the quasimap parameter which then determines the running point. 

\subsection{Higher-dimensional source}
Next, we consider some examples with source $\CP^2$. 
\subsubsection{\texorpdfstring{$1+0=1$}{1+0=1}} 
As an easy example, consider the case $k=n=2$, $d=1$, i.e. degree 1 quasimaps $\CP^2 \qra \CP^2$. Here, we have $\dim \QMap_1(\CP^2,\CP^2) = 8$. The simplest configuration is $l = 4$ with source and target cycles being 4 fixed points. The $\GLSM$ number is 1, and indeed there is a unique degree 1 holomorphic map $f \colon \CP^2 \to \CP^2$ mapping four fixed points in $\CP^2$ to other four fixed points in $\CP^2$.\footnote{It is important that both quadruples are \emph{in general positions}. Otherwise if, e.g., one quadruple has a collinear triple in it and the other does not, such $f$ does not exist.} 
\subsubsection{\texorpdfstring{$1+2=3$}{1+2=3}}\label{sssec: 1+2 =3}
We consider degree 1 quasimaps from $X = \CP^2$ to $Y = \CP^3$.
Then
\begin{equation}
    \QMap_1(\CP^2,\CP^3) = \CP^{11}.
\end{equation}
We consider five cycles $c_i^X$ in the source and five cycles $c_i^Y$ in the target. 
Consider the following quasimap data $\mathcal D$: Out of the five cycles in the source we consider four to be fixed and one to be running, i.e.\ $c_i^X = {\rm p_i}$ for $i = 1, \dots, 4$ and $c_5^X = \CP^2$.
Moreover, let three out of the five cycles in the target be points and the remaining two be lines, i.e.\ $c_j^Y = {\rm point}$ for $i = 1,2,3$ and $c_j^Y = \ell_j$ for $j = 4,5$.

Let us first consider quasimaps that pass through a line, i.e.\ that map the running point to a line in the target.
The situation is depicted in Figure \ref{fig:P2 to P3 1 = 1 + 0} 
\begin{figure}[H]
    \centering\hfill
    \begin{subfigure}[t]{.4\textwidth}
    \centering
    \begin{tikzpicture}[baseline={([yshift=-.5ex]current bounding box.center)},scale=.6]
    \draw (0,0) circle (3);

    \coordinate[point] (c3) at (252:2);
    \coordinate[point] (c2) at (180:2);
    \coordinate[point] (c1) at (108:2);
    \coordinate[point] (c4) at (-36:2);

    \coordinate[point] (x) at (36:2);

    \begin{scope}[every node/.style={transform shape}]
        \draw (x) circle (.5);

        \draw pic at (x) {H={45}{.5}};
        \draw pic at (x) {H={-45}{.5}};

        \draw pic at (c1) {H={60}{.5}};
        \draw pic at (c1) {H={-60}{.5}};
        \draw pic at (c1) {H={0}{.5}};
        
        \draw pic at (c2) {H={60}{.5}};
        \draw pic at (c2) {H={-60}{.5}};
        \draw pic at (c2) {H={0}{.5}};

        \draw pic at (c3) {H={60}{.5}};
        \draw pic at (c3) {H={-60}{.5}};
        \draw pic at (c3) {H={0}{.5}};

        \draw pic at (c4) {H={45}{.5}};
        \draw pic at (c4) {H={-45}{.5}};
    \end{scope}

    \node[below right] at (c3) {$c_3^X$};
    \node[below,yshift=-.5em] at (c2) {$c_2^X$};
    \node[below,yshift=-.5em] at (c1) {$c_1^X$};
    \node[below] at ($(c4)+(0,-.2)$) {$c_4^X$};
    \node[below] at ($(x)+(0,-.5)$) {$c_5^X$};
    
  \end{tikzpicture} 
    \caption{Quasimap counting data allowing a unique holomorphic solution.}
    \label{fig:P2 to P3 1 = 1 + 0}
    \end{subfigure}\hfill
    \begin{subfigure}[t]{.4\textwidth}
    \centering
    \begin{tikzpicture}[baseline={([yshift=-.5ex]current bounding box.center)},scale=.6]
    \draw (0,0) circle (3);

    \coordinate[point] (c3) at (252:2);
    \coordinate[point] (c2) at (180:2);
    \coordinate[point] (c1) at (108:2);
    \coordinate[point] (c4) at (-36:2);

    \coordinate[point] (x) at (36:2);

    \begin{scope}[every node/.style={transform shape}]
        \draw (x) circle (.5);

        \draw pic at (x) {H={60}{.5}};
        \draw pic at (x) {H={-60}{.5}};
        \draw pic at (x) {H={0}{.5}};

        \draw pic at (c1) {H={60}{.5}};
        \draw pic at (c1) {H={-60}{.5}};
        \draw pic at (c1) {H={0}{.5}};
        
        \draw pic at (c2) {H={60}{.5}};
        \draw pic at (c2) {H={-60}{.5}};
        \draw pic at (c2) {H={0}{.5}};

        \draw pic at (c3) {H={45}{.5}};
        \draw pic at (c3) {H={-45}{.5}};

        \draw pic at (c4) {H={45}{.5}};
        \draw pic at (c4) {H={-45}{.5}};
    \end{scope}

    \node[below right] at (c3) {$c_3^X$};
    \node[below,yshift=-.5em] at (c2) {$c_2^X$};
    \node[below,yshift=-.5em] at (c1) {$c_1^X$};
    \node[below] at ($(c4)+(0,-.2)$) {$c_4^X$};
    \node[below] at ($(x)+(0,-.5)$) {$c_5^X$};
    
  \end{tikzpicture} 
    \caption{Quasimap counting data allowing proper quasimap solutions.}
    \label{fig:P2 to P3 3 = 1 + 2}
    \end{subfigure}\hfill
    \caption{Quasiplanes in $\CP^3$ passing through three points and two lines.}
\end{figure}
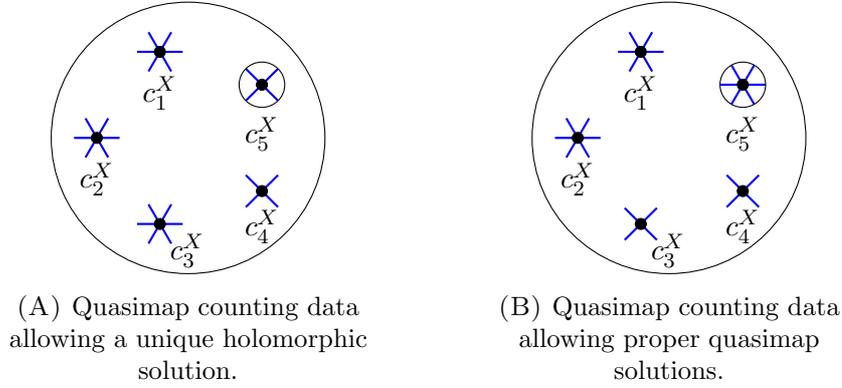

The $\GLSM$ number of this problem is given by 
\begin{equation}
    \GLSM(\CP^2,\CP^3,\mathcal D) = \int_{\CP^{11} \times \CP^2_{(5)}} H^3 H^3 H^3 H^2 (H + p_5^* h)^2 = 1.
\end{equation}
 This corresponds to a unique holomorphic map (i.e. $\KM=1$). Its image is the unique plane in the target through three points $\mr{p}_1,\mr{p}_2,\mr{p}_3$. It automatically passes through the lines $\ell_4,\ell_5$. The condition that the four fixed points on the source have prescribed images fixes the parametrization (i.e. gauge-fixes the source automorphism group $\PSL(3,\CC)$).

Now, let us consider a slightly modified problem, where the running point is mapped not to a line but to a point, cf.\ Figure \ref{fig:P2 to P3 3 = 1 + 2}.
In this case the $\GLSM$ number is 
\begin{equation}
    \GLSM = \int_{\CP^{11} \times \CP^2_{(5)}} H^3 H^3 H^2 H^2 (H + p_5^* h)^3 = 3.
\end{equation}
However, the geometric situation has not changed, i.e.\ there still exists a unique map solving the constraints. 
Therefore we expect that the proper quasimap contribution to the problem is 2. 

To compute the proper quasimap contribution, recall from Section \ref{sec:freckle_stratification} that the one-freckle stratum has complex codimension 
\begin{equation}
    \codim\QMap^1_1(\CP^2,\CP^3) = 3 + 1 - 2 = 2,
\end{equation}
that is the one-freckle stratum is a complex nine dimensional subspace in $\QMap$.
Consider the situation where the freckle and the running point collide with one of the two cycles which get mapped to a point, cf.\ Figure \ref{fig:one_freckle_stratum_P2_P3}.
 \begin{figure}[H]
    \centering
    \begin{tikzpicture}[baseline={([yshift=-.5ex]current bounding box.center)},scale=.8]
    \draw (0,0) circle (3);

    \coordinate[point] (c3) at (252:2);
    \coordinate[point] (c2) at (180:2);
    \coordinate[point] (c1) at (108:2);
    \coordinate[point] (c4) at (-36:2);

    \coordinate[point] (x) at (36:2);

    \begin{scope}[every node/.style={transform shape}]
        \draw (x) circle (.5);

        \draw pic at (x) {H={60}{.5}};
        \draw pic at (x) {H={-60}{.5}};
        \draw pic at (x) {H={0}{.5}};

        \draw pic at (c1) {H={60}{.5}};
        \draw pic at (c1) {H={-60}{.5}};
        \draw pic at (c1) {H={0}{.5}};
        
        \draw pic at (c2) {H={60}{.5}};
        \draw pic at (c2) {H={-60}{.5}};
        \draw pic at (c2) {H={0}{.5}};

        \draw pic at (c3) {H={45}{.5}};
        \draw pic at (c3) {H={-45}{.5}};

        \draw pic at (c4) {H={45}{.5}};
        \draw pic at (c4) {H={-45}{.5}};
    \end{scope}

    \node[below right] at (c3) {$c_3^X$};
    \node[below,yshift=-.5em] at (c2) {$c_2^X$};
    \node[left,xshift=-.5em] at (c1) {$c_1^X$};
    \node[below] at ($(c4)+(0,-.2)$) {$c_4^X$};
    \node[below] at ($(x)+(0,-.5)$) {$c_5^X$};

    \node[red] (fr) at (0,0) {$*$};

    \node (xx) at (x) {$\phantom{*}$}; 
    \draw[thick,red,->] (xx) to[bend right] ($(c1) + (0:.6)$);
    \draw[thick,red,->] (fr) to ($(c1) + (-90:.3)$);
  \end{tikzpicture} 
    \caption{One of the two 1-freckle configurations}
    \label{fig:one_freckle_stratum_P2_P3}
 \end{figure}
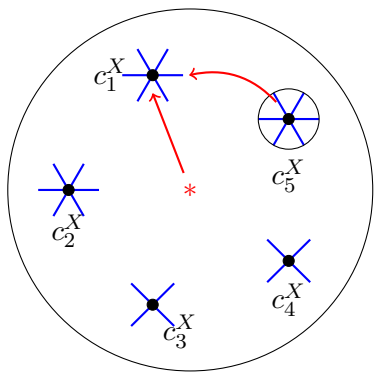
Such a configuration has codimension 2 in the freckle stratum (because we fix the position of the freckle in $\CP^2$) and therefore dimension 7. This is precisely the number of remaining equations, so generically there is a unique quasimap with a freckle at $c_1^X$ that satisfies all equations. Similarly, there is such a quasimap with a freckle at $c_2^X$. However, when the freckle sits at $c_3^X$ or $c_4^X$ there are generically no solutions since we have 8 remaining equations. Therefore, the total $\PQM$ number is 2 as expected.

\subsection{Quasi-stable examples}

In this subsection we consider examples where the naive quasimap count is degenerate, i.e.\ even though we are considering a balanced configuration of cycles, there are positive-dimensional components in the space of quasimaps satisfying all the equations.

Intuitively, we encounter the following situation:  for each pair of source and target cycles $c_i^X,c_i^Y$ we have $\dim c_i^X$ moduli and $\codim c_i^Y$ equations, i.e. by adding such a pair we expect to reduce the dimension of the space of quasimaps by $d_i = \codim c_i^Y -\dim c_i^X$. For \emph{proper} quasimaps, however, the equations become void when the running points on $c_i^X$ hit the freckle.
If in this way, we lose more $d_i$ than the codimension of the corresponding freckle stratum, we obtain a family of such quasimaps. For example, for $k=1,n=2$, the freckle stratum has codimension 2 (common zero locus of three polynomials). A running point (sent to a point) has $d=1$, so if we have 3 running points colliding in a freckle, we obtain a family of positive dimension. 

\subsubsection{A family of degenerate examples: \texorpdfstring{$1+(N-1)= N$}{1+(N-1)=1} }\label{sec: 1 + N-1 = N}
Consider the case $k=1,n=N,d=1$ and $l=4$, i.e. degree 1 quasimaps $\CP^1 \qra \CP^N$, with $l=4$ given cycles in the source and the target. 
We have $\dim\QMap_1(\CP^1,\CP^N)=2N+1$.
The following is a balanced configuration: We fix three points $c_1^X,c_2^X, c_3^X$ and keep one running point $c_4^X$ in the source, while we take $c_1^Y,c_2^Y$ to be hyperplanes in $\CP^N$ and $c_3^Y,c_4^Y$ fixed points, as depicted in Figure \ref{fig: 1 + (N-1) = N}.
\begin{figure}[H]
    \centering
    \begin{tikzpicture}[baseline={([yshift=-.5ex]current bounding box.center)},scale=.8]
    \draw (0,0) circle (3.3);

    \coordinate[point] (c3) at (90:2);
    \coordinate[point] (c2) at (180:2);
    \coordinate[point] (c1) at (270:1.5);

    \coordinate[point] (x) at (0:2);

    \begin{scope}[every node/.style={transform shape}]
        \draw (x) circle (.5);

        \draw pic at (x) {H={60}{.5}};
        \draw pic at (x) {H={-60}{.5}};
        \draw pic at (x) {H={0}{.5}};

        \draw pic at (c3) {H={60}{.5}};
        \draw pic at (c3) {H={-60}{.5}};
        \draw pic at (c3) {H={0}{.5}};

        \draw pic at (c2) {H={60}{.5}};

        \draw pic at (c1) {H={60}{.5}};
    \end{scope}

    \node[left] at ($(c1)+(-.2,0)$) {$c_1^X$};
    \node[left] at ($(c2)+(-.2,0)$) {$c_2^X$};
    \node[left] at ($(c3)+(-.5,0)$) {$c_3^X$};
    \node[below] at ($(x) + (0,-.5)$) {$c_4^X$};
    
    \draw[dotted,thick] ($(c3) + (45:0.6)$) arc (45:0:0.6cm);
    \node at ($(c3) + (30:0.9)$) {$N$};
    \draw[dotted,thick] ($(x) + (45:0.6)$) arc (45:0:0.6cm);
    \node at ($(x) + (30:0.9)$) {$N$};
    
  \end{tikzpicture} 
\caption{Pictorial representation of the enumerative problem. Here $N$ lines run through $c_3^X$ and $c_4^X$ each representing a hyperplane in $\CP^N$.}
\label{fig: 1 + (N-1) = N}
\end{figure}
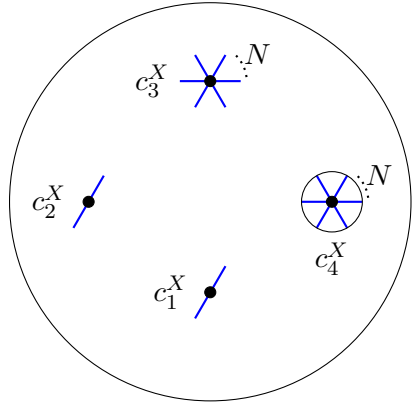

Then the $\GLSM$ number is 
\begin{equation}
    \GLSM(\CP^1,\CP^N,\mathcal D) = \int_{\CP^{2N + 1} \times \CP^1_{(4)}}  
    H\,H\, H^N (H+p_4^* h)^N
    = N.
\end{equation}
However, there is a unique holomorphic map, whose image is the unique line through $c_3^Y$ and $c_4^Y$. 
Therefore we expect that proper quasimaps contribute to the $\GLSM$ number with $N-1$.
However, there exist a family of proper quasimaps: consider the situation where the freckle and the running point collide with $c_3^X$, cf.\ Figure \ref{fig: quasimap 1 + (N-1) = N}.
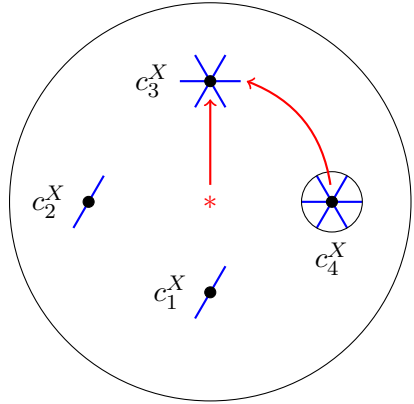
\begin{figure}[H]
    \centering
      \scalebox{1}{
    \begin{tikzpicture}[baseline={([yshift=-.5ex]current bounding box.center)},scale=.8]
    \draw (0,0) circle (3.3);

    \coordinate[point] (c3) at (90:2);
    \coordinate[point] (c2) at (180:2);
    \coordinate[point] (c1) at (270:1.5);

    \coordinate[point] (x) at (0:2);

    \begin{scope}[every node/.style={transform shape}]
        \draw (x) circle (.5);

        \draw pic at (x) {H={60}{.5}};
        \draw pic at (x) {H={-60}{.5}};
        \draw pic at (x) {H={0}{.5}};

        \draw pic at (c3) {H={60}{.5}};
        \draw pic at (c3) {H={-60}{.5}};
        \draw pic at (c3) {H={0}{.5}};

        \draw pic at (c2) {H={60}{.5}};

        \draw pic at (c1) {H={60}{.5}};
    \end{scope}

    \node[left] at ($(c1)+(-.2,0)$) {$c_1^X$};
    \node[left] at ($(c2)+(-.2,0)$) {$c_2^X$};
    \node[left] at ($(c3)+(-.5,0)$) {$c_3^X$};
    \node[below] at ($(x) + (0,-.5)$) {$c_4^X$};

    \node[red] (fr) at (0,0) {$*$};

    \node (xx) at (x) {$\phantom{*}$}; 
    \draw[thick,red,->] (xx) to [bend right] ($(c3) + (0.6,0)$);
    \draw[thick,red,->] (fr) to($(c3) + (0,-0.3)$); 
    
  \end{tikzpicture} 
}
\caption{The unique quasimap stratum is where the freckle and the running point sit at $c_3^X$.}
\label{fig: quasimap 1 + (N-1) = N}
\end{figure}

On the complement of the freckle, the quasimap defines a degree zero, hence a constant map. 
This map is constraint to map $c_1^X$ and $c_2^X$ to the hyperplane $c_1^Y$ and $c_2^Y$ respectively. 
Since the map is constant, it maps both $c_1^X$ and $c_2^X$ to the intersection $c_1^Y \cap c_2^Y$. 
Since two hyperplanes meet in a cycle 
of codimension 2, i.e.\ of dimension $N-2$, if $N > 2$, there exists a moduli for the map and hence for the proper quasimaps. \\

In fact this modulus is a $\CP^{N-2} \subset \CP^{2N+1}$, which is the unique connected component of $$\QMap^\pr(X,Y,\mathcal{D}) = Z = \CP^{N-2} \times \{c_3^X\} \subset \CP^{2N+1} \times \CP^1_{(4)}. $$ 
\begin{remark}
To be fully precise one has 
\[
Z = \CP^{N-2} \times \{c_1^X\}\times \{c_2^X\}\times \{c_3^X\}\times \{c_3^X\} \subset \Var.
\]
Note that any vector bundle over a point is trivial and therefore has total Chern class equal to $1$.
Points in $Z$ therefore contribute trivially, i.e.\ by a factor of 1 at the appropriate place, to $c(B)$, cf.\ \eqref{eq:BZ}. 
Hence, fixing $\{c_i^X\}_{i=1}^3$ to be points, effectively reduces the problem over $\Var$ to a problem over $\CP^{N-2} 
$.
\end{remark}
We can compute its $\PQM$ number by first computing the Chern class of the excess bundle using \eqref{eq:BZ}. 
Let $\zeta$ denote the generator of $H^2(Z) = H^2(\CP^{N-2})$. 
Then
\begin{equation}
\begin{split}
c(B) &= \frac{\Big(1 + p^*_{\CP^{2N+1}}c_1(\OO(1)_\QMap)\Big)^{2N+2}\Big|_Z c(\CP^{N-2})}{c(\CP^{2N + 1} \times \CP^1_{(4)})|_Z}\\
&= \frac{(1+H)^{2N+2}|_Z (1+\zeta)^{N-1}}{(1+H)^{2N+2}|_Z} \\
&= \frac{(1+\zeta)^{2N+2}(1+\zeta)^{N-1}}{(1+\zeta)^{2N+2}} = (N-1)\zeta^{N-2} + \ldots,
\end{split}
\end{equation}
where the dots denote terms of lower than top degree. From this, we get 
\begin{equation}
    \PQM(X,Y,\mathcal{D}) =\PQM(X,Y,\mathcal{D};Z) = \int_{\CP^{N-2}} (N-1)\zeta^{N-2} = N-1.
\end{equation}
\subsubsection{\texorpdfstring{$1+(K-1) = K$}{1+(K-1)=1}}
We can generalize this example by fixing only $K < N$ equations at the running point. In order to obtain a balanced problem, we then fix $N_i$ equations at $c_i^X$, with $N_1 + N_2 + N_3 +K = 2N +2$. See Figure \ref{fig: 1 + (K-1) = K}. 
\begin{figure}[H]
    \centering
    \begin{tikzpicture}[baseline={([yshift=-.5ex]current bounding box.center)},scale=.8]
    \draw (0,0) circle (3.3);

    \coordinate[point] (c3) at (90:2);
    \coordinate[point] (c2) at (180:2);
    \coordinate[point] (c1) at (270:1.5);

    \coordinate[point] (x) at (0:2);

    \begin{scope}[every node/.style={transform shape}]
        \draw (x) circle (.5);

        \draw pic at (x) {H={60}{.5}};
        \draw pic at (x) {H={-60}{.5}};
        \draw pic at (x) {H={0}{.5}};

        \draw pic at (c1) {H={60}{.5}};
        \draw pic at (c1) {H={-60}{.5}};
        \draw pic at (c1) {H={0}{.5}};

        \draw pic at (c2) {H={60}{.5}};
        \draw pic at (c2) {H={-60}{.5}};
        \draw pic at (c2) {H={0}{.5}};

        \draw pic at (c3) {H={60}{.5}};
        \draw pic at (c3) {H={-60}{.5}};
        \draw pic at (c3) {H={0}{.5}};
    \end{scope}

    \node[left] at ($(c1)+(-.2,0)$) {$c_1^X$};
    \node[left] at ($(c2)+(-.2,0)$) {$c_2^X$};
    \node[left] at ($(c3)+(-.5,0)$) {$c_3^X$};
    \node[below] at ($(x) + (0,-.5)$) {$c_4^X$};

    \draw[dotted,thick] ($(x) + (45:0.6)$) arc (45:0:0.6cm);
    \node at ($(x) + (30:0.9)$) {$K$};
    \draw[dotted,thick] ($(c1) + (45:0.6)$) arc (45:0:0.6cm);
    \node at ($(c1) + (30:0.9)$) {$N_1$};
    \draw[dotted,thick] ($(c2) + (45:0.6)$) arc (45:0:0.6cm);
    \node at ($(c2) + (30:0.9)$) {$N_2$};
    \draw[dotted,thick] ($(c3) + (45:0.6)$) arc (45:0:0.6cm);
    \node at ($(c3) + (30:0.9)$) {$N_3$};
    
  \end{tikzpicture} 
\caption{Pictorial representation of the enumerative problem. Here $N_i$ lines run through $c_i^X$ and $K$ lines through $c_4^X$ each representing a hyperplane in $\CP^N$.}
\label{fig: 1 + (K-1) = K}
\end{figure}
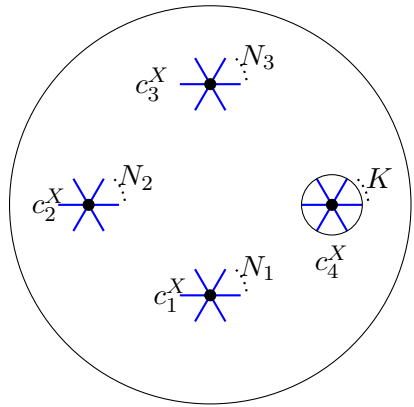
\begin{equation}\label{K}
    \GLSM(\CP^1,\CP^N,\mathcal D) = \int_{\CP^{2N + 1} \times \CP^1_{(4)}}  
    H^{N_1} H^{N_2} H^{N_3} (H+p_4^* h)^K
    = K.
\end{equation}
\begin{remark}
Surprisingly, the QM numbers violate the symmetry that is well-known in Gromov-Witten theory.
Observe that  the $\QM$ number (\ref{K}) depends on $K$.
Naively, this seems to indicate that the QM counting problem is \emph{not} invariant w.r.t. automorphisms of the source: naively, one can apply a M\"obius transformation which is constant at points $c_{1,2}^X$, stops  $c_4^X$ and makes it non-moving, and as a result makes $c_3^X$ a moving point -- but the corresponding QM number is $N_3$, not $K$! The problem with this argument is that it implicitly assumes genericity of the configuration, in particular that the moving point doesn't collide with any of the stationary points (otherwise, one cannot separate points in this non-generic configuration by a M\"obius transformation). On the other hand there are freckle contributions exactly from configurations when the moving point and the freckle collide with one of the stationary points. 


For example, consider 
a special case of the example above, with $N=2$,
cf.\ Figure \ref{fig:Moebius_inequivalent_situations}.
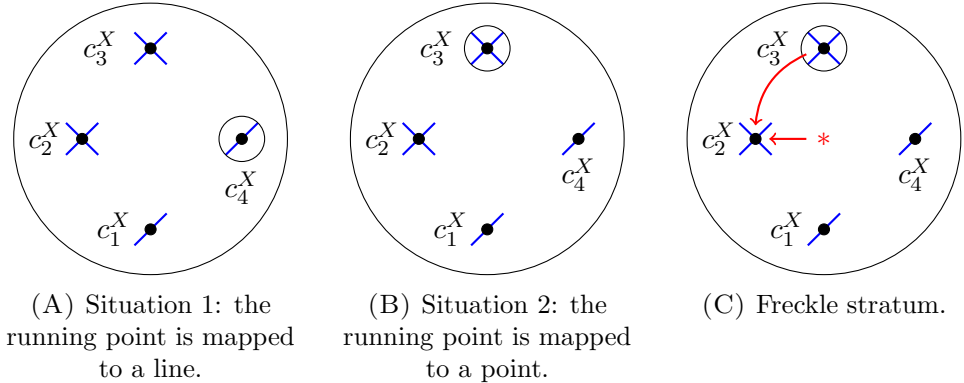
\begin{figure}[H]
    \centering
    \begin{subfigure}[t]{.3\textwidth}
        \centering
        \begin{tikzpicture}[baseline={([yshift=-.5ex]current bounding box.center)},scale=.6]
    
            \draw (0,0) circle (3);
        
            \coordinate[point] (c3) at (90:2);
            \coordinate[point] (c2) at (180:1.5);
            \coordinate[point] (c1) at (270:2);
        
            \coordinate[point] (x) at (0:2);

            \begin{scope}[every node/.style={transform shape}]
                \draw (x) circle (.5);

                \draw pic at (x) {H={45}{.5}};

                \draw pic at (c1) {H={45}{.5}};

                \draw pic at (c2) {H={45}{.5}};
                \draw pic at (c2) {H={-45}{.5}};
                \draw pic at (c3) {H={45}{.5}};
                \draw pic at (c3) {H={-45}{.5}};
                
            \end{scope}
            
            \node[left] at ($(c1)+(-.2,0)$) {$c_1^X$};
            \node[left] at ($(c2)+(-.2,0)$) {$c_2^X$};
            \node[left] at ($(c3)+(-.5,0)$) {$c_3^X$};
            \node[below] at ($(x)+(0,-.5)$) {$c_4^X$};
            
        \end{tikzpicture}
        \caption{Situation 1: the running point is mapped to a line.}
        \label{fig:Moebius_inequivalent_situation1}
    \end{subfigure}\hfill
    \begin{subfigure}[t]{.3\textwidth}
        \centering
        \begin{tikzpicture}[baseline={([yshift=-.5ex]current bounding box.center)},scale=.6]
    
            \draw (0,0) circle (3);
        
            \coordinate[point] (c3) at (90:2);
            \coordinate[point] (c2) at (180:1.5);
            \coordinate[point] (c1) at (270:2);
        
            \coordinate[point] (x) at (0:2);

            \begin{scope}[every node/.style={transform shape}]
                \draw (c3) circle (.5);

                \draw pic at (x) {H={45}{.5}};

                \draw pic at (c1) {H={45}{.5}};

                \draw pic at (c2) {H={45}{.5}};
                \draw pic at (c2) {H={-45}{.5}};
                \draw pic at (c3) {H={45}{.5}};
                \draw pic at (c3) {H={-45}{.5}};
                
            \end{scope}
            
            \node[left] at ($(c1)+(-.2,0)$) {$c_1^X$};
            \node[left] at ($(c2)+(-.2,0)$) {$c_2^X$};
            \node[left] at ($(c3)+(-.5,0)$) {$c_3^X$};
            \node[below] at ($(x)+(0,-.2)$) {$c_4^X$};
            
       \end{tikzpicture}
       \caption{Situation 2: the running point is mapped to a point.}
       \label{fig:Moebius_inequivalent_situation2}
    \end{subfigure}\hfill
    \begin{subfigure}[t]{.3\textwidth}
        \centering
        \begin{tikzpicture}[baseline={([yshift=-.5ex]current bounding box.center)},scale=.6]
    
            \draw (0,0) circle (3);
        
            \coordinate[point] (c3) at (90:2);
            \coordinate[point] (c2) at (180:1.5);
            \coordinate[point] (c1) at (270:2);
        
            \coordinate[point] (x) at (0:2);

            \begin{scope}[every node/.style={transform shape}]
                \draw (c3) circle (.5);

                \draw pic at (x) {H={45}{.5}};

                \draw pic at (c1) {H={45}{.5}};

                \draw pic at (c2) {H={45}{.5}};
                \draw pic at (c2) {H={-45}{.5}};
                \draw pic at (c3) {H={45}{.5}};
                \draw pic at (c3) {H={-45}{.5}};
                
            \end{scope}
            
            \node[left] at ($(c1)+(-.2,0)$) {$c_1^X$};
            \node[left] at ($(c2)+(-.2,0)$) {$c_2^X$};
            \node[left] at ($(c3)+(-.5,0)$) {$c_3^X$};
            \node[below] at ($(x)+(0,-.2)$) {$c_4^X$};

            \node[red] (fr) at (0,0) {$*$};
            \node (cc3) at (c3) {$\phantom{*}$};

            \draw[red,thick,->] (fr) to ($(c2) + (0:.3)$);
            \draw[red,thick,->] (cc3) to[bend right] ($(c2) + (90:.3)$);
            
       \end{tikzpicture}
       \caption{Freckle stratum.}
       \label{fig:Moebius_inequivalent_situation_freckle}
    \end{subfigure}
    \caption{An apparent ``contradiction'' to M\"obius invariance of $\QM$ numbers.}
    \label{fig:Moebius_inequivalent_situations}
\end{figure}
The $\QM$ numbers of the configurations depicted in Figure \ref{fig:Moebius_inequivalent_situation1} and Figure \ref{fig:Moebius_inequivalent_situation2} are $1$ and $2$ respectively, while $\KM=1$ in both cases (a single line through two points in $\CP^2$). 
In Figure \ref{fig:Moebius_inequivalent_situation_freckle} we show the degenerate configuration of Figure \ref{fig:Moebius_inequivalent_situation2} -- a proper quasimap, which contributes the additional $1$ to the $\QM$ number. 
Importantly, the running point (together with the freckle) collapses with $c_2^X$. 
This is clearly a non-generic situation and hence cannot be reached from a generic situation, such as shown in Figure \ref{fig:Moebius_inequivalent_situation1}, by action of a M\"obius transformation.
\end{remark}

Now, let us first assume that $N_2 + N_3  > N$. Then there are two freckle strata $Z_i$, where the freckle and the running point sit at $c_i^X$ for $i=2,3$, and its dimension is $d_i = N - N_j - N_k$, with $\{i,j,k\} = \{1,2,3\}$ (the freckle at $c_1^X$ is prohibited by $N_2 + N_3 > N$). The 
contribution of such a stratum to $\PQM$ is 
\begin{equation}
    \begin{split}
        \int_{Z_i}c(B_{Z_i}) =\int_{Z_i} \frac{(1 + \zeta_i)^{2N+2}(1+\zeta_i)^{d_i+1}}{(1+\zeta_i)^{2N+2}}= d_i + 1,
    \end{split}
\end{equation}
where $\zeta_i$ is the generator of $H^2(Z_i)=H^2(\CP^{d_i})$.
Therefore, we obtain the total $\PQM$ contribution 
$$d_2 + d_3 + 2 = 2N - 2N_1 - N_2 - N_3 + 2 = 2N +1 - (N_1 - N_2 - N_3) - N_1  = K - N_1.$$ 
Here we have used that $N_1 + N_2 + N_3 = 2N + 2 - K$. In particular, the KM number is $N_1$. 

\subsubsection{
Conic through five points: \texorpdfstring{$1 + 6 + 9 = 16$}{1+6+9=16}}\label{sec:1+3+12=16} 
In this example we will recover the fact that there is a unique conic through five points in general position, using quasimap counting machinery.

We set
$k = 1$, $n = 2$ and $d=2$. Remember that $\dim \QMap_2(\CP^1,\CP^2) = 8$, so we can fix $l=5$,  with source cycles $c_i^X$ points for $i=1,2,3$ and $\CP^1$ otherwise, and $c_i^Y$ a collection of points (see Figure \ref{fig: 1 + 3  + 12 = 16 1}). The $\GLSM$ number is then 
\begin{equation}
    \GLSM(\CP^1,\CP^2,\mathcal{D}) = \int_{\CP^8\times \CP^1_{(4)}\times \CP^1_{(5)}} 
    (H^2)^3 \wedge_{i=4}^5(H+2\, p_i^* h)^2 
    = 16.
\end{equation}
Let us fix again $c_i^X = \{0,1,\infty\}$  as before, and take $c_1^Y = (0:0:1), c_2^Y = (0:1:0),c_3^Y = (1:0:0)$. Such degree 2 quasimaps can be parametrized by $(a:b:c) \in \CP^2$ by 
\begin{equation} \underline{f}(x^0:x^1) = (a((x^0)^2 - x^0x^1): bx^0x^1: c(x^0x^1-(x^1)^2)).
\label{eq: P1 -> P2 d=2 qmap}
\end{equation}
We notice that $\underline{f}$ has $k$ freckles if $k$ of the parameters $a,b,c$ are zero. That is, for $a,b,c$ nonzero we obtain a degree 2 holomorphic map $f\colon\CP^1 \to \CP^2$, which is uniquely fixed by the requirement that it passes through the five points $c_i^Y$. If one out of the three parameters is zero, then $\underline{f}$ has a freckle at one of the $c_i^X$, $i=1,2,3$, and both running points sit at this freckle - there are three such configurations, depicted in Figure \ref{fig: 1 + 3 + 12 = 16 2}. For each of these there is actually a $\CP^1$ of solutions $Z_i$, since there are four remaining equations, but the dimension of the space of degree 1 quasimaps is 5. However, if two of the parameters vanish, then $\underline{f}$ has two freckles at two of the $c_i^X$, $i=1,2,3$. There are 3 possible configurations for the 2-freckle locus, however, we have to take into account the 2 running points: They can sit at the two 2 freckles in any configuration, however, if they go to the same freckle, sitting at $c_i^X$, then that is actually part of the stratum $Z_i$.
See Figure \ref{fig: 1 + 3 + 12 = 16 3}.
 \begin{figure}[H]
    \centering 
\begin{subfigure}[t]{0.3\textwidth}
    \centering
    \begin{tikzpicture}[baseline={([yshift=-.5ex]current bounding box.center)},scale=.6]
    \draw (0,0) circle (3);

    \coordinate[point] (c3) at (108:2);
    \coordinate[point] (c2) at (180:2);
    \coordinate[point] (c1) at (252:2);

    \coordinate[point] (x) at (36:2);
    \coordinate[point] (y) at (-36:2);

    \begin{scope}[every node/.style={transform shape}]
        \draw (x) circle (.5);
        \draw (y) circle (.5);

        \draw pic at (x) {H={45}{.5}};
        \draw pic at (x) {H={-45}{.5}};

        \draw pic at (y) {H={45}{.5}};
        \draw pic at (y) {H={-45}{.5}};

        \draw pic at (c1) {H={45}{.5}};
        \draw pic at (c1) {H={-45}{.5}};

        \draw pic at (c2) {H={45}{.5}};
        \draw pic at (c2) {H={-45}{.5}};

        \draw pic at (c3) {H={45}{.5}};
        \draw pic at (c3) {H={-45}{.5}};
    \end{scope}

    \node[right] at ($(c1)+ (.3,0)$) {$c_1^X$};
    \node[below] at ($(c2)+(0,-.3)$) {$c_2^X$};
    \node[below] at ($(c3)+(0,-.3)$) {$c_3^X$};

    \node[below] at ($(x)+(0,-.5)$) {$x$};
    \node[below] at ($(y)+(0,-.5)$) {$y$};
  \end{tikzpicture} 
\caption{Pictorial representation of the enumerative problem. }
\label{fig: 1 + 3  + 12 = 16 1}

    \end{subfigure}
    \begin{subfigure}[t]{0.3\textwidth}
    \centering
    \begin{tikzpicture}[baseline={([yshift=-.5ex]current bounding box.center)},scale=.6]
    \draw (0,0) circle (3);

    \coordinate[point] (c3) at (108:2);
    \coordinate[point] (c2) at (180:2);
    \coordinate[point] (c1) at (252:2);

    \coordinate[point] (x) at (36:2);
    \coordinate[point] (y) at (-36:2);

    \begin{scope}[every node/.style={transform shape}]
        \draw (x) circle (.5);
        \draw (y) circle (.5);

        \draw pic at (x) {H={45}{.5}};
        \draw pic at (x) {H={-45}{.5}};

        \draw pic at (y) {H={45}{.5}};
        \draw pic at (y) {H={-45}{.5}};

        \draw pic at (c1) {H={45}{.5}};
        \draw pic at (c1) {H={-45}{.5}};

        \draw pic at (c2) {H={45}{.5}};
        \draw pic at (c2) {H={-45}{.5}};

        \draw pic at (c3) {H={45}{.5}};
        \draw pic at (c3) {H={-45}{.5}};
    \end{scope}

    \node[right] at ($(c1)+ (.3,0)$) {$c_1^X$};
    \node[below] at ($(c2)+(0,-.3)$) {$c_2^X$};
    \node[left] at ($(c3)+(-.3,0)$) {$c_3^X$};

    \node[below] at ($(x)+(0,-.5)$) {$x$};
    \node[below] at ($(y)+(0,-.5)$) {$y$};

    \node[red] (fr) at (0,0) {$*$};

    \draw[thick,blue,->] (fr) to[bend right] ($(c1) + (90:0.3)$);
    \draw[thick,red,->] (fr) to ($(c2) + (0:0.3)$);
    \draw[thick,purple,->] (fr) to[bend left] ($(c3) + (-90:.3)$);
    \end{tikzpicture} 
\caption{3 strata of 1-freckle configurations. 
In each case both running points sit at the freckle.}
\label{fig: 1 + 3 + 12 = 16 2}

    \end{subfigure}
        \begin{subfigure}[t]{0.3\textwidth}
    \centering
\begin{tikzpicture}[baseline={([yshift=-.5ex]current bounding box.center)},scale=.6]
    \draw (0,0) circle (3);

    \coordinate[point] (c3) at (108:2);
    \coordinate[point] (c2) at (180:2);
    \coordinate[point] (c1) at (252:2);

    \coordinate[point] (x) at (36:2);
    \coordinate[point] (y) at (-36:2);

    \begin{scope}[every node/.style={transform shape}]
        \draw (x) circle (.5);
        \draw (y) circle (.5);

        \draw pic at (x) {H={45}{.5}};
        \draw pic at (x) {H={-45}{.5}};

        \draw pic at (y) {H={45}{.5}};
        \draw pic at (y) {H={-45}{.5}};

        \draw pic at (c1) {H={45}{.5}};
        \draw pic at (c1) {H={-45}{.5}};

        \draw pic at (c2) {H={45}{.5}};
        \draw pic at (c2) {H={-45}{.5}};

        \draw pic at (c3) {H={45}{.5}};
        \draw pic at (c3) {H={-45}{.5}};
    \end{scope}

    \node[right] at ($(c1)+ (.3,0)$) {$c_1^X$};
    \node[below] at ($(c2)+(0,-.3)$) {$c_2^X$};
    \node[left] at ($(c3)+(-.3,0)$) {$c_3^X$};

    \node[below] at ($(x)+(0,-.5)$) {$x$};
    \node[below] at ($(y)+(0,-.5)$) {$y$};

    \node[red] (fr1) at (0,-.5) {$*$};
    \node[red] (fr2) at (0,.5) {$*$};

    \node (xx) at (x) {$\phantom{*}$};
    \node (yy) at (y) {$\phantom{*}$};
    
    \draw[thick,red,->] (fr1) to ($(c2) + (0:0.3)$);
    \draw[thick,red,->] (yy) to[bend left] ($(c2) + (-90:0.3)$);
    
    \draw[thick,red,->] (fr2) to[bend left] ($(c3) + (-90:.3)$);
    \draw[thick,red,->] (xx) to[bend right] ($(c3) + (0:.3)$);
    \end{tikzpicture} 
\caption{An example of a 2-freckle configuration. There are 5 similar configurations, depending on the position of the freckles and the location of the running points.}
\label{fig: 1 + 3 + 12 = 16 3}

    \end{subfigure}
    \caption{Quasimap count for degree 2 quasimaps $\CP^1 \qra \CP^2$.}
    \label{fig: 1 + 3 + 12 = 16}

    \end{figure}
    
The set of proper quasimaps, $\QMap(X,Y,\mathcal{D})^{\pr}$, admits the following stratification:
\begin{equation}
\QMap(X,Y,\mathcal{D})^{\pr} = \sqcup_{i \neq j =1}^3 Z_{ij} \sqcup_{i=1}^3 Z_i \subset \CP^8 \times \CP^1 \times \CP^1, 
\end{equation}
where the six components $Z_{ij}$ are given by
\begin{equation}
    Z_{ij} = \{ (\underline{f}_{(i,j)},c_i^X,c_j^X)  \}.
\end{equation}
Here $\underline{f}_{(i,j)}$ is the unique quasimap that has freckles at fixed points $c_i^X$ and $c_j^X$ and on the complement is given by $c_k^Y$. 
Since all those components are points, they contribute with $1$ to $\PQM(X,Y,\mathcal D)$ so that the total contribution of $\sqcup_{i\neq j = 1}^3 Z_{ij}$ is 6.

The components $Z_i$ are given by
\begin{equation}
    Z_{i} = \tilde Z_i \times \{ c_i^X \} \times \{ c_i^X \},
\end{equation}
where $\tilde{Z}_i\cong \CP^1$ is the stratum of quasimaps $\underline{f} = P\cdot \underline{f}_1$ with $P$ vanishing at $c_i^X$, and $\underline{f}_1$ the degree 1 quasimap satisfying the equations given at $c_j^X$ for $j \in \{1,2,3\} \setminus \{i\}$. 




Here $\tilde{Z}_i \cong \CP^1$ because there is a $\CP^5$ of degree 1 quasimaps $\underline{f}_1$ on which we impose 4 linear equations. We notice that this counting data is quasi-stable. The $\PQM$ number of the stratum $Z_i = \tilde{Z_i}\times \{c_i^X\} \times \{c_i^X\}$ is computed by the Chern class of the excess bundle according to \eqref{eq:BZ}: 
\begin{align*}
c\left(B_{Z_i}\right) &=  \frac{\big(1 + p^*_{\CP^8}c_1(\OO(1))\big)^{10}|_{Z_i}c(\CP^1)}{c(\CP^8)|_{Z_i}} \\
&=\frac{(1+H)^{10}|_{Z_i}(1+\zeta_i)^2}{(1+H)^9|_{Z_i}} = 1+ 3\zeta_i,
\end{align*}
where again $\zeta_i$ denotes the generator of $H^2(Z_i) = H^2(\CP^1)$. 
Hence, 
\[
\PQM(X,Y,\mathcal{D};\sqcup_{i=1}^3 Z_i) = \sum_{i=1}^3 \PQM(X,Y,\mathcal{D};Z_i) = \sum_{i=1}^3 \int_{Z_i} c\left( B_{Z_i} \right) = 3\cdot 3 = 9.
\]
In total we therefore have $\PQM(X,Y,\mathcal{D}) = 6 + 9 = 15$, and therefore we obtain that 
$$\KM(X,Y,\mathcal{D}) = \GLSM(X,Y,\mathcal{D}) - \PQM(X,Y,\mathcal{D})=1.$$
We remark that in $Z_i$, there are quasimaps with a second freckle (which has to be located at another fixed point $c_j^X$) but these contributions are not counted separately.

\subsection{Unstable examples: the need for Segre class computations.}
To compute the $\PQM$ number in non-quasistable examples (with the zero-locus $Z\subset \Var$ of the section $\sigma$ not given by a disjoint union of smooth submanifolds), one needs to compute Segre classes, see Section \ref{sec: unstable}. We defer such computations to a future paper and restrict ourselves to the analysis of the locus $Z$ in several examples. 

\subsubsection{The simplest non-quasistable example} \label{sss: Segre class computed}
Consider degree $1$ quasimaps $\CP^1\qra \CP^3$ with $l=5$ source/target cycles: 3 fixed points mapping to planes and two running points mapping to lines, see Figure \ref{fig:P1 to P3 non quasi stable}.
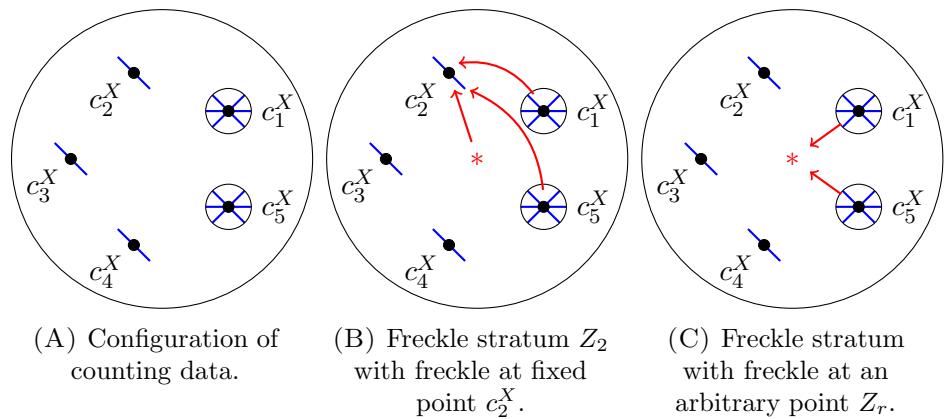
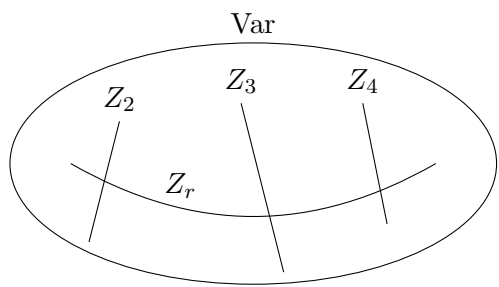
\begin{figure}[H]
    \centering\hfill
    \begin{subfigure}[t]{0.3\textwidth}
    \centering
      \begin{tikzpicture}[baseline={([yshift=-.5ex]current bounding box.center)},scale=.6]
    \draw (0,0) circle (3.3);
    \coordinate[point] (c2) at (108:2);
    \coordinate[point] (c3) at (180:2);
    \coordinate[point] (c4) at (252:2);

    \coordinate[point] (x) at (36:1.8);
    \coordinate[point] (y) at (-36:1.8);

    \begin{scope}[every node/.style={transform shape}]
        \draw (x) circle (.5);
        \draw (y) circle (.5);
        
        \draw pic at (x) {H={45}{.5}};
        \draw pic at (x) {H={-45}{.5}};
        \draw pic at (x) {H={0}{.5}};

        \draw pic at (y) {H={45}{.5}};
        \draw pic at (y) {H={-45}{.5}};
        \draw pic at (y) {H={0}{.5}};

        \draw pic at (c1) {H={-45}{.5}};
        
        \draw pic at (c2) {H={-45}{.5}};

        \draw pic at (c3) {H={-45}{.5}};
    \end{scope}

    \node[below left] at (c2) {$c_2^X$};
    \node[below left] at (c3) {$c_3^X$};
    \node[below left] at (c4) {$c_4^X$};

    \node[right] at ($(x)+(0.5,0)$) {$c_1^X$};
    \node[right] at ($(y)+(0.5,0)$) {$c_5^X$};

  \end{tikzpicture} 
    \caption{Configuration of counting data.}
    \label{fig:P1 to P3 non quasi stable conf}
    \end{subfigure}
    \hfill
    \begin{subfigure}[t]{0.3\textwidth}
    \centering
  \begin{tikzpicture}[baseline={([yshift=-.5ex]current bounding box.center)},scale=.6]
    \draw (0,0) circle (3.3);
    \coordinate[point] (c2) at (108:2);
    \coordinate[point] (c3) at (180:2);
    \coordinate[point] (c4) at (252:2);

    \coordinate[point] (x) at (36:1.8);
    \coordinate[point] (y) at (-36:1.8);

    \begin{scope}[every node/.style={transform shape}]
        \draw (x) circle (.5);
        \draw (y) circle (.5);
        
        \draw pic at (x) {H={45}{.5}};
        \draw pic at (x) {H={-45}{.5}};
        \draw pic at (x) {H={0}{.5}};

        \draw pic at (y) {H={45}{.5}};
        \draw pic at (y) {H={-45}{.5}};
        \draw pic at (y) {H={0}{.5}};

        \draw pic at (c1) {H={-45}{.5}};
        
        \draw pic at (c2) {H={-45}{.5}};

        \draw pic at (c3) {H={-45}{.5}};
    \end{scope}

    \node[below left] at (c2) {$c_2^X$};
    \node[below left] at (c3) {$c_3^X$};
    \node[below left] at (c4) {$c_4^X$};

    \node[right] at ($(x)+(0.5,0)$) {$c_1^X$};
    \node[right] at ($(y)+(0.5,0)$) {$c_5^X$};

    \node[red] (fr) at (0,0) {$*$};

    \node (xx) at (x) {$\phantom{*}$}; 
    \node (yy) at (y) {$\phantom{*}$};
    \node (cc2) at (c2) {$\phantom{*}$};

    \draw[thick,red,->] (xx) to[bend right] ($(cc2) + (45:.3)$);
    \draw[thick,red,->] (fr) to (cc2);
    \draw[thick,red,->] (yy) to[bend right] ($(cc2) + (-40:.6)$);

  \end{tikzpicture}
    \caption{Freckle stratum $Z_2$ with freckle at fixed point $c_2^X$.}
    \label{fig:P1 to P3 non quasi stable 1 fixed fr}
    \end{subfigure}
    \hfill
    \begin{subfigure}[t]{0.3\textwidth}
    \centering
    \begin{tikzpicture}[baseline={([yshift=-.5ex]current bounding box.center)},scale=.6]
    \draw (0,0) circle (3.3);
    \coordinate[point] (c2) at (108:2);
    \coordinate[point] (c3) at (180:2);
    \coordinate[point] (c4) at (252:2);

    \coordinate[point] (x) at (36:1.8);
    \coordinate[point] (y) at (-36:1.8);

    \begin{scope}[every node/.style={transform shape}]
        \draw (x) circle (.5);
        \draw (y) circle (.5);
        
        \draw pic at (x) {H={45}{.5}};
        \draw pic at (x) {H={-45}{.5}};
        \draw pic at (x) {H={0}{.5}};

        \draw pic at (y) {H={45}{.5}};
        \draw pic at (y) {H={-45}{.5}};
        \draw pic at (y) {H={0}{.5}};

        \draw pic at (c1) {H={-45}{.5}};
        
        \draw pic at (c2) {H={-45}{.5}};

        \draw pic at (c3) {H={-45}{.5}};
    \end{scope}

    \node[below left] at (c2) {$c_2^X$};
    \node[below left] at (c3) {$c_3^X$};
    \node[below left] at (c4) {$c_4^X$};

    \node[right] at ($(x)+(0.5,0)$) {$c_1^X$};
    \node[right] at ($(y)+(0.5,0)$) {$c_5^X$};

    \node[red] (fr) at (0,0) {$*$};

    \node (xx) at (x) {$\phantom{*}$}; 
    \node (yy) at (y) {$\phantom{*}$};

    \draw[thick,red,->] (xx) to (fr);
    \draw[thick,red,->] (yy) to (fr);

  \end{tikzpicture}
    \caption{Freckle stratum with freckle at an arbitrary point $Z_r$.}
    \label{fig:P1 to P3 non quasi stable 1 running fr}
    \end{subfigure}
    \hfill
    \vspace{1cm}
    \begin{subfigure}[t]{\textwidth}
    \centering
    \begin{tikzpicture}[scale = 0.8]
        \draw (0,0) ellipse (4cm and 2cm);
        \node[above] at (0,2) {$\Var$};
        \draw (-3,0) to[bend right] node[above, pos=0.3] {$Z_r$} (3,0);
        \draw (-2.7,-1.3) -- (-2.2,0.7);
        \node[above] at (-2.2,0.7) {$Z_2$};
        \draw (0.5,-1.8) -- (-0.2,1);
        \node[above] at (-0.2,1) {$Z_3$}; 
        \draw (2.2,-1) -- (1.8,1);
        \node[above] at (1.8,1) {$Z_4$};
    \end{tikzpicture}
    \caption{Schematic picture of the singular 1-freckle stratum $Z$}\label{fig:P1 to P3 non quasi stable Z.}
    \label{fig:P1 to P3 non quasi stable Z}
    \end{subfigure}
    \caption{An unstable quasimap counting data for $\QMap_1(\CP^1,\CP^3)$.}
    \label{fig:P1 to P3 non quasi stable}
\end{figure}
In this example  $\GLSM = 9$, $\KM =1$. One has a running freckle stratum $Z_r$ 
(Figure \ref{fig:P1 to P3 non quasi stable 1 running fr}) and 3 fixed freckle strata $Z_i$ (Figure \ref{fig:P1 to P3 non quasi stable 1 fixed fr}), each of them is a $\CP^1$. These strata intersect when the running freckle hits a fixed point, see Figure \ref{fig:P1 to P3 non quasi stable Z}, therefore $Z = Z_2 \cup Z_3 \cup Z_4 \cup Z_r$ is singular. 

Computing the contributions of the components independently, without taking intersections into account, we would have obtained
\begin{equation}\label{14 neq 8}
\underbrace{3+3+3}_{Z_2,Z_3,Z_4}+\underbrace{5}_{Z_r} \neq  \underbrace{8}_{\QM-\KM}.
\end{equation} 
This shows that due to the non-trivial intersection of the strata, one cannot treat the $Z_{(I)}$ independently and one is led to the problem to compute the Segre class of $Z = \bigcup_I Z_{(I)}$.

\subsubsection{An 
unstable example with a scar}\label{sss: unstable example with a scar}
Consider degree 1 quasimaps $\CP^2 \qra \CP^N$, with the same configuration as in Figure \ref{fig:P2 to P3 3 = 1 + 2}, but now we put $N$ equations at $c_1^X,c_2^X,c_5^X$. 
The total number of equations is then $3N + 4$, whereas $\dim \QMap_1(\CP^2,\CP^N) = 3N+2$. Hence, $$\dim \Var = \dim \QMap_1(\CP^2,\CP^N) +2,$$ 
so that the counting data is balanced. The $\QM$ number of this configuration is 
$\binom{N}{2}$. 
The 1-freckle stratum now has codimension $N + 1 - 2$, so $\dim \QMap^1_1 = 2N +3$.
In particular, the 1-freckle strata are again given by configurations where the freckle and the running point collide with a fixed point, see Figure \ref{fig:one_freckle_stratum_P2_P3_modified}.
These configurations are strata of dimension $2N + 1 - N -4 = N-3$, in particular, for $N > 3$ they have position dimension and the counting data is not stable. However, in this case there is also a \emph{scar} stratum: we can have two freckles, each at one of the fixed points $c_1^X,c_2^X$. In this case, the line through this two points forms a scar (cf. Section \ref{sec:freckle_stratification}) and the running point can be at any point on the scar, as shown in Figure \ref{fig: scar stratum P2_P3}. This stratum is of the form $\CP^{N-4} \times \CP^1$, where the second factor describes the position of the running point along the scar, and intersects both 1-freckle strata in a $\CP^{N-4}$ when this running point hits either $c_1^X$ or $c_2^X$. 
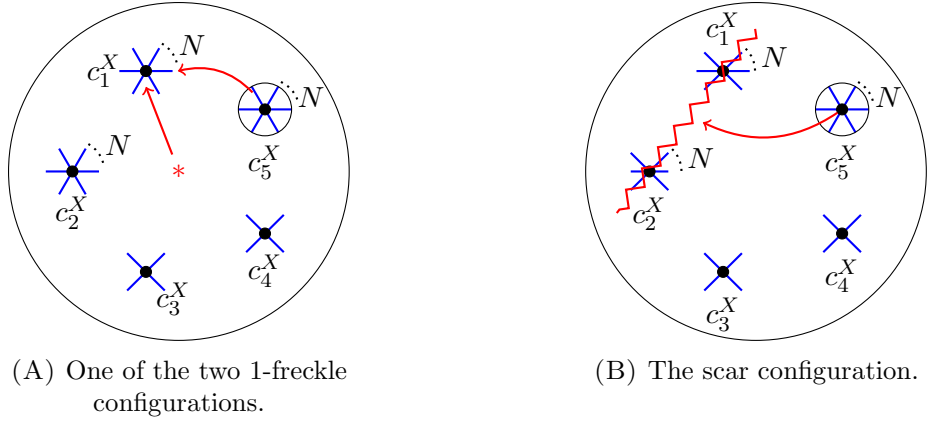
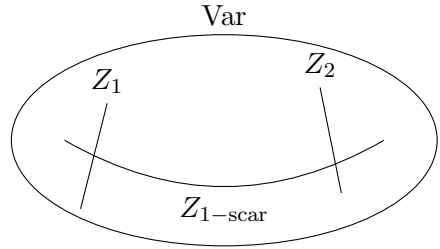
\begin{figure}[H]
    \centering
    \begin{subfigure}[t]{.4\textwidth}
            \centering
    \begin{tikzpicture}[baseline={([yshift=-.5ex]current bounding box.center)},scale=.7]
    \draw (0,0) circle (3.2);

    \coordinate[point] (c3) at (252:2);
    \coordinate[point] (c2) at (180:2);
    \coordinate[point] (c1) at (108:2);
    \coordinate[point] (c4) at (-36:2);

    \coordinate[point] (x) at (36:2);

    \begin{scope}[every node/.style={transform shape}]
        \draw (x) circle (.5);

        \draw pic at (x) {H={60}{.5}};
        \draw pic at (x) {H={-60}{.5}};
        \draw pic at (x) {H={0}{.5}};

        \draw pic at (c1) {H={60}{.5}};
        \draw pic at (c1) {H={-60}{.5}};
        \draw pic at (c1) {H={0}{.5}};
        
        \draw pic at (c2) {H={60}{.5}};
        \draw pic at (c2) {H={-60}{.5}};
        \draw pic at (c2) {H={0}{.5}};

        \draw pic at (c3) {H={45}{.5}};
        \draw pic at (c3) {H={-45}{.5}};

        \draw pic at (c4) {H={45}{.5}};
        \draw pic at (c4) {H={-45}{.5}};
    \end{scope}

    \draw[dotted,thick] ($(x) + (60:0.6)$) arc (65:15:0.6cm);
    \node at ($(x) + (15:.9)$) {$N$};
    \draw[dotted,thick] ($(c1) + (60:0.6)$) arc (60:15:0.6cm);
    \node at ($(c1) + (30:1)$) {$N$};
    \draw[dotted,thick] ($(c2) + (60:0.6)$) arc (60:15:0.6cm);
    \node at ($(c2) + (30:1)$) {$N$};

    \node[below right] at (c3) {$c_3^X$};
    \node[below,yshift=-.5em] at (c2) {$c_2^X$};
    \node[left,xshift=-.5em] at (c1) {$c_1^X$};
    \node[below] at ($(c4)+(0,-.2)$) {$c_4^X$};
    \node[below] at ($(x)+(0,-.5)$) {$c_5^X$};

    \node[red] (fr) at (0,0) {$*$};

    \node (xx) at (x) {$\phantom{*}$}; 
    \draw[thick,red,->] (xx) to[bend right] ($(c1) + (0:.6)$);
    \draw[thick,red,->] (fr) to ($(c1) + (-90:.3)$);
  \end{tikzpicture} 
    \caption{One of the two 1-freckle configurations.}
    \label{fig:one_freckle_stratum_P2_P3_modified}
    \end{subfigure}
    \hfill
    \begin{subfigure}[t]{.4\textwidth}
    \centering
    \begin{tikzpicture}[baseline={([yshift=-.5ex]current bounding box.center)},scale=.7]
    \draw (0,0) circle (3.2);

    \coordinate[point] (c1) at (108:2);
    \coordinate[point] (c2) at (180:2);
    \coordinate[point] (c3) at (252:2);
    \coordinate[point] (c4) at (-36:2);

    \coordinate[point] (x) at (36:2);

    \begin{scope}[every node/.style={transform shape}]
        \draw (x) circle (.5);

        \draw pic at (x) {H={60}{.5}};
        \draw pic at (x) {H={-60}{.5}};
        \draw pic at (x) {H={0}{.5}};

        \draw pic at (c1) {H={45}{.5}};
        \draw pic at (c1) {H={-45}{.5}};
        \draw pic at (c1) {H={0}{.5}};

        \draw pic at (c2) {H={45}{.5}};
        \draw pic at (c2) {H={-45}{.5}};
        \draw pic at (c2) {H={0}{.353}};

        \draw pic at (c3) {H={45}{.5}};
        \draw pic at (c3) {H={-45}{.5}};

        \draw pic at (c4) {H={45}{.5}};
        \draw pic at (c4) {H={-45}{.5}};
    \end{scope}

    \draw[dotted,thick] ($(x) + (60:0.6)$) arc (60:15:0.6cm);
    \node at ($(x) + (15:.9)$) {$N$};
    \draw[dotted,thick] ($(c1) + (45:0.6)$) arc (45:0:0.6cm);
    \node at ($(c1) + (15:1)$) {$N$};
    \draw[dotted,thick] ($(c2) + (45:0.6)$) arc (45:0:0.6cm);
    \node at ($(c2) + (15:1)$) {$N$};

    \node[above] at ($(c1) + (0,.3)$) {$c_1^X$};
    \node[below] at ($(c2) + (0,-.3)$) {$c_2^X$};
    \node[below] at ($(c3) + (0,-.3)$) {$c_3^X$};
    \node[below] at ($(c4) + (0,-.3)$) {$c_4^X$};
    \node[below] at ($(x) + (0,-.5)$) {$c_5^X$};

    \node (xx) at (x) {$\phantom{*}$}; 
    \draw[red,thick,decorate,decoration=zigzag] ($(c1) + (53:1)$) -- ($(c2) + (232:1)$);
    \draw[->,thick,red] (x) to[bend left] ($(c1) + (232:1) + (322:.3)$);
    
    \end{tikzpicture} 
    \caption{The scar configuration.}
    \label{fig: scar stratum P2_P3}
    \end{subfigure}
    \hfill
    \begin{subfigure}[t]{\textwidth}
    \centering
    \begin{tikzpicture}[baseline={([yshift=-.5ex]current bounding box.center)},scale=.7]
        \draw (0,0) ellipse (4cm and 2cm);
        \node[above] at (0,2) {$\Var$};
        \draw (-3,0) to[bend right] node[below, pos=0.5] {$Z_{1-{\rm scar}}$} (3,0);
        \draw (-2.7,-1.3) -- (-2.2,0.7);
        \node[above] at (-2.2,0.7) {$Z_1$}; 
        \draw (2.2,-1) -- (1.8,1);
        \node[above] at (1.8,1) {$Z_2$};
    \end{tikzpicture}
    \caption{Intersection of the 1-freckle strata $Z_1 = \CP^{N-3} \times \{ c_1^X\}$ and $Z_2 = \CP^{N-3} \times \{c_2^X\}$ with the 1-scar stratum $Z_{1-{\rm scar}} = \CP^{N-4} \times \CP^1$.}
    \end{subfigure}
    \caption{Unstable example with a scar.}
    \label{Fig 12}
 \end{figure}


\subsubsection{An 
example with a lot of stuff
}
Consider again the setup of example \ref{sec:1+3+12=16} but with a third running point added. 
We thus have $k=1$, $n=2$, $d=2$ but now $l=6$ and we take source cycles $c_i^X$ to be points for $i=1,2,3$ and $c_i^X=\CP^1$ for $i=4,5,6$. 
To compensate for the extra constraint introduced by the extra running point we take $c_1^Y$ to be a line and $c_i^Y$ to be a point for $i > 1$. 
The configuration is sketched in Figure \ref{fig: 2 + 62 = 64 1}. 
Since a line intersects a conic in two points, the KM number of this configuration is $2$. 
On the other hand, the $\GLSM$ number is 
\[
\int_{\CP^8 \times \CP^1_{(4)} \times \CP^1_{(5)}\times \CP^1_{(6)}}H^5\cdot \wedge_{i=4}^6(H + 2\, p_i^* h)^2 = 64.
\]
Hence we ought to find that the proper quasimap contribution is $\PQM = 62$.
\begin{figure}[H]
    \centering 
    \centering
    \begin{tikzpicture}[baseline={([yshift=-.5ex]current bounding box.center)},scale=.8]
    \draw (0,0) circle (3);

    \coordinate[point] (c3) at (150:2);
    \coordinate[point] (c2) at (210:2);
    \coordinate[point] (c1) at (270:1.5);

    \coordinate[point] (x) at (-30:2);
    \coordinate[point] (y) at (30:2);
    \coordinate[point] (z) at (90:2);

    \begin{scope}[every node/.style={transform shape}]
        \draw (x) circle (.5);
        \draw (y) circle (.5);
        \draw (z) circle (.5);

        \draw pic at (x) {H={45}{.5}};
        \draw pic at (x) {H={-45}{.5}};

        \draw pic at (y) {H={45}{.5}};
        \draw pic at (y) {H={-45}{.5}};

        \draw pic at (z) {H={45}{.5}};
        \draw pic at (z) {H={-45}{.5}};

        \draw pic at (c1) {H={45}{.5}};

        \draw pic at (c2) {H={45}{.5}};
        \draw pic at (c2) {H={-45}{.5}};

        \draw pic at (c3) {H={45}{.5}};
        \draw pic at (c3) {H={-45}{.5}};
    \end{scope}

    \node[below] at ($(c1) + (0,-.2)$) {$c_1^X$};
    \node[below] at ($(c2) + (0,-.2)$) {$c_2^X$};
    \node[below] at ($(c3) + (0,-.2)$)  {$c_3^X$};

    \node[below] at ($(x) + (0,-.5)$) {$x$};
    \node[below] at ($(y) + (0,-.5)$) {$y$};
    \node[below] at ($(z) + (0,-.5)$) {$z$};
  \end{tikzpicture} 
    \caption{Pictorial representation of the enumerative problem. }
    \label{fig: 2 + 62 = 64 1}
\end{figure}
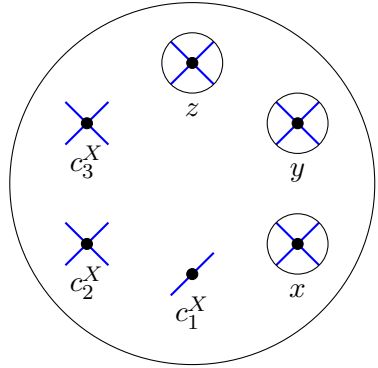

The space of proper quasimaps can be decomposed into the strata dipicted in Figure \ref{fig: 2 + 62 = 64 strata}. 

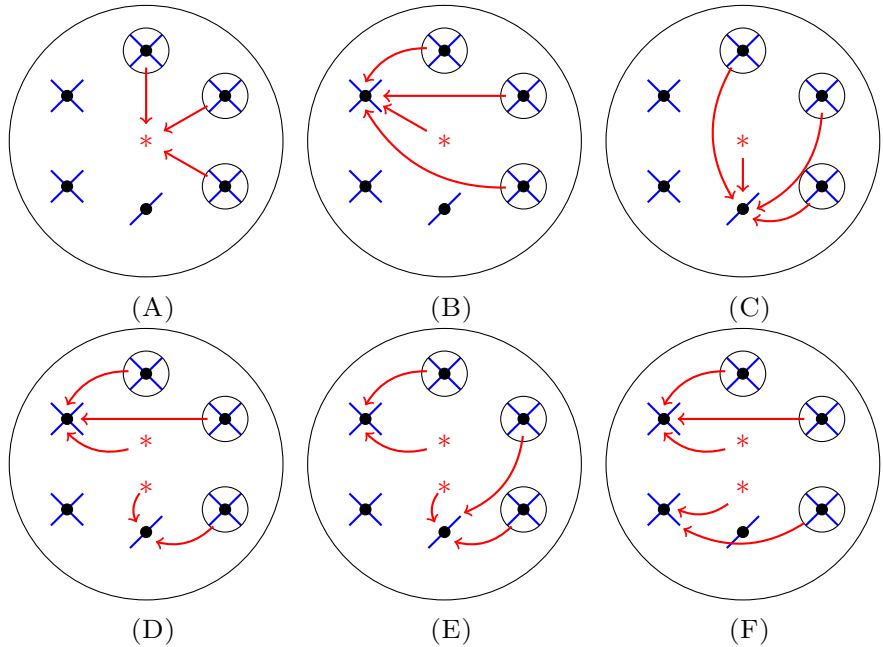
\begin{figure}[H]
    \centering
    \begin{subfigure}[t]{0.3\textwidth}
      \begin{tikzpicture}[baseline={([yshift=-.5ex]current bounding box.center)},scale=.6]
    \draw (0,0) circle (3);

    \coordinate[point] (c3) at (150:2);
    \coordinate[point] (c2) at (210:2);
    \coordinate[point] (c1) at (270:1.5);

    \coordinate[point] (x) at (-30:2);
    \coordinate[point] (y) at (30:2);
    \coordinate[point] (z) at (90:2);

    \begin{scope}[every node/.style={transform shape}]
        \draw (x) circle (.5);
        \draw (y) circle (.5);
        \draw (z) circle (.5);

        \draw pic at (x) {H={45}{.5}};
        \draw pic at (x) {H={-45}{.5}};

        \draw pic at (y) {H={45}{.5}};
        \draw pic at (y) {H={-45}{.5}};

        \draw pic at (z) {H={45}{.5}};
        \draw pic at (z) {H={-45}{.5}};

        \draw pic at (c1) {H={45}{.5}};

        \draw pic at (c2) {H={45}{.5}};
        \draw pic at (c2) {H={-45}{.5}};

        \draw pic at (c3) {H={45}{.5}};
        \draw pic at (c3) {H={-45}{.5}};
    \end{scope}



    \node[red] (fr) at (0,0) {$*$};

    \node (xx) at (x) {$\phantom{*}$}; 
    \node (yy) at (y) {$\phantom{*}$};
    \node (zz) at (z) {$\phantom{*}$};

    \draw[thick,red,->] (xx) to (fr);
    \draw[thick,red,->] (yy) to (fr);
    \draw[thick,red,->] (zz) to (fr);
  \end{tikzpicture}
  \caption{}
    \label{fig:one_freckle_stratum_1}
    \end{subfigure}
    \begin{subfigure}[t]{0.3\textwidth}
    \begin{tikzpicture}[baseline={([yshift=-.5ex]current bounding box.center)},scale=.6]
    \draw (0,0) circle (3);

    \coordinate[point] (c3) at (150:2);
    \coordinate[point] (c2) at (210:2);
    \coordinate[point] (c1) at (270:1.5);

    \coordinate[point] (x) at (-30:2);
    \coordinate[point] (y) at (30:2);
    \coordinate[point] (z) at (90:2);

    \begin{scope}[every node/.style={transform shape}]
        \draw (x) circle (.5);
        \draw (y) circle (.5);
        \draw (z) circle (.5);

        \draw pic at (x) {H={45}{.5}};
        \draw pic at (x) {H={-45}{.5}};

        \draw pic at (y) {H={45}{.5}};
        \draw pic at (y) {H={-45}{.5}};

        \draw pic at (z) {H={45}{.5}};
        \draw pic at (z) {H={-45}{.5}};

        \draw pic at (c1) {H={45}{.5}};

        \draw pic at (c2) {H={45}{.5}};
        \draw pic at (c2) {H={-45}{.5}};

        \draw pic at (c3) {H={45}{.5}};
        \draw pic at (c3) {H={-45}{.5}};
    \end{scope}



    \node[red] (fr) at (0,0) {$*$};

    \node (xx) at (x) {$\phantom{*}$}; 
    \node (yy) at (y) {$\phantom{*}$};
    \node (zz) at (z) {$\phantom{*}$};
    \node (cc3) at (c3){$\phantom{*}$};

    \draw[thick,red,->] (fr) to (cc3);
    \draw[thick,red,->] (xx) to[bend left] ($(cc3)+(-90:.3)$);
    \draw[thick,red,->] (yy) to (cc3);
    \draw[thick,red,->] (zz) to[bend right]  ($(cc3)+(90:.3)$);
  \end{tikzpicture}
  \caption{}
    \label{fig:one_freckle_stratum_2}
    \end{subfigure}
  \begin{subfigure}[t]{0.3\textwidth}
  \begin{tikzpicture}[baseline={([yshift=-.5ex]current bounding box.center)},scale=.6]
    \draw (0,0) circle (3);

    \coordinate[point] (c3) at (150:2);
    \coordinate[point] (c2) at (210:2);
    \coordinate[point] (c1) at (270:1.5);

    \coordinate[point] (x) at (-30:2);
    \coordinate[point] (y) at (30:2);
    \coordinate[point] (z) at (90:2);

    \begin{scope}[every node/.style={transform shape}]
        \draw (x) circle (.5);
        \draw (y) circle (.5);
        \draw (z) circle (.5);

        \draw pic at (x) {H={45}{.5}};
        \draw pic at (x) {H={-45}{.5}};

        \draw pic at (y) {H={45}{.5}};
        \draw pic at (y) {H={-45}{.5}};

        \draw pic at (z) {H={45}{.5}};
        \draw pic at (z) {H={-45}{.5}};

        \draw pic at (c1) {H={45}{.5}};

        \draw pic at (c2) {H={45}{.5}};
        \draw pic at (c2) {H={-45}{.5}};

        \draw pic at (c3) {H={45}{.5}};
        \draw pic at (c3) {H={-45}{.5}};
    \end{scope}



    \node[red] (fr) at (0,0) {$*$};

    \node (xx) at (x) {$\phantom{*}$}; 
    \node (yy) at (y) {$\phantom{*}$};
    \node (zz) at (z) {$\phantom{*}$};
    \node (cc1) at (c1){$\phantom{*}$};

    \draw[thick,red,->] (fr) to (cc1);
    \draw[thick,red,->] (xx) to[bend left] ($(cc1)+(-45:.3)$);
    \draw[thick,red,->] (yy) to[bend left] ($(cc1)+(0:.3)$);
    \draw[thick,red,->] (zz) to[bend right]  ($(cc1)+(135:.3)$);
  \end{tikzpicture}
  \caption{}
    \label{fig:one_freckle_stratum_3}
    \end{subfigure}
    \begin{subfigure}[t]{0.3\textwidth}
  \begin{tikzpicture}[baseline={([yshift=-.5ex]current bounding box.center)},scale=.6]
    \draw (0,0) circle (3);

    \coordinate[point] (c3) at (150:2);
    \coordinate[point] (c2) at (210:2);
    \coordinate[point] (c1) at (270:1.5);

    \coordinate[point] (x) at (-30:2);
    \coordinate[point] (y) at (30:2);
    \coordinate[point] (z) at (90:2);

    \begin{scope}[every node/.style={transform shape}]
        \draw (x) circle (.5);
        \draw (y) circle (.5);
        \draw (z) circle (.5);

        \draw pic at (x) {H={45}{.5}};
        \draw pic at (x) {H={-45}{.5}};

        \draw pic at (y) {H={45}{.5}};
        \draw pic at (y) {H={-45}{.5}};

        \draw pic at (z) {H={45}{.5}};
        \draw pic at (z) {H={-45}{.5}};

        \draw pic at (c1) {H={45}{.5}};

        \draw pic at (c2) {H={45}{.5}};
        \draw pic at (c2) {H={-45}{.5}};

        \draw pic at (c3) {H={45}{.5}};
        \draw pic at (c3) {H={-45}{.5}};
    \end{scope}



    \node[red] (fr1) at (0,-.5) {$*$};
    \node[red] (fr2) at (0,.5) {$*$};

    \node (xx) at (x) {$\phantom{*}$}; 
    \node (yy) at (y) {$\phantom{*}$};
    \node (zz) at (z) {$\phantom{*}$};
    \node (cc1) at (c1){$\phantom{*}$};
    \node (cc3) at (c3){$\phantom{*}$};

    \draw[thick,red,->] (fr2) to[bend left] ($(cc3) + (-90:.3)$);
    \draw[thick,red,->] ($(fr1.center)+ (-135:.2)$) to[bend right] ($(cc1) + (135:.3)$);
    
    \draw[thick,red,->] (xx) to[bend left] ($(cc1)+(-45:.3)$);
    \draw[thick,red,->] (yy) to ($(cc3)+(0:.3)$);
    \draw[thick,red,->] (zz) to[bend right]  ($(cc3)+(90:.3)$);
  \end{tikzpicture}
  \caption{}
    \label{fig:two_freckle_stratum_1}
    \end{subfigure}
    \begin{subfigure}[t]{0.3\textwidth}
  \begin{tikzpicture}[baseline={([yshift=-.5ex]current bounding box.center)},scale=.6]
    \draw (0,0) circle (3);

    \coordinate[point] (c3) at (150:2);
    \coordinate[point] (c2) at (210:2);
    \coordinate[point] (c1) at (270:1.5);

    \coordinate[point] (x) at (-30:2);
    \coordinate[point] (y) at (30:2);
    \coordinate[point] (z) at (90:2);

    \begin{scope}[every node/.style={transform shape}]
        \draw (x) circle (.5);
        \draw (y) circle (.5);
        \draw (z) circle (.5);

        \draw pic at (x) {H={45}{.5}};
        \draw pic at (x) {H={-45}{.5}};

        \draw pic at (y) {H={45}{.5}};
        \draw pic at (y) {H={-45}{.5}};

        \draw pic at (z) {H={45}{.5}};
        \draw pic at (z) {H={-45}{.5}};

        \draw pic at (c1) {H={45}{.5}};

        \draw pic at (c2) {H={45}{.5}};
        \draw pic at (c2) {H={-45}{.5}};

        \draw pic at (c3) {H={45}{.5}};
        \draw pic at (c3) {H={-45}{.5}};
    \end{scope}



    \node[red] (fr1) at (0,-.5) {$*$};
    \node[red] (fr2) at (0,.5) {$*$};

    \node (xx) at (x) {$\phantom{*}$}; 
    \node (yy) at (y) {$\phantom{*}$};
    \node (zz) at (z) {$\phantom{*}$};
    \node (cc1) at (c1){$\phantom{*}$};
    \node (cc3) at (c3){$\phantom{*}$};

    \draw[thick,red,->] (fr2) to[bend left] ($(cc3) + (-90:.3)$);
    \draw[thick,red,->] ($(fr1.center)+ (-135:.2)$) to[bend right] ($(cc1) + (135:.3)$);
    
    \draw[thick,red,->] (xx) to[bend left] ($(cc1)+(-45:.3)$);
    \draw[thick,red,->] (yy) to[bend left] ($(cc1)+(45:.6)$);
    \draw[thick,red,->] (zz) to[bend right]  ($(cc3)+(90:.3)$);
  \end{tikzpicture}
    \caption{}
    \label{fig:two_freckle_stratum_2}
    \end{subfigure}
    \begin{subfigure}[t]{0.3\textwidth}









    
    
  \begin{tikzpicture}[baseline={([yshift=-.5ex]current bounding box.center)},scale=.6]
    \draw (0,0) circle (3);

    \coordinate[point] (c3) at (150:2);
    \coordinate[point] (c2) at (210:2);
    \coordinate[point] (c1) at (270:1.5);

    \coordinate[point] (x) at (-30:2);
    \coordinate[point] (y) at (30:2);
    \coordinate[point] (z) at (90:2);

    \begin{scope}[every node/.style={transform shape}]
        \draw (x) circle (.5);
        \draw (y) circle (.5);
        \draw (z) circle (.5);

        \draw pic at (x) {H={45}{.5}};
        \draw pic at (x) {H={-45}{.5}};

        \draw pic at (y) {H={45}{.5}};
        \draw pic at (y) {H={-45}{.5}};

        \draw pic at (z) {H={45}{.5}};
        \draw pic at (z) {H={-45}{.5}};

        \draw pic at (c1) {H={45}{.5}};

        \draw pic at (c2) {H={45}{.5}};
        \draw pic at (c2) {H={-45}{.5}};

        \draw pic at (c3) {H={45}{.5}};
        \draw pic at (c3) {H={-45}{.5}};
    \end{scope}



    \node[red] (fr1) at (0,-.5) {$*$};
    \node[red] (fr2) at (0,.5) {$*$};

    \node (xx) at (x) {$\phantom{*}$}; 
    \node (yy) at (y) {$\phantom{*}$};
    \node (zz) at (z) {$\phantom{*}$};
    \node (cc1) at (c1){$\phantom{*}$};
    \node (cc2) at (c2){$\phantom{*}$};

    \draw[thick,red,->] (fr2) to[bend left] ($(cc3) + (-90:.3)$);
    \draw[thick,red,->] (fr1) to[bend left] ($(cc2) + (0:.3)$);
    
    \draw[thick,red,->] (xx) to[bend left] ($(cc2)+(-45:.6)$);
    \draw[thick,red,->] (yy) to ($(cc3)+(0:.3)$);
    \draw[thick,red,->] (zz) to[bend right]  ($(cc3)+(90:.3)$);
  \end{tikzpicture}
  \caption{}
    \label{fig:two_freckle_stratum_3}
    \end{subfigure}
    \caption{Graphical description of the strata of $\QMap_2^{\pr}(\CP^1,\CP^2,\mathcal D)$.}
    \label{fig: 2 + 62 = 64 strata}
\end{figure}

There is a unique stratum of the form shown in Figure \ref{fig:one_freckle_stratum_1} and \ref{fig:one_freckle_stratum_3},
two strata of the form depicted in Figure \ref{fig:one_freckle_stratum_2}, six strata of the situation shown in Figure \ref{fig:two_freckle_stratum_1} and \ref{fig:two_freckle_stratum_2} and three strata of the situation shown in Figure \ref{fig:two_freckle_stratum_3}.

We label the strata according to their graphical depiction by $Z_{(A)}$ to $Z_{(F)}$.


In situation $Z_{(A)}$, on the complement of the freckle the quasimap has degree 1.
Since $\dim\QMap_1(\CP^1,\CP^2) = 5$, the quasimap is uniquely fixed by imposing the remaining 5 equations.
The only degree of freedom left is the position of the freckle, hence $Z_{(A)} \cong \CP^1$.
This $\CP^1$ is embedded into $\Var$ diagonally: if $y = (y^0:y^1) \in \CP^1$ denotes the position of the freckle, then  
\begin{equation}
    \begin{split}
        Z_{(A)} &\hookrightarrow \Var = \CP^8 \times \CP^1 \times \CP^1 \times \CP^1 \\
        y &\mapsto ( \underline f,\ y,\ y,\ y )
    \end{split}
\end{equation}
where $\CP^8 = \QMap_2(\CP^1,\CP^2)$ and 
\begin{equation}
    \underline f = (Q(x)P_1(x):Q(x)P_3(x):Q(x)P_3(x)).
\end{equation}
Here, the $P_i(x)$ are homogeneous degree 1 polynomials of $x = (x^0:x^1)$ and $Q(x) = (y^1x^0 - y^0 x^1)$.


In situation $Z_{(B)}$, on the complement of the freckle, one is faced with a degree 1 quasimap subject to three constraints. 
Since $\QMap_1(\CP^1,\CP^2) = \CP^5$, this means that $Z_{(B)} \cong \CP^2  \times \{ c_3^X \} \times \{ c_3^X \} \times \{ c_3^X \} \subset \Var$. 

The stratum $Z_{(C)}$ is analyzed analogously to the stratum $Z_{(B)}$.
However, now one has to impose four constraints 
on the complement of the freckle resulting in $Z_{(C)} \cong \CP^1  \times \{ c_1^X \} \times \{ c_1^X \} \times \{ c_1^X \} \subset \Var$.

For $Z_{(D)}$ the quasimap on the complement of the freckles is constant 
and hence uniquely fixed by imposing the remaining two equations.
For example, with the notation of Figure \ref{fig: 2 + 62 = 64 strata}, if $c_i^X = y_i = (y_i^0:y_i^1)$ then 
\begin{equation}
     Z_{(D)} = \{ (\underline f, c^X_1, c^X_3, c^X_3 )\} \subset \Var, 
\end{equation}
where 
\begin{equation}
    \underline f(x) = ( aQ(y_1)Q(y_3) :  bQ(y_1)Q(y_3) :  cQ(y_1)Q(y_3)),
\end{equation}
with $Q(y) = (y^1x^0 - y^0 x^1)$ and $c_2^Y = (a:b:c) \in \CP^2$.

By the same reasoning, the stratum $Z_{(E)}$ is likewise a point, which can be described analogously as 
\begin{equation}
    Z_{(E)} = \{ (\underline f, c^X_1, c^X_1, c^X_3 )\} \subset \Var,
\end{equation}
where 
\begin{equation}
    \underline f(x) = ( aQ(y_1)Q(y_3) :  bQ(y_1)Q(y_3) :  cQ(y_1)Q(y_3)).
\end{equation} 

Finally, in the situation $Z_{(F)}$, the quasimap on the complement of the freckle is again constant, but is mapped to a line $\CP^1 \subset \CP^2$, 
rather than a point. 
Therefore, $Z_{(F)} \cong \CP^1 \times \{ c_2^X \} \times \{ c_3^X\} \times \{ c_3^X \} \subset \Var$.

It is important to note that the strata $Z_{(I)}$ are not all disjoint.
Indeed $Z_{(A)}$ intersects $Z_{(B)}$ and $Z_{(C)}$ non-trivially, cf.\ Figure \ref{fig:non_triv_intersection_strata_64}.
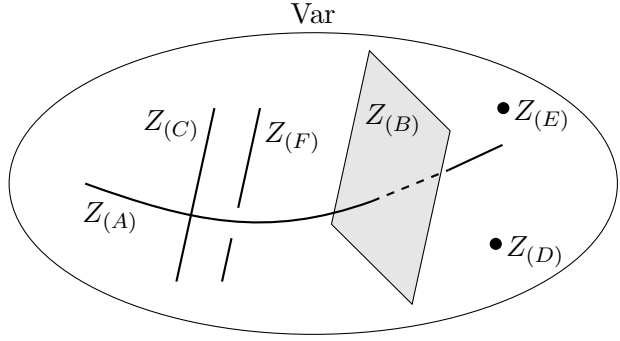
\begin{figure}[H]
    \centering
    \begin{tikzpicture}
        \draw (0,0) ellipse (4cm and 2cm);
        \node[above] at (0,2) {$\Var$};
        
        \path[decoration={
      /tikz/postaction={
          decoration={
              markings,
              mark=at position 0.0 with \coordinate (initial);,
              mark=at position 0.6 with \coordinate (middle1);,
              mark=at position 0.77 with \coordinate (middle2);,
              mark=at position 0.9 with \coordinate (final);
          },decorate
      }
  },decorate]
          (-3,0) to[bend right] node[below, pos=0.05] {$Z_{(A)}$} (3,1);

        \draw[thick] (initial) to[out=-20,in=200] (middle1); 
        \draw[thick,dashed] (middle1) to (middle2); 
        \draw[thick] (middle2) to ($(final) + (0,-0.1)$);
        
        \draw[thick] (-1.8,-1.3) -- (-1.3,1) node[left, pos=.9] {$Z_{(C)}$};

        \coordinate (ZBA) at (1.3,-1.6);
        \coordinate (ZBB) at (1.8,.7);
        \coordinate (ZBC) at ($(1.8,.7)+ (135:1.5)$);
        \coordinate (ZBD) at ($(1.3,-1.6) + (135:1.5)$);
        
        \begin{pgfonlayer}{bg}
        \draw[fill=gray!20] (ZBA) --  (ZBB) -- (ZBC) -- node[right, pos=.4] {$Z_{(B)}$} (ZBD) -- cycle;
        \end{pgfonlayer}

        \coordinate[point] (ZD) at (2.4,-.8);
        \coordinate[point] (ZE) at (2.5,1);
        \node[right, yshift=-3pt] at (ZD) {$Z_{(D)}$};
        \node[right, yshift=-3pt] at (ZE) {$Z_{(E)}$};

        \draw[thick]  (-1.2,-1.3) -- ($(-1.2,-1.3)+0.25*(0.5,2.3)$) ($(-1.2,-1.3) + 0.425*(0.5,2.3)$) -- node[right, pos = .7] {$Z_{(F)}$} ($(-1.2,-1.3)+ (0.5,2.3)$);

    \end{tikzpicture}
    \caption{Intersection of the strata $Z_{(A)} \cong \CP^1$, $Z_{(B)} \cong \CP^2$ and $Z_{(C)} \cong \CP^1$ inside $\QMap \subset \Var$.}
    \label{fig:non_triv_intersection_strata_64}
\end{figure}

If we were to compute the contributions of the strata $Z_{(I)}$ independently (without taking intersections into account) we would find 
\begin{equation}
    \sum_{I \in \{ A, \dots, F\}} \PQM(\CP^1,\CP^2,\mathcal D; Z_{(I)}) = 10 + 20 + 4 + 6 + 6 + 12 = 58 \neq \underbrace{62}_{\QM-\KM}.
\end{equation}
Again, the correct computation should take intersection into account and involve the Segre class.

\subsection{A quasi-stable example with non-trivial source cycles (a computation where scars and freckles work together)} \label{ss: example: semi-moving points}
We consider degree 1 quasimaps from $\CP^2$ to $\CP^3$ with $\QMap_1(\CP^2,\CP^3) = \CP^{11}$.
As quasimap counting data, we consider three points $\{c_i^X\}_{i=1}^3$ and two lines $\{c_i^X\}_{i=4}^5$ in the source and likewise three points $\{c_i^Y\}_{i=1}^3$ and two lines $\{c_i^Y\}_{i=4}^5$ in the target.
We will represent a line in the source by a dashed line and a point moving on that line by a box around the moving point. 
As before, we denote by a solid line a hyperplane in the target.
We then consider the following three situations, cf.\ Figure \ref{fig:counting_data_deg1_P2_to_P3}:
\begin{itemize}
    \item[$\mathcal D_{(1)}$:] each of the three points $\{c_i^X\}_{i=1}^3$ is mapped to a point while each of the two lines $\{c_i^X\}_{i=4}^5$ pass through a line \label{item:P2 -> P3 non-triv cycles situation 1}
    \item[$\mathcal D_{(2)}$:] two of the three points $\{c_i^X\}_{i=1}^3$ are mapped to a point, while the other is mapped to a line; one of the lines $\{c_i^X\}_{i=4}^5$ passes by a point, while the other passes through a line \label{item:P2 -> P3 non-triv cycles situation 2}
    \item[$\mathcal D_{(3)}$:] two of the three points $\{c_i^X\}_{i=1}^3$ are mapped to a line, while the other is mapped to a point; each line $\{c_i^X\}_{i=4}^5$ passes through a point \label{item:P2 -> P3 non-triv cycles situation 3}
\end{itemize}

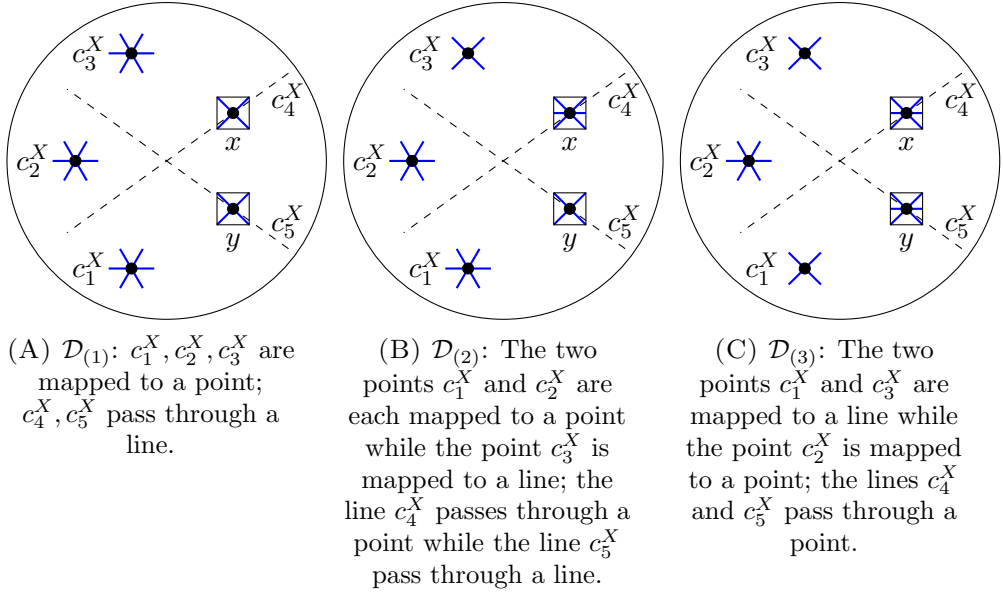
\begin{figure}[H] 
    \centering 
\begin{subfigure}[t]{0.3\textwidth}
    \centering
    \begin{tikzpicture}[baseline={([yshift=-.5ex]current bounding box.center)},scale=.6]
    \draw (0,0) circle (3.5);

    \coordinate[point] (c3) at (108:2.5);
    \coordinate[point] (c2) at (180:2);
    \coordinate[point] (c1) at (252:2.5);

    \coordinate[point] (x) at (36:1.8);
    \coordinate[point] (y) at (324:1.8);
    
    \begin{scope}[every node/.style={transform shape}]
        \node[square] at (x) {};
        \node[square] at (y) {};

        \draw pic at (x) {H={-45}{.5}};
        \draw pic at (x) {H={45}{.5}};
            
        \draw pic at (y) {H={-45}{.5}};
        \draw pic at (y) {H={45}{.5}};

        \draw pic at (c1) {H={-60}{.5}};
        \draw pic at (c1) {H={60}{.5}};
        \draw pic at (c1) {H={0}{.5}};

        \draw pic at (c3) {H={-60}{.5}};
        \draw pic at (c3) {H={60}{.5}};
        \draw pic at (c3) {H={0}{.5}};

        \draw pic at (c2) {H={-60}{.5}};
        \draw pic at (c2) {H={60}{.5}};
        \draw pic at (c2) {H={0}{.5}};
    \end{scope}

    \node[left, xshift=-.5em] at (c1) {$c_1^X$};
    \node[left, xshift=-.5em] at (c2) {$c_2^X$};
    \node[left, xshift=-.5em] at (c3) {$c_3^X$};

    \node[below, yshift=-.5em] at (x) {$x$};
    \node[below, yshift=-.5em] at (y) {$y$};

    \draw[dashed] (324:3.3) to ($(324:1.8) + (144:4.5)$);
    \node[above] at ($(324:1.8) + (324:1.5)$) {$c_5^X$};

    \draw[dashed]  (36:3.3) to ($(36:1.8) + (216:4.5)$);
    \node[below] at ($(36:1.8) + (36:1.5)$) {$c_4^X$};
  \end{tikzpicture} 
\caption{$\mathcal D_{(1)}$: $c^X_1, c_2^X,c_3^X$ are mapped to a point; $c_4^X, c_5^X$ pass through a line.} 
\label{fig:counting_data_deg1_P2_to_P3_1}
    \end{subfigure}\hfill
    \begin{subfigure}[t]{0.3\textwidth}
    \centering
    \begin{tikzpicture}[baseline={([yshift=-.5ex]current bounding box.center)},scale=.6]
    \draw (0,0) circle (3.5);

    \coordinate[point] (c3) at (108:2.5);
    \coordinate[point] (c2) at (180:2);
    \coordinate[point] (c1) at (252:2.5);

    \coordinate[point] (x) at (36:1.8);
    \coordinate[point] (y) at (324:1.8);
    
    \begin{scope}[every node/.style={transform shape}]
        \node[square] at (x) {};
        \node[square] at (y) {};

        \draw pic at (x) {H={45}{.5}};
        \draw pic at (x) {H={-45}{.5}};
        \draw pic at (x) {H={0}{.353}};
            
        \draw pic at (y) {H={-45}{.5}};
        \draw pic at (y) {H={45}{.5}};

        \draw pic at (c1) {H={-60}{.5}};
        \draw pic at (c1) {H={60}{.5}};
        \draw pic at (c1) {H={0}{.5}};

        \draw pic at (c3) {H={-45}{.5}};
        \draw pic at (c3) {H={45}{.5}};

        \draw pic at (c2) {H={-60}{.5}};
        \draw pic at (c2) {H={60}{.5}};
        \draw pic at (c2) {H={0}{.5}};
    \end{scope}

    \node[left, xshift=-.5em] at (c1) {$c_1^X$};
    \node[left, xshift=-.5em] at (c2) {$c_2^X$};
    \node[left, xshift=-.5em] at (c3) {$c_3^X$};

    \node[below, yshift=-.5em] at (x) {$x$};
    \node[below, yshift=-.5em] at (y) {$y$};

    \draw[dashed] (324:3.3) to ($(324:1.8) + (144:4.5)$);
    \node[above] at ($(324:1.8) + (324:1.5)$) {$c_5^X$};

    \draw[dashed]  (36:3.3) to ($(36:1.8) + (216:4.5)$);
    \node[below] at ($(36:1.8) + (36:1.5)$) {$c_4^X$};

  \end{tikzpicture} 
\caption{$\mathcal D_{(2)}$: The two points $c^X_1$ and $c_2^X$ are each mapped to a point while the point $c_3^X$ is mapped to a line; the line $c_4^X$ passes through a point while the line $c_5^X$ pass through a line.}
\label{fig:counting_data_deg1_P2_to_P3_2}
    \end{subfigure}\hfill
    \begin{subfigure}[t]{0.3\textwidth}
    \centering
    \begin{tikzpicture}[baseline={([yshift=-.5ex]current bounding box.center)},scale=.6]
    \draw (0,0) circle (3.5);

    \coordinate[point] (c3) at (108:2.5);
    \coordinate[point] (c2) at (180:2);
    \coordinate[point] (c1) at (252:2.5);

    \coordinate[point] (x) at (36:1.8);
    \coordinate[point] (y) at (324:1.8);

    \begin{scope}[every node/.style={transform shape}]
        \node[square] at (x) {};
        \node[square] at (y) {};

        \draw pic at (x) {H={-45}{.5}};
        \draw pic at (x) {H={45}{.5}};
        \draw pic at (x) {H={0}{.353}};
            
        \draw pic at (y) {H={-45}{.5}};
        \draw pic at (y) {H={45}{.5}};
        \draw pic at (y) {H={0}{.353}};

        \draw pic at (c1) {H={-45}{.5}};
        \draw pic at (c1) {H={45}{.5}};

        \draw pic at (c3) {H={-45}{.5}};
        \draw pic at (c3) {H={45}{.5}};
        
        \draw pic at (c2) {H={-60}{.5}};
        \draw pic at (c2) {H={60}{.5}};
        \draw pic at (c2) {H={0}{.5}};
    \end{scope}

    \node[left, xshift=-.5em] at (c1) {$c_1^X$};
    \node[left, xshift=-.5em] at (c2) {$c_2^X$};
    \node[left, xshift=-.5em] at (c3) {$c_3^X$};

    \node[below, yshift=-.5em] at (x) {$x$};
    \node[below, yshift=-.5em] at (y) {$y$};

    \draw[dashed] (324:3.3) to ($(324:1.8) + (144:4.5)$);
    \node[above] at ($(324:1.8) + (324:1.5)$) {$c_5^X$};

    \draw[dashed]  (36:3.3) to ($(36:1.8) + (216:4.5)$);
    \node[below] at ($(36:1.8) + (36:1.5)$) {$c_4^X$};
  \end{tikzpicture} 
\caption{$\mathcal D_{(3)}$: The two points $c^X_1$ and $c_3^X$ are mapped to a line while the point $c_2^X$ is mapped to a point; the lines $c_4^X$ and $c_5^X$ pass through a point.}
\label{fig:counting_data_deg1_P2_to_P3_3}
    \end{subfigure}
    \caption{Quasimap counting data $\mathcal D$ for $\QMap_1(\CP^2,\CP^3)$. We denote the cycles (lines) $c_4^X, c_5^X \subset \CP^2$ by a dashed line. The box around the points $x \in c_4^X$ and $y \in c_5^X$ indicate that this point can move along the respective line $c_i^X$.}
    \label{fig:counting_data_deg1_P2_to_P3}
\end{figure}

 \textbf{Situation $D_{(1)}$.}
 In this case the $\GLSM$ number is given by
 \begin{equation}
     \GLSM(\CP^2,\CP^3,\mathcal D_{(1)}) = \int_{\CP^{11} \times \CP^1_{(4)} \times \CP^1_{(5)}} H^9 (H + p_4^* h)^2(H + p_5^* h)^2 = 4.
 \end{equation}
 \begin{proposition}
     In the present situation, there exists an unique holomorphic map, i.e.\ 
     \[
        \KM(\CP^2,\CP^3,\mathcal D_{(1)}) = 1.
     \]
 \end{proposition}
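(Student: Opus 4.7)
The plan is to reduce the enumerative count to a Bézout calculation for two conics in $\CP^2$. A degree $1$ holomorphic map $f\colon\CP^2\to\CP^3$ is linear, so its image is a plane $P\cong\CP^2\subset\CP^3$. The three fixed-point conditions $f(c_i^X)=c_i^Y$ ($i=1,2,3$) force $P$ to contain $c_1^Y,c_2^Y,c_3^Y$; since these three points are in general position in $\CP^3$, this uniquely determines $P$. Each target line $c_j^Y\subset\CP^3$ ($j=4,5$) meets $P$ in a single point $q_j\in P$, and the constraint ``$f(c_j^X)\cap c_j^Y\neq\varnothing$'' is equivalent, for genuine maps, to $f^{-1}(q_j)\in c_j^X$.

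Next, I will pick a linear isomorphism $P\simeq\CP^2$ and coordinates on the source $\CP^2$ so that $c_i^X=c_i^Y=e_i$ for $i=1,2,3$. In these coordinates, every linear isomorphism $f\colon\CP^2\to P$ realizing the three point conditions is of the diagonal form
\begin{equation*}
    f(x^0:x^1:x^2)=(ax^0:bx^1:cx^2),\qquad (a:b:c)\in\CP^2.
\end{equation*}
Writing $c_j^X=\{A_j^0 x^0+A_j^1 x^1+A_j^2 x^2=0\}$ and $q_j=(q_j^0:q_j^1:q_j^2)$, the condition $f^{-1}(q_j)\in c_j^X$ translates (upon clearing denominators by $abc$) into the homogeneous quadric
\begin{equation*}
    Q_j\colon\ A_j^0 q_j^0\,bc+A_j^1 q_j^1\,ac+A_j^2 q_j^2\,ab=0
\end{equation*}
in the variable $(a:b:c)$, for $j=4,5$.

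The key observation is that each of $Q_4,Q_5$ automatically vanishes at the three coordinate points $(1:0:0),(0:1:0),(0:0:1)$, because every monomial $bc,ac,ab$ vanishes there. By Bézout, two plane conics meet in $4$ points counted with multiplicity; subtracting the three common coordinate intersections leaves exactly one additional intersection. Each coordinate intersection has one of $a,b,c$ equal to zero and hence defines a proper (in fact, rank-one, scarred) quasimap rather than a genuine map. The fourth intersection has $abc\neq 0$ and therefore defines an honest holomorphic map, giving $\KM(\CP^2,\CP^3,\mathcal D_{(1)})=1$.

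The main obstacle I anticipate is justifying, for generic counting data, two genericity statements: (i) the intersections of $Q_4,Q_5$ at the three coordinate points are simple (so that the tangent lines of $Q_4$ and $Q_5$ at each coordinate point are distinct), and (ii) the fourth intersection point lies strictly away from the coordinate axes in $\CP^2_{(a:b:c)}$. Both are explicit non-vanishing conditions on the coefficients $\{A_j^m,q_j^m\}$ and hold on a Zariski-open dense subset of admissible counting data. As a consistency check, the B\'ezout count also predicts $\PQM=3$, matching the Euler-class decomposition $\GLSM=4=\KM+\PQM$.
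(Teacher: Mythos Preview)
Your proof is correct. The paper gives two proofs: a geometric one and a proper-quasimap count. Your argument is closest in spirit to the geometric proof but uses a different parametrization. The paper fixes a reference map $f_0\colon\CP^2\to\Pi$ (the plane through $c_1^Y,c_2^Y,c_3^Y$) and then seeks the $\PSL(3,\CC)$-element $g$ carrying the five fixed preimage points $f_0^{-1}(c_i^Y\cap\Pi)$ to the prescribed source cycles; in those coordinates the two line conditions are \emph{linear} in $g$, so uniqueness is immediate. You instead parametrize $f$ itself by the diagonal coordinate $(a:b:c)$, which is precisely the Cremona transform $(b'c':a'c':a'b')$ of the paper's linear coordinate $(a':b':c')$ on $g$; accordingly your two conditions become conics through the three base points of the Cremona map, and B\'ezout gives $4=1+3$. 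What your route buys is that the three excess B\'ezout intersections are visibly the three rank-one scarred quasimaps (scar through a pair of the $c_i^X$), which independently recovers the paper's second proof that $\PQM=3$ and confirms $\GLSM=4=\KM+\PQM$. Your genericity checks (i) and (ii) are straightforward: the tangent to $Q_j$ at, say, $(1{:}0{:}0)$ is $A_j^2 q_j^2\,b + A_j^1 q_j^1\,c=0$, and on each coordinate axis $Q_j$ degenerates to a single monomial, so the fourth intersection point cannot lie on any axis.
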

 \begin{proof}[Proof by geometry]
     There is a unique plane $\Pi$ through three points $c_1^Y,c_2^Y,c_3^Y$ in $\CP^3$. It passes through the lines $c_4^Y,c_5^Y$. Its parametrization is uniquely fixed by the data $\mc{D}$. More explicitly: fix some map $f_0\colon \CP^2\ra \CP^3$ with image $\Pi$. The preimages of $c_i^Y$, $i=1,\ldots,5$ are five points on $\CP^2$. The problem of finding a $\PSL(3,\CC)$ transformation $g$ moving those points to points $c_1^X,c_2^X,c_3^X$ and lines $c_4^X,c_5^X$ is a linear problem and has a unique solution. Then $f=f_0\circ g^{-1}$ is the desired (unique) holomorphic map.
 \end{proof}
  \begin{proof}[Proof by counting proper quasimaps]
     Note that a quasimap satisfying all conditions cannot have a freckle.
     Indeed, assume there exists a freckle. 
     Recall that the one-freckle stratum $\QMap_1^1(\CP^2.\CP^3)$ has complex codimension $2$, cf.\ \ref{sec:freckle_stratification}. 
     In order to have  a chance to solve all equations, the freckle must sit at the intersection of the lines $c_4^X$ and $c_5^X$, see Figure \ref{fig:counting_data_deg1_P2_to_P3_1_freckle}.
    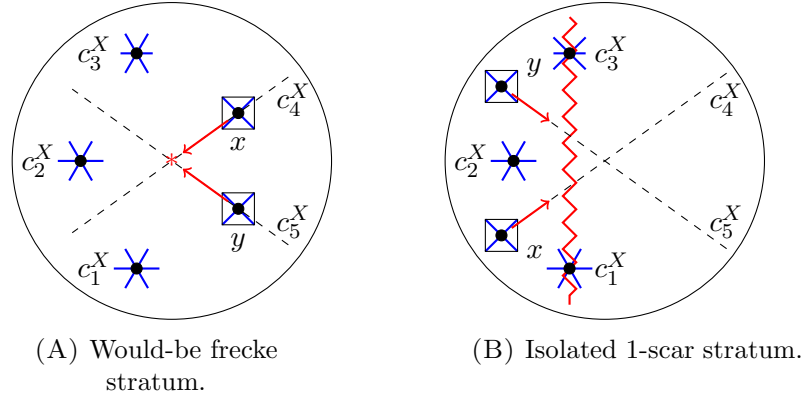
\begin{figure}[H]
        \centering\hfill
        \begin{subfigure}[t]{.3\textwidth}
        \begin{tikzpicture}[baseline={([yshift=-.5ex]current bounding box.center)},scale=.6]
        \draw (0,0) circle (3.5);

        \coordinate[point] (c3) at (108:2.5);
        \coordinate[point] (c2) at (180:2);
        \coordinate[point] (c1) at (252:2.5);
    
        \coordinate[point] (x) at (36:1.8);
        \coordinate[point] (y) at (324:1.8);
        
        \begin{scope}[every node/.style={transform shape}]
            \node[square] at (x) {};
            \node[square] at (y) {};
    
            \draw pic at (x) {H={-45}{.5}};
            \draw pic at (x) {H={45}{.5}};
                
            \draw pic at (y) {H={-45}{.5}};
            \draw pic at (y) {H={45}{.5}};
    
            \draw pic at (c1) {H={-60}{.5}};
            \draw pic at (c1) {H={60}{.5}};
            \draw pic at (c1) {H={0}{.5}};
    
            \draw pic at (c3) {H={-60}{.5}};
            \draw pic at (c3) {H={60}{.5}};
            \draw pic at (c3) {H={0}{.353}};
    
            \draw pic at (c2) {H={-60}{.5}};
            \draw pic at (c2) {H={60}{.5}};
            \draw pic at (c2) {H={0}{.5}};
        \end{scope}

        \node[left, xshift=-.5em] at (c1) {$c_1^X$};
        \node[left, xshift=-.5em] at (c2) {$c_2^X$};
        \node[left, xshift=-.5em] at (c3) {$c_3^X$};
    
        \node[below, yshift=-.5em] at (x) {$x$};
        \node[below, yshift=-.5em] at (y) {$y$};
    
        \draw[dashed] (y) to ($(y) + (144:4.5)$);
        \draw[dashed] (y) to ($(y) + (324:1.5)$);
        \node[above] at ($(y) + (324:1.5)$) {$c_5^X$};
    
        \draw[dashed] (x) to ($(x) + (216:4.5)$);
        \draw[dashed] (x) to ($(x) + (36:1.5)$);
        \node[below] at ($(x) + (36:1.5)$) {$c_4^X$};

        \draw[red,yshift=-0.2em,scale=1.5] node at (0,0) {*};

        \node (xx) at (x) {};
        \node (yy) at (y) {};
        \draw[red,thick,->] (xx) -- ($(0,0) + (36:.3)$);
        \draw[red,thick,->] (yy) -- ($(0,0) + (-36:.3)$);
      \end{tikzpicture} 
      \caption{Would-be frecke stratum.}
        \label{fig:counting_data_deg1_P2_to_P3_1_freckle}
        \end{subfigure}\hfill
        \begin{subfigure}[t]{.4\textwidth}
                \begin{tikzpicture}[baseline={([yshift=-.5ex]current bounding box.center)},scale=.6]
        \draw (0,0) circle (3.5);

        \coordinate[point] (c3) at (108:2.5);
        \coordinate[point] (c2) at (180:2);
        \coordinate[point] (c1) at (252:2.5);
    
        \coordinate[point] (x) at (216:2.8);
        \coordinate[point] (y) at (144:2.8);

        \begin{scope}[every node/.style={transform shape}]
            \node[square] at (x) {};
            \node[square] at (y) {};
    
            \draw pic at (x) {H={-45}{.5}};
            \draw pic at (x) {H={45}{.5}};
                
            \draw pic at (y) {H={-45}{.5}};
            \draw pic at (y) {H={45}{.5}};
    
            \draw pic at (c1) {H={-60}{.5}};
            \draw pic at (c1) {H={60}{.5}};
            \draw pic at (c1) {H={0}{.5}};
    
            \draw pic at (c3) {H={-45}{.5}};
            \draw pic at (c3) {H={45}{.5}};
            \draw pic at (c3) {H={0}{.353}};
    
            \draw pic at (c2) {H={-60}{.5}};
            \draw pic at (c2) {H={60}{.5}};
            \draw pic at (c2) {H={0}{.5}};
        \end{scope}

        \node[right, xshift=.5em] at (c1) {$c_1^X$};
        \node[left, xshift=-.5em] at (c2) {$c_2^X$};
        \node[right, xshift=.5em] at (c3) {$c_3^X$};
    
        \node[below right, xshift=.5em] at (x) {$x$};
        \node[above right, xshift=.5em] at (y) {$y$};
    
        \draw[dashed] (324:3.3) to ($(324:1.8) + (144:4.5)$);
        \node[above] at ($(324:1.8) + (324:1.5)$) {$c_5^X$};
    
        \draw[dashed]  (36:3.3) to ($(36:1.8) + (216:4.5)$);
        \node[below] at ($(36:1.8) + (36:1.5)$) {$c_4^X$};

        \node (xx) at (x) {};
        \node (yy) at (y) {};
        \draw[thick,red,->] (xx) -- (216:1.5);
        \draw[thick,red,->] (yy) -- (144:1.5);

        \begin{pgfonlayer}{bg}  
            \draw[decorate,decoration=zigzag,red,thick]  ($(c3) + (90:.8)$) -- ($(c1)+(-90:.8)$);
        \end{pgfonlayer}
      \end{tikzpicture} 
      \caption{Isolated 1-scar stratum.}
        \label{fig:counting_data_deg1_P2_to_P3_1_scar}
        \end{subfigure}\hfill
        \caption{$\PQL$ for the quasimap counting data $\mathcal D_{(1)}$.}
    \end{figure}
     Its position is hence fixed, which imposes two more equations.
     The space of possible once-freckled quasimaps has therefore dimension $7$. 
     However, we must impose 9 further equations since the remaining three points are each mapped to a point, cf.\ Figure \ref{fig:counting_data_deg1_P2_to_P3_1_freckle}.
     We thus conclude that there cannot be a freckle.

     However, the quasimap can have a scar.
     Suppose that the scar passes through two of the fixed points, say through $c_1^X$ and $c_3^X$.
     Note that the scar intersects the two lines $\{c_i^X\}_{i=4}^5$.
     Hence on the scar all equations but the three equations demanded at $c_2^X$ are satisfied, cf.\ Figure \ref{fig:counting_data_deg1_P2_to_P3_1_scar}. 
     Since away from the scar the quasi map is constant, the remaining three equations fixes the scarred 
     quasimap uniquely.
     
     Since the scar can pass through any two of the three points $\{c_i^X\}_{i=1}^3$, there exists three such one-scar strata and hence three proper quasimaps.
     This allows us to conclude that 
     \[
     \KM = \GLSM - \PQM = 4 - 3\cdot \underbrace{1}_{\mr{scar\, configurations}} = 1.
     \]
 \end{proof}

\textbf{Situation $D_{(2)}$.}
In this case the $\GLSM$ number is given by
 \begin{equation}
     \GLSM(\CP^2,\CP^3,\mathcal D_{(2)}) = \int_{\CP^{11} \times \CP^1_{(4)} \times \CP^1_{(5)}} H^{8} (H + p_4^* h)^3 (H + p_5^* h)^2 = 6.
 \end{equation}
 \begin{proposition}\label{prop:CP2->CP3 non-trivial cycles 2}
     In the present situation, there exists again an unique holomorphic map, i.e.\ 
     \[
        \KM(\CP^2,\CP^3,\mathcal D_{(2)}) = 1.
     \]
 \end{proposition}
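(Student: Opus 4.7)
The plan is to mimic the two-pronged strategy used for $\mathcal D_{(1)}$. First, I will reduce the problem geometrically. A degree $1$ holomorphic map $f\colon\CP^2\to\CP^3$ is a linear embedding whose image is a plane $\Pi\subset\CP^3$. The conditions $f(c_i^X)=c_i^Y$ for $i=1,2$ and the passing condition $f(c_4^X)\ni c_4^Y$ force $c_1^Y,c_2^Y,c_4^Y\in\Pi$; since three generic points of $\CP^3$ determine a unique plane, $\Pi$ is fixed. The lines $c_3^Y$ and $c_5^Y$ then each meet $\Pi$ in a unique point $p_3$ and $p_5$, so the enumerative problem reduces to counting isomorphisms $f\colon\CP^2\xrightarrow{\sim}\Pi$ satisfying $f(c_i^X)=c_i^Y$ for $i=1,2$, $f(c_3^X)=p_3$, $f(c_4^X)\ni c_4^Y$, and $f(c_5^X)\ni p_5$.

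Next, I will count these isomorphisms. The space of such isomorphisms is an $8$-dimensional torsor for $\PSL(3,\CC)$. Fixing the three point images $c_1^X\mapsto c_1^Y$, $c_2^X\mapsto c_2^Y$, $c_3^X\mapsto p_3$ uses up $6$ dimensions and leaves a two-dimensional residual family, on which each of the remaining conditions $f(c_4^X)\ni c_4^Y$ and $f(c_5^X)\ni p_5$ imposes one linear equation. In coordinates adapted to the source triple $c_1^X,c_2^X,c_3^X$, this residual family is the diagonal torus in $\PSL(3,\CC)$ acting on $\Pi$, and the two ``line through a point'' conditions become a homogeneous $2\times 3$ linear system in the torus parameters; the step I need to verify is that its coefficient matrix has rank $2$ for cycles in general position, which delivers a unique projective solution and hence $\KM=1$.

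As an independent consistency check, I plan to decompose the predicted $\PQM=\QM-\KM=5$ into contributions from proper quasimap strata. A codimension count (parallel to the $\mathcal D_{(1)}$ analysis) shows that no $1$-freckle stratum contributes. The relevant strata are the three $1$-scar strata indexed by which pair among $\{c_1^X,c_2^X,c_3^X\}$ the scar passes through: the scars through $\{c_1^X,c_3^X\}$ and $\{c_2^X,c_3^X\}$ each give a single isolated quasimap (the off-scar constant is forced to $c_2^Y$ or $c_1^Y$ respectively, with both running points pulled onto the scar), while the scar through $\{c_1^X,c_2^X\}$ is positive-dimensional, forming a $\CP^1$-family parametrized by the constant off-scar value lying in $c_3^Y$ with both running points again forced onto the scar to void the point-target constraints at $c_4^X$ and $c_5^X$. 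Its contribution is extracted via the excess intersection formula \eqref{eq:BZ}, and the main technical obstacle of this second proof is this Chern-class computation, structurally analogous to the $\CP^1$-stratum analysis of Section~\ref{sec:1+3+12=16}; I expect it to yield $3$, completing the decomposition $5=1+1+3$.
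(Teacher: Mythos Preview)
Your proposal is correct. The second half --- the scar enumeration with contributions $1+1+3=5$ and the excess-intersection computation for the $\CP^1$ stratum --- is precisely the paper's proof, including the same case split (scar through the two ``point-target'' fixed points versus scar through $c_3^X$ and one of $c_1^X,c_2^X$) and the same Chern-class calculation $\int_{\CP^1}(1+\zeta)^{13}(1+\zeta)^2/(1+\zeta)^{12}=3$.

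Your first half is a genuine addition: the paper supplies a direct geometric proof only for $\mathcal D_{(1)}$, whereas for $\mathcal D_{(2)}$ it argues purely via $\QM-\PQM$. Your reduction (the plane $\Pi$ is pinned by $c_1^Y,c_2^Y,c_4^Y$; the line targets $c_3^Y,c_5^Y$ cut $\Pi$ in points $p_3,p_5$; the residual freedom is the diagonal torus) is correct. One small point to make explicit when you write it up: the two ``line contains a prescribed point'' conditions are linear in the \emph{inverse} torus parameters $(a^{-1}:b^{-1}:c^{-1})$, not in $(a:b:c)$ themselves --- in the latter coordinates they are conics passing through the three coordinate points, whose fourth B\'ezout intersection is your unique map. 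Either viewpoint yields $\KM=1$. The geometric argument explains \emph{why} the answer is $1$, while the paper's excess computation is what actually tests the scar calculus; having both is worthwhile.
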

 \begin{proof}[Proof by counting proper quasimaps]
 By the same arguments as for the counting quasimap data $\mathcal D_{(1)}$, a proper quasimap cannot have a freckle.
 If it would have a freckle, it must again sit at the intersection of the lines $c_4^X$ and $c_5^X$.
 Such strata has again dimension 7 while we still have to impose 8 equations.

 As before, a proper quasimap can admit a scar.
 There are two situations to consider:\bigskip

 \noindent \textit{Case 1.} Assume that the scar passes by the two points which are mapped to points, say $c_1^X$ and $c_2^X$, cf.\ Figure \ref{fig:counting_data_deg1_P2_to_P3_2_scar}. 
 \begin{figure}[H]
     \centering
     \begin{subfigure}[t]{.3\textwidth}
     \begin{tikzpicture}[baseline={([yshift=-.5ex]current bounding box.center)},scale=.6]
    \draw (0,0) circle (3.5);

    \coordinate[point] (c3) at (108:2.5);
    \coordinate[point] (c2) at (180:2);
    \coordinate[point] (c1) at (252:2.5);

    \coordinate[point] (x) at (216:.8);
    \coordinate[point] (y) at (144:1);
    
    \begin{scope}[every node/.style={transform shape}]
        \node[square] at (x) {};
        \node[square] at (y) {};

        \draw pic at (x) {H={-45}{.5}};
        \draw pic at (x) {H={45}{.5}};
        \draw pic at (x) {H={0}{.353}};

        \draw pic at (y) {H={-45}{.5}};
        \draw pic at (y) {H={45}{.5}};

        \draw pic at (c1) {H={-60}{.5}};
        \draw pic at (c1) {H={60}{.5}};
        \draw pic at (c1) {H={0}{.5}};

        \draw pic at (c3) {H={-45}{.5}};
        \draw pic at (c3) {H={45}{.5}};

        \draw pic at (c2) {H={-60}{.5}};
        \draw pic at (c2) {H={60}{.5}};
        \draw pic at (c2) {H={0}{.5}};
    \end{scope}

    \node[left, xshift=-.5em] at (c1) {$c_1^X$};
    \node[left, xshift=-.5em] at (c2) {$c_2^X$};
    \node[right, xshift=.5em] at (c3) {$c_3^X$};

    \node[below, yshift=-.5em] at (x) {$x$};
    \node[above, yshift=.5em] at (y) {$y$};

    \draw[dashed] (324:3.3) to ($(324:1.8) + (144:4.5)$);
    \node[above] at ($(324:1.8) + (324:1.5)$) {$c_5^X$};

    \draw[dashed] (36:3.3) to ($(36:1.8) + (216:4.5)$);
    \node[below] at ($(36:1.8) + (36:1.5)$) {$c_4^X$};

    \node (xx) at (x) {};
    \node (yy) at (y) {};
    \draw[red,thick,->] (xx) -- (216:2);
    \draw[red,thick,->] (yy) -- (144:2.3);

    \begin{pgfonlayer}{bg}    
        \draw[red,thick,decorate,decoration=zigzag] (c1) to[bend left] (c2);
        \draw[red,thick,decorate,decoration=zigzag] (c1) to[bend left] ($(c1) + (-45:1)$);
        \draw[red,thick,decorate,decoration=zigzag] (c2) to[bend left] ($(c3) + (180:1)$);
    \end{pgfonlayer}
  \end{tikzpicture} 
     \caption{Degenerate 1-scar stratum.}
     \label{fig:counting_data_deg1_P2_to_P3_2_scar}
     \end{subfigure}\hfill
     \begin{subfigure}[t]{.3\textwidth}
              \begin{tikzpicture}[baseline={([yshift=-.5ex]current bounding box.center)},scale=.6]
    \draw (0,0) circle (3.5);

    \coordinate[point] (c3) at (108:2.5);
    \coordinate[point] (c2) at (180:2);
    \coordinate[point] (c1) at (252:2.5);

    \coordinate[point] (x) at (216:2.5);
    \coordinate[point] (y) at (144:2.5);
    
    \begin{scope}[every node/.style={transform shape}]
        \node[square] at (x) {};
        \node[square] at (y) {};

        \draw pic at (x) {H={-45}{.5}};
        \draw pic at (x) {H={45}{.5}};
        \draw pic at (x) {H={0}{.353}};

        \draw pic at (y) {H={-45}{.5}};
        \draw pic at (y) {H={45}{.5}};

        \draw pic at (c1) {H={-60}{.5}};
        \draw pic at (c1) {H={60}{.5}};
        \draw pic at (c1) {H={0}{.5}};

        \draw pic at (c3) {H={-45}{.5}};
        \draw pic at (c3) {H={45}{.5}};

        \draw pic at (c2) {H={-60}{.5}};
        \draw pic at (c2) {H={60}{.5}};
        \draw pic at (c2) {H={0}{.5}};
    \end{scope}
    
    \node[right, xshift=.5em] at (c1) {$c_1^X$};
    \node[left, xshift=-.5em] at (c2) {$c_2^X$};
    \node[right, xshift=.5em] at (c3) {$c_3^X$};

    \node[below, yshift=-.5em] at (x) {$x$};
    \node[above, yshift=.5em] at (y) {$y$};

    \draw[dashed] (324:3.3) to ($(324:1.8) + (144:4.5)$);
    \node[above] at ($(324:1.8) + (324:1.5)$) {$c_5^X$};

    \draw[dashed] (36:3.3) to ($(36:1.8) + (216:4.5)$);
    \node[below] at ($(36:1.8) + (36:1.5)$) {$c_4^X$};

    \node (xx) at (x) {};
    \node (yy) at (y) {};
    \draw[thick,red,->] (xx) -- (216:1.5);
    \draw[thick,red,->] (yy) -- (144:1.5);

    \begin{pgfonlayer}{bg}    
        \draw[red,thick,decorate,decoration=zigzag] ($(c3) + (90:.9)$) -- ($(c1)+(-90:.9)$);
    \end{pgfonlayer}
  \end{tikzpicture} 
     \caption{Isolated 1-scar stratum.}
     \label{fig:counting_data_deg1_P2_to_P3_2_scar2}
     \end{subfigure}\hfill
             \begin{subfigure}[t]{.3\textwidth}
            \begin{tikzpicture}[baseline={([yshift=-.5ex]current bounding box.center)},scale=.6]
        \draw (0,0) circle (3.5);

        \coordinate[point] (c3) at (108:2.5);
        \coordinate[point] (c2) at (180:2);
        \coordinate[point] (c1) at (252:2.5);
    
        \coordinate[point] (x) at (36:1.8);
        \coordinate[point] (y) at (324:1.8);
        
        \begin{scope}[every node/.style={transform shape}]
        \node[square] at (x) {};
        \node[square] at (y) {};

        \draw pic at (x) {H={-45}{.5}};
        \draw pic at (x) {H={45}{.5}};
        \draw pic at (x) {H={0}{.353}};

        \draw pic at (y) {H={-45}{.5}};
        \draw pic at (y) {H={45}{.5}};

        \draw pic at (c1) {H={-60}{.5}};
        \draw pic at (c1) {H={60}{.5}};
        \draw pic at (c1) {H={0}{.5}};

        \draw pic at (c3) {H={-45}{.5}};
        \draw pic at (c3) {H={45}{.5}};

        \draw pic at (c2) {H={-60}{.5}};
        \draw pic at (c2) {H={60}{.5}};
        \draw pic at (c2) {H={0}{.5}};
        \end{scope}

        \node[left, xshift=-.5em] at (c1) {$c_1^X$};
        \node[left, xshift=-.5em] at (c2) {$c_2^X$};
        \node[left, xshift=-.5em] at (c3) {$c_3^X$};
    
        \node[below, yshift=-.5em] at (x) {$x$};
        \node[below, yshift=-.5em] at (y) {$y$};
    
        \draw[dashed] (y) to ($(y) + (144:4.5)$);
        \draw[dashed] (y) to ($(y) + (324:1.5)$);
        \node[above] at ($(y) + (324:1.5)$) {$c_5^X$};
    
        \draw[dashed] (x) to ($(x) + (216:4.5)$);
        \draw[dashed] (x) to ($(x) + (36:1.5)$);
        \node[below] at ($(x) + (36:1.5)$) {$c_4^X$};

        \draw[red,yshift=-0.2em,scale=1.5] node at (0,0) {*};

        \node (xx) at (x) {};
        \node (yy) at (y) {};
        \draw[red,thick,->] (xx) -- ($(0,0) + (36:.3)$);
        \draw[red,thick,->] (yy) -- ($(0,0) + (-36:.3)$);
      \end{tikzpicture} 
        \caption{Would-be freckle stratum.}
        \label{fig:counting_data_deg1_P2_to_P3_2_freckle}
        \end{subfigure}
        \caption{$\PQL$ of the counting data $\mathcal D_{(2)}$.}
 \end{figure}
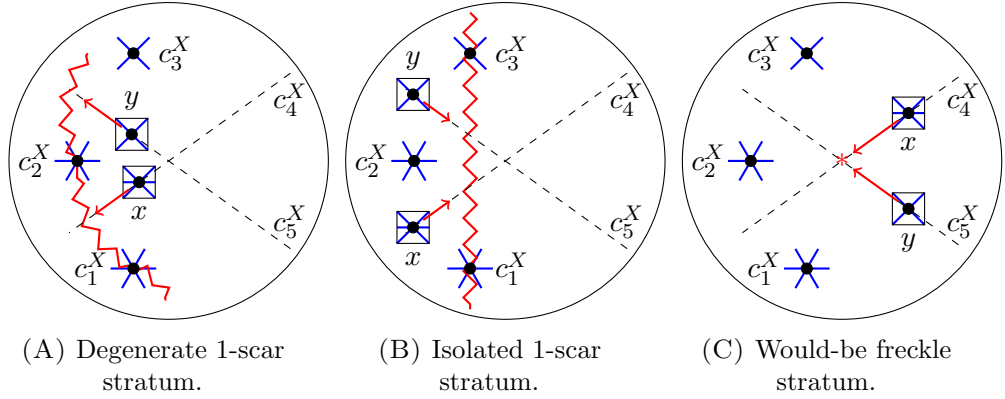
On the complement of the scar the quasimap is a constant map to $\CP^3$ upon which we impose 2 equations. 
This strata is thus a $Z = \CP^1 \subset \QMap_1(\CP^2,\CP^3)$ and its contribution must be calculated by excess intersection theory. 
Since the position of $x$ and $y$ are fixed to be the intersection points of the scar and the line $c_4^X$ resp.\ $c_5^X$, one has $c(E \rvert_Z) = (1 + \zeta)^{13}$ where $\zeta$ denotes the generator of $H^2(Z)$.
Furthermore, it follows that $c(\Var)\rvert_Z = c(\QMap_1(\CP^2,\CP^3)|_Z) = (1 + \zeta)^{12}$.
This allows us to calculate the contribution of $Z$ according to equation \eqref{eq: compute chern class B}:
\begin{equation}
    \int_Z c(B_Z) = \int_{Z} \frac{c(E\rvert_Z)c(Z)}{c(\Var)\rvert_Z} = \int_{\CP^1} \frac{(1+\zeta)^{13}(1+\zeta)^2}{(1+\zeta)^{12}} = 3.
\end{equation}

\noindent\textit{Case 2.} Assume that the scar passes by the point which is mapped to a line and one of the points that is mapped to a point, say through $c_3^X$ and $c_1^X$, cf.\ Figure \ref{fig:counting_data_deg1_P2_to_P3_2_scar2}.

On the complement of the scar the quasimap is uniquely fixed by imposing the three equations at $c_2^X$. 
Such a stratum hence contributes with 1 and there exist two such strata (the scar can pass through $c_3^X$ and either $c_1^X$ or $c_2^X$).

This allows us to conclude 
\begin{equation}
    \KM = \GLSM - \PQM = 6 - \underbrace{3}_{\mr{scar\,through\,}c_1^X,c_2^X} - 2\cdot \underbrace{1}_{\mr{scar\,through\,}c_3^X,c_{1\,\mr{or}\, 2}^X} = 1.
\end{equation}
\end{proof}

\textbf{Situation $D_{(3)}$.}
In this situation, the $\GLSM$ number is given by 
\begin{equation}
    \GLSM(\CP^2,\CP^3,\mathcal D_{(3)}) = \int_{\CP^{11}\times \CP^1_{(4)} \times \CP^1_{(5)}} H^7(H + p_4^* h)^3(H + p_5^* h)^3 = 9.
\end{equation}
\begin{proposition} 
    In the present situation, there exists again an unique holomorphic map, i.e.\
    \[
        \KM(\CP^2,\CP^3,\mathcal D_{(3)}) = 1.
    \]
\end{proposition}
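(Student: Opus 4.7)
The plan is to parallel the proofs of the two preceding propositions: first give a direct geometric argument, then confirm the count by enumerating proper quasimap contributions.

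For the geometric part, a degree one holomorphic map $f\colon\CP^2\to\CP^3$ has image a plane $\Pi\subset\CP^3$. The conditions $f(c_2^X)=c_2^Y$, $f^{-1}(c_4^Y)\in c_4^X$, and $f^{-1}(c_5^Y)\in c_5^X$ force $\Pi$ to contain the three points $c_2^Y, c_4^Y, c_5^Y$, which generically determines $\Pi$ uniquely. With $\Pi$ fixed, the linear isomorphism $\CP^2\to\Pi$ ranges over an $8$-dimensional $\PSL(3,\CC)$-orbit. The remaining conditions impose exactly $8$ constraints: $f(c_2^X)=c_2^Y$ (2 conditions), $f(c_1^X)\in c_1^Y\cap\Pi$ and $f(c_3^X)\in c_3^Y\cap\Pi$ (each a specific point in $\Pi$, giving $2+2$), and the incidence conditions $f^{-1}(c_4^Y)\in c_4^X$, $f^{-1}(c_5^Y)\in c_5^X$ ($1+1$). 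These uniquely determine $f$, showing $\KM=1$.

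For the proper-quasimap verification, I would enumerate the components of the zero set in $\Var$. Three scar configurations appear, indexed by pairs among $\{c_1^X,c_2^X,c_3^X\}$; the scar is the unique line through the chosen pair, and by the freckle principle the equations at the two fixed points sitting on it are vacuous. Moreover, the equations at the running points $x,y$ force them onto the scar at its unique intersections with $c_4^X,c_5^X$ (otherwise the off-scar constant $q$ would be overdetermined among distinct target points). For the scar through $\{c_1^X,c_2^X\}$ the surviving equation is $q\in c_3^Y$, producing a component $Z_A\cong\CP^1$; using the excess-bundle formula \eqref{eq:BZ} with $H|_{Z_A}=\zeta$ and $h_4,h_5|_{Z_A}=0$ gives
$$c(B_{Z_A})=\frac{(1+\zeta)^{2+3+2+3+3}(1+\zeta)^2}{(1+\zeta)^{12}}=(1+\zeta)^3,$$
so $\int_{Z_A}c(B_{Z_A})=3$. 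By the symmetry $c_1^X\leftrightarrow c_3^X$, the scar through $\{c_2^X,c_3^X\}$ also contributes $3$. The scar through $\{c_1^X,c_3^X\}$ pins $q=c_2^Y$ via the $c_2^X$-equations, giving an isolated point contributing $1$.

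The last contribution is a one-freckle stratum with freckle at the unique point $p:=c_4^X\cap c_5^X$ and both running points specialized to $x=y=p$; the six equations at $x,y$ become vacuous, leaving $2+3+2=7$ linear conditions on the $\CP^7$ of freckled quasimaps with freckle at $p$, which cut out a single point contributing $1$. In total $\PQM=3+3+1+1=8$ and hence $\KM=\GLSM-\PQM=9-8=1$. The delicate part of the argument is completeness of the enumeration: one must verify that scars through a single fixed point (or through none), two-freckle configurations (which in degree one collapse onto a scar through the two freckles), and one-freckle strata at other special positions (freckle at a fixed point, or coinciding with only one of the running points) all have negative virtual dimension and therefore fail to contribute for generic target data.
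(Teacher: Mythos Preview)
Your proposal is correct and follows essentially the same approach as the paper: the same four $\PQL$ components (one freckle at $c_4^X\cap c_5^X$, one isolated scar through $c_1^X,c_3^X$, and two $\CP^1$-scars through $c_2^X$ and one of $c_1^X,c_3^X$) with the same excess contributions $1+1+3+3=8$. The geometric argument you add at the start is a natural extension of the paper's treatment of $\mathcal{D}_{(1)}$, which the paper itself omits for $\mathcal{D}_{(3)}$.
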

\begin{proof}[Proof by counting proper quasimaps]
Unlike for the quasimap counting data $\mathcal D_{(1)}$ and $\mathcal D_{(2)}$, a proper quasimap now may admit a freckle: As before, the frecke sits at the intersection of the lines $c_4^X$ and $c_5^X$. 
The one-freckle stratum has hence dimension 7. 
But unlike to before, we now only impose exactly 7 equations which determines the once-freckled map uniquely.

On the other hand, the quasimap may admit a scar. 
There are again two cases to consider:\bigskip

\noindent\textit{Case 1.} The scar passes through the two points which are mapped to a line,  $c_1^X$ and $c_3^X$, cf.\ Figure \ref{fig:counting_data_deg1_P2_to_P3_3_scar1}.
Since the quasimap away from the scar is constant, it is uniquely fixed by the remaining equations imposed at $c_2^X$. 
This stratum hence contributes with 1.
     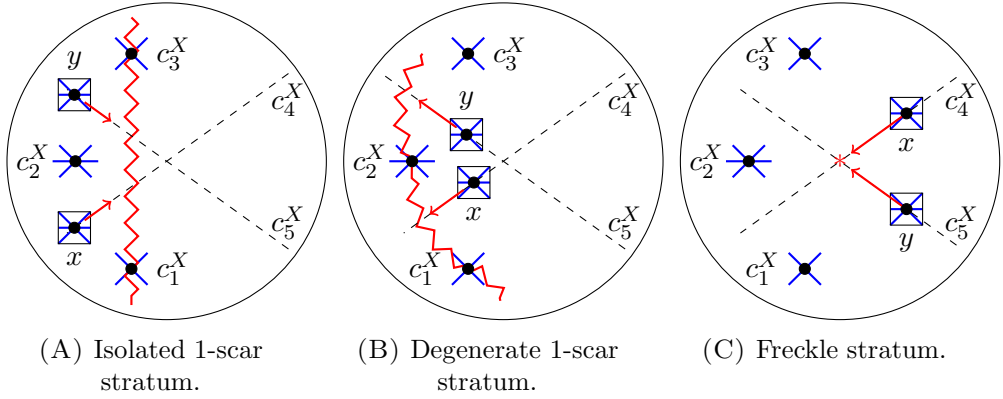
\begin{figure}[H] 
        \centering
        \begin{subfigure}[t]{0.3\textwidth}
        \begin{tikzpicture}[baseline={([yshift=-.5ex]current bounding box.center)},scale=.6]
        \draw (0,0) circle (3.5);

        \coordinate[point] (c3) at (108:2.5);
        \coordinate[point] (c2) at (180:2);
        \coordinate[point] (c1) at (252:2.5);
    
        \coordinate[point] (x) at (216:2.5);
        \coordinate[point] (y) at (144:2.5);

        \begin{scope}[every node/.style={transform shape}]
        \node[square] at (x) {};
        \node[square] at (y) {};

        \draw pic at (x) {H={-45}{.5}};
        \draw pic at (x) {H={45}{.5}};
        \draw pic at (x) {H={0}{.353}};

        \draw pic at (y) {H={-45}{.5}};
        \draw pic at (y) {H={45}{.5}};
        \draw pic at (y) {H={0}{.353}};

        \draw pic at (c1) {H={-45}{.5}};
        \draw pic at (c1) {H={45}{.5}};

        \draw pic at (c3) {H={-45}{.5}};
        \draw pic at (c3) {H={45}{.5}};

        \draw pic at (c2) {H={-45}{.5}};
        \draw pic at (c2) {H={45}{.5}};
        \draw pic at (c2) {H={0}{.5}};
        \end{scope}

        \node[right, xshift=.5em] at (c1) {$c_1^X$};
        \node[left, xshift=-.5em] at (c2) {$c_2^X$};
        \node[right, xshift=.5em] at (c3) {$c_3^X$};
    
        \node[below, yshift=-.5em] at (x) {$x$};
        \node[above, yshift=.5em] at (y) {$y$};
    
        \draw[dashed] (324:3.3) to ($(324:1.8) + (144:4.5)$);
        \node[above] at ($(324:1.8) + (324:1.5)$) {$c_5^X$};
    
        \draw[dashed]  (36:3.3) to ($(36:1.8) + (216:4.5)$);
        \node[below] at ($(36:1.8) + (36:1.5)$) {$c_4^X$};

        \node (xx) at (x) {};
        \node (yy) at (y) {};
        \draw[thick,red,->] (xx) -- (216:1.5);
        \draw[thick,red,->] (yy) -- (144:1.5);

        \begin{pgfonlayer}{bg}  
            \draw[decorate,decoration=zigzag,red,thick]  ($(c3) + (90:.8)$) -- ($(c1)+(-90:.8)$);
        \end{pgfonlayer}
      \end{tikzpicture} 
        \caption{Isolated 1-scar stratum.} 
        \label{fig:counting_data_deg1_P2_to_P3_3_scar1}
        \end{subfigure}\hfill
        \begin{subfigure}[t]{0.3\textwidth}
            \begin{tikzpicture}[baseline={([yshift=-.5ex]current bounding box.center)},scale=.6]
            \draw (0,0) circle (3.5);

            \coordinate[point] (c3) at (108:2.5);
            \coordinate[point] (c2) at (180:2);
            \coordinate[point] (c1) at (252:2.5);
            
            \coordinate[point] (x) at (216:.8);
            \coordinate[point] (y) at (144:1);

            \begin{scope}[every node/.style={transform shape}]
            \node[square] at (x) {};
            \node[square] at (y) {};
        
            \draw pic at (x) {H={-45}{.5}};
            \draw pic at (x) {H={45}{.5}};
            \draw pic at (x) {H={0}{.353}};
        
            \draw pic at (y) {H={-45}{.5}};
            \draw pic at (y) {H={45}{.5}};
            \draw pic at (y) {H={0}{.353}};
        
            \draw pic at (c1) {H={-45}{.5}};
            \draw pic at (c1) {H={45}{.5}};
        
            \draw pic at (c3) {H={-45}{.5}};
            \draw pic at (c3) {H={45}{.5}};
        
            \draw pic at (c2) {H={-45}{.5}};
            \draw pic at (c2) {H={45}{.5}};
            \draw pic at (c2) {H={0}{.5}};
            \end{scope}

            \node[left, xshift=-.5em] at (c1) {$c_1^X$};
            \node[left, xshift=-.5em] at (c2) {$c_2^X$};
            \node[right, xshift=.5em] at (c3) {$c_3^X$};
        
            \node[below, yshift=-.5em] at (x) {$x$};
            \node[above, yshift=.5em] at (y) {$y$};

            \draw[dashed] (324:3.3) to ($(324:1.8) + (144:5)$);
            \node[above] at ($(324:1.8) + (324:1.5)$) {$c_5^X$};
        
            \draw[dashed] (36:3.3) to ($(36:1.8) + (216:4.5)$);
            \node[below] at ($(36:1.8) + (36:1.5)$) {$c_4^X$};

            \node (xx) at (x) {};
            \node (yy) at (y) {};
            \draw[thick,red,->] (xx) -- (216:2);
            \draw[thick,red,->] (yy) -- (144:2.3);

            \begin{pgfonlayer}{bg}    
                \draw[red,thick,decorate,decoration=zigzag] (c1) to[bend left] (c2);
                \draw[red,thick,decorate,decoration=zigzag] (c1) to[bend left] ($(c1) + (-45:1)$);
                \draw[red,thick,decorate,decoration=zigzag] (c2) to[bend left] ($(c3) + (180:1)$);
            \end{pgfonlayer}
     \end{tikzpicture} 
     \caption{Degenerate 1-scar stratum.}
     \label{fig:counting_data_deg1_P2_to_P3_3_scar2}
        \end{subfigure}\hfill
        \begin{subfigure}[t]{.3\textwidth}
            \begin{tikzpicture}[baseline={([yshift=-.5ex]current bounding box.center)},scale=.6]
        \draw (0,0) circle (3.5);

        \coordinate[point] (c3) at (108:2.5);
        \coordinate[point] (c2) at (180:2);
        \coordinate[point] (c1) at (252:2.5);
    
        \coordinate[point] (x) at (36:1.8);
        \coordinate[point] (y) at (324:1.8);
        
        \begin{scope}[every node/.style={transform shape}]
        \node[square] at (x) {};
        \node[square] at (y) {};

        \draw pic at (x) {H={-45}{.5}};
        \draw pic at (x) {H={45}{.5}};
        \draw pic at (x) {H={0}{.353}};

        \draw pic at (y) {H={-45}{.5}};
        \draw pic at (y) {H={45}{.5}};
        \draw pic at (y) {H={0}{.353}};

        \draw pic at (c1) {H={-45}{.5}};
        \draw pic at (c1) {H={45}{.5}};

        \draw pic at (c3) {H={-45}{.5}};
        \draw pic at (c3) {H={45}{.5}};

        \draw pic at (c2) {H={-45}{.5}};
        \draw pic at (c2) {H={45}{.5}};
        \draw pic at (c2) {H={0}{.5}};
        \end{scope}

        \node[left, xshift=-.5em] at (c1) {$c_1^X$};
        \node[left, xshift=-.5em] at (c2) {$c_2^X$};
        \node[left, xshift=-.5em] at (c3) {$c_3^X$};
    
        \node[below, yshift=-.5em] at (x) {$x$};
        \node[below, yshift=-.5em] at (y) {$y$};
    
        \draw[dashed] (y) to ($(y) + (144:4.5)$);
        \draw[dashed] (y) to ($(y) + (324:1.5)$);
        \node[above] at ($(y) + (324:1.5)$) {$c_5^X$};
    
        \draw[dashed] (x) to ($(x) + (216:4.5)$);
        \draw[dashed] (x) to ($(x) + (36:1.5)$);
        \node[below] at ($(x) + (36:1.5)$) {$c_4^X$};

        \draw[red,yshift=-0.2em,scale=1.5] node at (0,0) {*};

        \draw[red,thick,->] (x) -- ($(0,0) + (36:.3)$);
        \draw[red,thick,->] (y) -- ($(0,0) + (-36:.3)$);
      \end{tikzpicture} 
        \caption{Freckle stratum.} 
        \label{fig:counting_data_deg1_P2_to_P3_3_freckle}
        \end{subfigure}
        \caption{$\PQL$ of the counting data $\mathcal D_{(3)}$.}
    \end{figure}

\noindent\textit{Case 2.} The scar passes through the point which is mapped to a point and through one of the points which is mapped to a line, say $c_1^X$, cf. Figure \ref{fig:counting_data_deg1_P2_to_P3_3_scar2}.
In this case, we impose only two equations on the complement of the scar, and hence the stratum is a $\CP^1 \subset \QMap_1(\CP^2,\CP^3,\mathcal D_{(2)})$.
This situation is equivalent to the situation we encountered in the proof of Proposition \ref{prop:CP2->CP3 non-trivial cycles 2}. 
As follows from an analogous computation, the excess contribution of this stratum is given by $3$. 
Note that we have two such strata, namely one when the scar passes by $c_2^X$ and $c_1^X$, and another when the scar passes by $c_2^X$ and $c_3^X$.\bigskip

In total we conclude
\begin{equation}
    \KM = \GLSM - \PQM = 9 - \underbrace{1}_{\mr{freckle}} - \underbrace{1}_{\mr{scar\, through\,}c_1^X,c_3^X} - 2\cdot \underbrace{3}_{\mr{scar\,through\,}c_2^X, c_{1\,\mr{or}\,3}^X} = 1.
\end{equation}

\end{proof}

\section{Smooth conjecture}

\subsection{The conjecture}
Fix cycles $c_i^X$ in the source $X=\CP^k$ and cycles $c_i^Y$ in the target $Y=\CP^n$, $i=1,\ldots,l$. 
We are interested in the solution
\begin{equation}\label{N enum}
\KM(\CP^k,\CP^n;\{c_i^X,c_i^Y\}|d)
\end{equation} 
of Enumerative Problem \ref{enum prob: counting hol maps} for degree $d$ holomorphic maps.

Let
$$ \mr{ev}_i\colon \mr{Maps}_d(\CP^k,\CP^n)\times \CP^k_{1}\times \cdots \times \CP^k_{l}\ra \CP^n$$
be the evaluation of a map at the $i$-th source point, $i=1,\ldots,l$.

\begin{conjecture}[Smooth conjecture]\label{conj:smooth_conjecture}\mbox{}

\begin{itemize}
    \item (Strong version.) For any \emph{smooth} representatives $\alpha_i^X\in \Omega_{cl}(\CP^k), \alpha_i^Y\in \Omega_{cl}(\CP^n)$ of Poincar\'e dual cohomology classes of the homology classes of cycles $c_i^{X,Y}$, the integrals 
    \begin{subequations}
        \begin{align}
            N^\mr{C,S}(\alpha_1^X,\alpha_1^Y;\cdots; \alpha_l^X,\alpha_l^Y|d) &\defeq
        \hspace{-24pt}\int\limits_{\mr{Maps}_d(\CP^k,\CP^n)\times c_1^X\times \cdots \times c_l^X}
        \prod_{i=1}^l \mr{ev}_i^*(\alpha^Y_i), \label{smooth eq3}\\
         N^\mr{S,S}(\alpha_1^X,\alpha_1^Y;\cdots; \alpha_l^X,\alpha_l^Y|d) &\defeq
        \hspace{-24pt}\int\limits_{\mr{Maps}_d(\CP^k,\CP^n)\times \CP^k_1\times \cdots \times \CP^k_l}
        \prod_{i=1}^l (\alpha_i^X \wedge \mr{ev}_i^*(\alpha^Y_i) )
        \end{align}
    \end{subequations}
    are both convergent and equal to the number (\ref{N enum}).\footnote{Superscripts $C,S$ stand for ``cycles on the source, smooth representatives on the target'';  $S,S$ stands for ``smooth representatives on both source and target.''}
    \item (Specialized version.) Assume that cycles $c_i^X,c_i^Y$ are in complex codimension $n_i^X,n_i^Y$ in the source/target and assume that their homology classes are $d_i^{X,Y}$ times the generator of the respective homology group.\footnote{Recall that $H_{2j}(\CP^n,\mathbb{Z})=\mathbb{Z}$ for $j=0,\ldots,n$ and zero otherwise. } Then one has that the integrals 
    \begin{subequations}
        \begin{align}
            &\prod_{i=1}^l d_i^Y\cdot \int_{\mr{Maps}_d(\CP^k,\CP^n)\times c_1^X\times \cdots \times c_l^X}
        \prod_{i=1}^l \mr{ev}_i^*(\omega_{Y}^{\wedge n_i^Y}), \label{smooth eq2}\\
        &\prod_{i=1}^l (d_i^X d_i^Y)\cdot\int_{\mr{Maps}_d(\CP^k,\CP^n)\times \CP^k_1\times \cdots \times \CP^k_l}
        \prod_{i=1}^l (\omega_X^{\wedge n_i^X} \wedge \mr{ev}_i^*(\omega_Y^{\wedge n_i^Y}) )
        \end{align}
    \end{subequations}
    both exist and are equal to the number (\ref{N enum}). Here $\omega_X,\omega_Y$ are the Fubini-Study 2-forms on the source and the target, respectively.
\end{itemize}
\end{conjecture}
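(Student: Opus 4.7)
The plan is to approach the conjecture through a resolution of the total evaluation map, reducing the claim to a cohomological computation on a compact smooth variety. Let $V = \QMap_d(\CP^k,\CP^n) \times c_1^X \times \cdots \times c_l^X$ (or with all factors $\CP^k$ in the $\mr{S,S}$ version) and consider the total evaluation as a rational map
\begin{equation*}
\mr{Ev}: V \dashrightarrow (\CP^n)^{\times l}, \qquad (f, x_1, \ldots, x_l) \mapsto (\mr{ev}(f,x_1), \ldots, \mr{ev}(f,x_l)),
\end{equation*}
whose indeterminacy locus $I = \bigcup_i \{x_i \in \FR(f)\}$ is stratified according to Section~\ref{sec:freckle_stratification}. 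The complement $V \setminus I$ is canonically identified with $\Map_d(\CP^k,\CP^n) \times \prod c_i^X$, i.e., with the domain of integration in Conjecture~\ref{conj:smooth_conjecture}. By Hironaka, one can construct an iterated blowup $\pi: \widetilde V \to V$ which is an isomorphism over $V \setminus I$ and such that $\mr{Ev}\circ \pi$ extends to a morphism $\widetilde{\mr{Ev}}: \widetilde V \to (\CP^n)^{\times l}$.

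Convergence is then immediate: the pullback $\widetilde{\mr{Ev}}^*(\alpha_1^Y \boxtimes \cdots \boxtimes \alpha_l^Y)$ is a smooth form on the compact variety $\widetilde V$, hence its wedge product with any smooth form is finitely integrable. Since $\pi$ restricts to an isomorphism away from the exceptional divisor $E = \pi^{-1}(I)$ and $E$ has real codimension $\geq 2$, the integrals in the conjecture coincide with
\begin{equation*}
\int_{\widetilde V}\widetilde{\mr{Ev}}^*(\alpha_1^Y\boxtimes\cdots\boxtimes\alpha_l^Y)\wedge \pi^*(\text{source factors}),
\end{equation*}
which is manifestly finite. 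This handles the convergence assertion in both the $\mr{C,S}$ and $\mr{S,S}$ versions of the conjecture uniformly.

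For correctness, one computes the integral on $\widetilde V$ via Poincar\'e duality. The key cohomological observation is that the pullback class $\widetilde{\mr{ev}}_i^*[\omega_Y]$ on $\widetilde V$ is \emph{not} equal to $\pi^* c_1(L_d|_{c_i^X}) = \pi^*(H + d\,h_i)$; the two agree on $\widetilde V \setminus E$ but globally differ by a class $E_i^{\mr{corr}}$ supported on the exceptional divisor, recording how much of the zero locus of the $i$-th component of $\mr{Ev}$ is absorbed into $E$ during the blowup. Substituting the relation $\widetilde{\mr{ev}}_i^*[\omega_Y] = \pi^*(H + d\,h_i) - E_i^{\mr{corr}}$ and expanding the product $\prod_i (\pi^*(H+d\,h_i) - E_i^{\mr{corr}})^{n_i^Y}$, the leading term reproduces $\int_V \prod c_1(L_d|_{c_i^X})^{n_i^Y}$ (which is $\QM$ up to the normalization $\prod d_i^Y$), while the combinatorial sum of exceptional corrections should reproduce $-\PQM$ with the same normalization. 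The net answer $\QM - \PQM = \KM$ then exactly matches Conjecture~\ref{conj:smooth_conjecture}.

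The main obstacle is precisely the identification of the exceptional corrections $E_i^{\mr{corr}}$ and their matching against the $\PQM$ computation of Section~\ref{sec: unstable}. In the quasi-stable case, $I$ is a transverse union of smooth strata and the blowup is standard, so $E_i^{\mr{corr}}$ is accessible from Fulton's formula for Chern classes under blowups and the matching with the excess intersection formulas should reduce to bookkeeping. In the genuinely unstable case, however, $I$ is singular or non-reduced and the resolution is iterated with exceptional divisors intersecting in a combinatorial pattern dictated by the freckle/scar stratification; matching the resulting alternating sum of contributions against the Segre class expression for $\PQM$ via \eqref{eq: segre under f} is the core technical difficulty of a complete proof, and is essentially the same obstruction that prevents us from computing $\PQM$ explicitly in non-quasi-stable examples such as those illustrated in Figures \ref{fig:non_triv_intersection_strata_64} and \ref{Fig 12}.
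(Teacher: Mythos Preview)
The paper does not prove this statement: it is explicitly labeled a \emph{conjecture}, and Section~7.2 offers only numerical evidence in three specific examples ($d=1$ and $d=2$ quasimaps $\CP^1\qra\CP^2$), where the reduced integral is evaluated either exactly or by Monte Carlo and seen to match the expected $\KM$ number. There is no proof in the paper to compare your proposal against.

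That said, your proposal is a coherent \emph{program} rather than a proof, and you are candid about this. The convergence part is essentially sound: resolving the indeterminacy of the rational evaluation map by iterated blowups and observing that the integrand extends smoothly to a compact variety is a standard and correct route to absolute convergence, and the measure-zero exceptional locus justifies identifying the integral over $\Map_d\times\prod c_i^X$ with the integral over $\widetilde V$. (One small point: your $V\setminus I$ is strictly larger than $\Map_d\times\prod c_i^X$, since evaluation is defined whenever $x_i\notin\FR(f)$ even if $\FR(f)\neq\varnothing$; but the difference lies over the proper-quasimap locus, hence has measure zero, so the integrals agree.)

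The correctness part, however, is where the real content lies, and here your argument is a sketch, not a proof. The assertion that the exceptional corrections $E_i^{\mr{corr}}$ assemble to reproduce exactly $-\PQM$ is precisely the statement to be proven; you have rephrased the conjecture in cohomological language on $\widetilde V$ rather than established it. A more direct line would be to argue that for \emph{generic} target cycles $c_i^Y$ the preimage $\widetilde{\mr{Ev}}^{-1}(\prod c_i^Y)$ avoids the exceptional divisor entirely, so that the cohomological count on $\widetilde V$ equals the set-theoretic count in $\Map_d$, i.e.\ $\KM$. This would require showing that the image of each exceptional component under $\widetilde{\mr{Ev}}$ has positive codimension in $(\CP^n)^l$, which is a concrete geometric statement about how evaluation degenerates at freckles and scars and is not obvious in general. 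Either way, the unstable cases flagged in the paper (Figures~\ref{fig:non_triv_intersection_strata_64} and~\ref{Fig 12}) are exactly where your matching would become delicate, as you note.
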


\subsection{Numerical evidence for the smooth conjecture}
For simplicity, we will consider a modified version of the integral formula (\ref{smooth eq3}). 
We will consider the case where $c_i^X$ are points for $i=1,\ldots, k+2$, and $c^X_i=\CP^k$ for $i>k+2$.
We reduce the space of maps by demanding that a map $f$ sends the cycles $c_i^X$ (for $i=1,\ldots,k+2$) 
to the cycles $c_i^Y$.  We will call the resulting space of maps the \emph{reduced} space of maps $\Map^{\rm red}_d(\CP^k,\CP^n)$. It is a section of the $\PSL(k+1,\CC)$-action on $\Map_d(\CP^k,\CP^n)$ (cf. footnote \ref{footnote: transitivity of PSL(k+1,C) acting on P^k}). Transitioning to the integral over $\Map^{\rm red}_d(\CP^k,\CP^n)$ 
 corresponds to choosing $\alpha_i^Y=\delta_{c_i^Y}$ to be distributional delta-forms on the cycles $c_i^Y$ in (\ref{smooth eq3}) for $i=1,\ldots,k+2$ and performing the fiber integral in $\Map_d(\CP^k,\CP^n)\ra \Map_d(\CP^k,\CP^n)/\PSL(k+1,\CC)\simeq \Map_d^\mr{red}(\CP^k,\CP^n)$. 

In particular, the integrals in Conjecture \ref{conj:smooth_conjecture} can now be expressed over $\Map^{\rm red}_d(\CP^k,\CP^n)$, e.g. 
\[
N^{C,S}(\alpha_{k+3}^X,\alpha_{k+3}^Y;\dots;\alpha_l^X,\alpha_l^Y \mid d)=\int_{\Map^{\rm red}_d(\CP^k,\CP^n) \times \CP^k_{k+3} \times \dots \times \CP^k_l} \prod_{i = k+3}^l \ev_i^*(\alpha_i^Y).
\]

Let now $k=1$, $n=2$.
Let $(x^0:x^1)$ be homogeneous coordinates on the source 
and  $(y^0:y^1:y^2)$ homogeneous coordinates on the target. 
We denote by $z = x^1/x^0$ and $w = x^0/x^1$ the affine coordinates on the source. 
We will also write $d^2z = \frac{i}{2} dz\wedge d\bar z$ for the real measure on $\CC$.

In the examples below we choose the forms on the target $\alpha_i^Y$ to be appropriate powers of the Fubini-Study form: 
\begin{equation}\label{alpha=omega^codim}
\alpha_i^Y= \omega_Y^{\codim_\CC c_i^Y} \qquad \mr{for}\; i=k+3,\ldots,l. 
\end{equation} 
Finally, we will always consider the Fubini-Study form $\omega \in \Omega^2(\CP^2)$ to be normalized by
\[
\int_{\CP^2} \omega = 1.
\]

\subsubsection{1 or 2}
We start with the example of Section \ref{sec:1+1=2}.
Fix 
\[
c_1^X = (0:1), \quad c_2^X = (1:1),\quad c_3^X = (1:0), \quad c_4^X = \CP^1
\]
and
\[
c_1^Y = (0:y^1:y^2), \quad c_2^Y = (y^0:y^1:y^0+y^1),\quad c_3^Y = (1:0:0), \quad c_4^Y = q
\]
where $q$ is a fixed point in $\CP^2$. By our convention (\ref{alpha=omega^codim}), we set $\alpha^Y_4=\omega^2$.
The map counting data is depicted in Figure \ref{fig:1or2} below.
\begin{figure}[H]
    \centering
    \begin{tikzpicture}[baseline={([yshift=-.5ex]current bounding box.center)},scale=.8]
    \draw (0,0) circle (3);

    \coordinate[point] (c3) at (90:2);
    \coordinate[point] (c2) at (180:2);
    \coordinate[point] (c1) at (270:2);

    \coordinate[point] (c4) at (0:2);

    \begin{scope}[every node/.style={transform shape}]
        \draw (c4) circle (.5);

        \draw pic at (c1) {H={-45}{.5}};

        \draw pic at (c2) {H={-45}{.5}};

        \draw pic at (c3) {H={45}{.5}};
        \draw pic at (c3) {H={-45}{.5}};

        \draw pic at (c4) {H={45}{.5}};
        \draw pic at (c4) {H={-45}{.5}};
    \end{scope}

    \node[below left] at (c1) {$c_1^X$};
    \node[below left] at (c2) {$c_2^X$};
    \node[left] at ($(c3) + (180:.3)$) {$c_3^X$};
    \node[below] at ($(c4) + (-90:.5)$) {$c_4^X$};

  \end{tikzpicture} 
    \caption{Map counting data.}
    \label{fig:1or2}
\end{figure}
We impose the condition that $f \colon c_i^X \mapsto c_i^Y$ for $i=1,2,3$, i.e.\ we opt to compute the integral
\begin{equation}
N^{C,S}(1, \omega^2 \mid 1) = \int_{\Map_1^{\rm red}(\CP^1,\CP^2) \times \CP^1} \ev_4^*\omega^2.
\label{eq:NCS_1or2}
\end{equation}
As we have shown in Section \ref{sec:1+1=2}, cf. \eqref{eq: 1 + 1 = 2 qmap}, a quasimap which sends $c_i^X$ to $c_i^Y$, for $i=1,2,3$, is parametrized by 
\[
\underline{f}(x^0:x^1) = (x^0 : b x^1: (1+b) x^1) 
\]
and defines an proper map $f$ for $b \neq 0$.
Thus $\Map_1^{\rm red}(\CP^1,\CP^2) = \CC^*$.

Let us consider the chart $x_0 \neq 0$ and $b\neq 0$. 
In this chart, we can write the map $f$ as 
\begin{equation}
    f_{b}(z) = (b z, (1 + b)z).
\end{equation}
Since the point $b = 0$ has zero measure, we can compute \eqref{eq:NCS_1or2} by 
\begin{equation*}
    N^{C,S}(1 \mid \omega^2) = \int_{\CC^2} \ev(f_{b},z)^* \omega^2,
\end{equation*}
where now
\begin{equation}
    \ev(f_{b},z)^*\omega^2 = \frac{2}{\pi^2} \frac{\lvert z \rvert^2}{(1 + \lvert b z \rvert^2 + \lvert (1 + b)z \rvert^2 )^3}\ d^2b \wedge d^2z.
\end{equation}
This integral evaluates exactly to $1$, which is the $\KM$ number of the problem.

\subsubsection{1 or 4}
Let us revisit the example of Section \ref{sec:1+3=4}. 
Fix 
\[
c_1^X = (0:1), \quad c_2^X = (1:1),\quad c_3^X = (1:0), \quad c_4^X = c_5^X = \CP^1, 
\]
as well as
\[
 c_i^Y = \{ y^i = 0\}, \quad c_4^Y = p,\quad  c_5^Y = q,
\]
where $i=1,2,3$ and $p,q \in \CP^2$ are two fixed points. By convention (\ref{alpha=omega^codim}), we set $\alpha_4^Y=\alpha_5^Y=\omega^2$.

The quasimap counting data is depicted in Figure \ref{fig:1or4} below. 
\begin{figure}[H]
    \centering
    \begin{tikzpicture}[baseline={([yshift=-.5ex]current bounding box.center)},scale=.8]
    \draw (0,0) circle (3.3);

    \coordinate[point] (c3) at (108:2);
    \coordinate[point] (c2) at (180:2);
    \coordinate[point] (c1) at (252:2);

    \coordinate[point] (x) at (36:2);
    \coordinate[point] (y) at (-36:2);

    \begin{scope}[every node/.style={transform shape}]
        \draw (x) circle (.5);
        \draw (y) circle (.5);

        \draw pic at (x) {H={45}{.5}};
        \draw pic at (x) {H={-45}{.5}};

        \draw pic at (y) {H={45}{.5}};
        \draw pic at (y) {H={-45}{.5}};

        \draw pic at (c1) {H={-45}{.5}};

        \draw pic at (c2) {H={-45}{.5}};

        \draw pic at (c3) {H={-45}{.5}};
    \end{scope}

    \node[below left] at ($(c1)+ (.3,0)$) {$c_1^X$};
    \node[below left] at ($(c2)+(0,-.3)$) {$c_2^X$};
    \node[below left] at ($(c3)+(0,-.3)$) {$c_3^X$};

    \node[right] at ($(x)+(0:.5)$) {$c_4^X$};
    \node[right] at ($(y)+(0:.5)$) {$c_5^X$};
  \end{tikzpicture} 
    \caption{Map counting data.}
    \label{fig:1or4}
\end{figure}
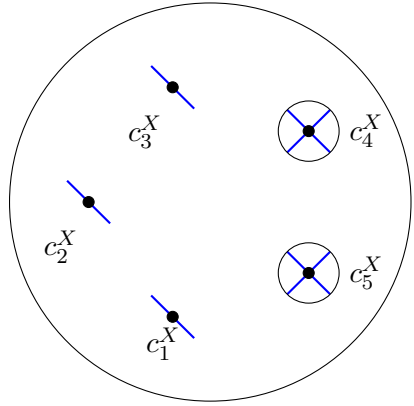
We again impose the condition that $f \colon c_i^X \mapsto c_i^Y$ for $i=1,2,3$, so that 
\begin{equation}
 N^{C,S}(1, \omega^2; 1,\omega^2 \mid 1) = \int_{\Map_1^{\rm red}(\CP^1,\CP^2) \times \CP^1} \ev_4^*\omega^2 \wedge \ev_5^* \omega^2.
\label{eq:NCS_1or4}
\end{equation}
A quasimap $\underline{f} \in \QMap_1(\CP^1,\CP^2)$ that maps $c_i^X$ to $c_i^Y$ for $i=1,2,3$, can be described by a point $(a:b:c) \in \CP^2$, cf.\ \eqref{eq: 1 + 3 = 4 qmap}
\[
\underline{f}(x^0:x^1) = (ax^0, b(x^1 - x^0), cx^1),
\]
which defines a proper map if at most one of the coefficients $a,b,c$ vanishes. 
The space of reduced maps is therefore $\Map_1^{\rm red}(\CP^1,\CP^2) = (\CC^*)^2$.

In the charts $x^1 \neq 0$ of the source $\CP^1$ and $c\neq 0$ of $\CP^2$, the map takes the form 
\begin{equation}
    f_{\alpha,\beta}(w) = (\alpha w, \beta (1-w)),
\end{equation}
where $\alpha = a/c$ and $\beta = b/c$.
One then finds that
\begin{equation}
    \ev(f_{\alpha,\beta},w_1)^*\omega^2 \wedge \ev(f_{\alpha,\beta},w_2)^*\omega^2 = \frac{4}{\pi^4} \frac{\lvert \alpha \rvert^2 \lvert \beta \rvert^2 \lvert w_1 - w_2 \rvert^2}{(F(w_1)F(w_2))^3} d\mu_{\CC^4},
\end{equation}
where 
\begin{align*}
    d\mu_{\CC^4} &= d^2w_1 \wedge d^2w_2 \wedge d^2\alpha \wedge d^2\beta
    \intertext{and}
    F(w) &= 1 + \lvert \alpha \rvert^2 \lvert w \rvert^2 + \lvert \beta \rvert^2 \lvert 1 - w \rvert^2.
\end{align*}
Recall that in order to omit quasimaps, we need to have $\alpha\cdot\beta \neq 0$.

It follows that 
\begin{equation}
    \begin{split}
        N^{C,S}(1, \omega^2; 1, \omega^2 \mid 1)= \int_{\CC^4} \ev(f_{\alpha,\beta},w_1)^*\omega^2 \wedge \ev(f_{\alpha,\beta},w_2)^*\omega^2 
    \end{split}
\end{equation}
which evaluates numerically\footnote{For instance, Monte Carlo method with $10^8$ sample points yields the value $0.995016$. 
}  
 to 
$ N^{C,S}\approx 1$
with $1$ being the $\KM$ number of the problem.

\subsubsection{1 or 16}
We now turn to the example of degree 2 maps from $\CP^1$ to $\CP^2$.
As in Section \ref{sec:1+3+12=16}, we fix 
\[
c_1^X = (0:1), \quad c_2^X = (1:1),\quad c_3^X = (1:0), \quad c_4^X = c_5^X = \CP^1, 
\]
and
\[
c_1^Y = (1:0:0), \quad c_2^Y = (0:1:0) \quad c_3^Y=(0:0:1), \quad c_4^Y = p, \quad c_5^Y = q,
\]
where $p$ and $q$ are again two points in $\CP^2$. By convention (\ref{alpha=omega^codim}), we again set $\alpha_4^Y=\alpha_5^Y=\omega^2$.
Consider the problem of maps which send $c_i^X$ to $c_i^Y$ as shown in Figure \ref{fig:1or16} below: 
\begin{figure}[H]
    \centering
    \begin{tikzpicture}[baseline={([yshift=-.5ex]current bounding box.center)},scale=.8]
    \draw (0,0) circle (3.3);

    \coordinate[point] (c3) at (108:2);
    \coordinate[point] (c2) at (180:2);
    \coordinate[point] (c1) at (252:2);

    \coordinate[point] (x) at (36:2);
    \coordinate[point] (y) at (-36:2);

    \begin{scope}[every node/.style={transform shape}]
        \draw (x) circle (.5);
        \draw (y) circle (.5);

        \draw pic at (x) {H={45}{.5}};
        \draw pic at (x) {H={-45}{.5}};

        \draw pic at (y) {H={45}{.5}};
        \draw pic at (y) {H={-45}{.5}};

        \draw pic at (c1) {H={45}{.5}};
        \draw pic at (c1) {H={-45}{.5}};

        \draw pic at (c2) {H={45}{.5}};
        \draw pic at (c2) {H={-45}{.5}};

        \draw pic at (c3) {H={45}{.5}};
        \draw pic at (c3) {H={-45}{.5}};
    \end{scope}

    \node[below] at ($(c1)+ (0,-.3)$) {$c_1^X$};
    \node[below] at ($(c2)+(0,-.3)$) {$c_2^X$};
    \node[below] at ($(c3)+(0,-.3)$) {$c_3^X$};

    \node[right] at ($(x)+(0:.5)$) {$c_4^X$};
    \node[right] at ($(y)+(0:.5)$) {$c_5^X$};
  \end{tikzpicture}
    \caption{Map counting data.}
    \label{fig:1or16}
\end{figure}
Any such map can be parametrized by a point $(a:b:c)\in \CP^2$, cf.\ \eqref{eq: P1 -> P2 d=2 qmap}
\[
\underline{f}(x^0:x^1) = (ax^0 (x^0 - x^1): bx^0 x^1: c x^1 (x^0 - x^1)).
\]
In the charts $x^0 \neq 0$, $b \neq 0$, 
the map is given by 
\begin{equation}
  f_{\alpha,\beta}(z) = (\alpha(1 - 1/z), \beta (1 - z)), \qquad \alpha \neq 0, 
\end{equation}
where $\alpha = a/b$ and $\beta = c/b$.
The space of reduced maps is therefore $\Map^{\rm red}_2(\CP^1,\CP^2) = \CC^* \times \CC$ so that
\[
N^{C,S}(1,\omega^2; 1,\omega^2 \mid 2) = \int_{\Map^{\rm red}_2(\CP^1,\CP^2) \times \CP^1 \times \CP^1} \ev_4^*\omega^2 \wedge \ev_5^*\omega^2.
\]
One then finds 
\begin{multline}
    \ev(f_{\alpha,\beta},z_1)^*\omega^2 \wedge \ev(f_{\alpha,\beta},z_2)^*\omega^2 = \\ 
    =\frac{4}{\pi^4} \frac{\lvert \alpha \rvert^2\lvert \beta \rvert^2\lvert z_1 \rvert^2\lvert z_2 \rvert^2\lvert 1-z_1 \rvert^2\lvert 1-z_2 \rvert^2\lvert z_1-z_2\rvert^2}{(F(z_1)F(z_2))^3} d\mu_{\CC^4}
\end{multline}
where 
\[
F(z) = \lvert z \rvert^2 + \lvert \alpha \rvert^2 \lvert 1-z \rvert^2 + \lvert \beta \rvert^2 \lvert z \rvert^2 \lvert 1-z \rvert^2.
\]
Now
\begin{equation}
    \begin{split}
    N^{C,S}(1,\omega^2; 1,\omega^2 \mid 2) = \int_{\Map^{\rm red}_2(\CP^1,\CP^2) \times \CP^1 \times \CP^1} \ev_4^*\omega^2 \wedge \ev_5^*\omega^2.
    \end{split}
\end{equation}
It evaluates numerically\footnote{ Monte Carlo method with $10^8$ sample points gives the value $1.00522$.} to 
$N^{C,S}\approx 1$
with $1$  being again the KM number.

\section*{Concluding remarks}
(1) There is another approach to the enumerative numbers studied in this text based on higher-dimensional generalization of Morse-Bott-Floer theory. In particular, holomorphic maps of $k$-dimensional toric manifolds may be considered as $k$-Morse theory on the space of $k$-loops.
This approach may lead to a higher-dimensional generalization of the WDVV equation where the important intermediate result is the relation between 2-Morse theory and the algebra of the infrared \cite{GMW, KKS, Sukhanov}. We are going to explore it in a subsequent paper.

(2) We gave arguments showing that theories with higher-dimensional source may be considered similarly to theories with 1-dimensional source. Therefore, the natural question is tropicalization of such theories, similar to tropicalization of Gromov-Witten invariants \cite{Mikhalkin,Mikhalkin2,LL,Gross, GM}.

(3) The examples showed in this paper can be seen as showing the phenomena with 0-defects (freckles) and $2_\RR$-dimensional defects (scars) in 4-dimensional holomorphic gauged linear model.
Quantum field theory treatment of this problem will appear elsewhere.

\end{document}